\newcommand\bigcheck[1]{#1 \raise1ex\hbox{$\hspace{-1ex}{}^\vee$}}
\newcommand\sucheck[1]{#1 \raise0.5ex\hbox{$\hspace{-1ex}{}^\vee$}}
\newtheorem{theorem}{Theorem}[section]
\newtheorem{lemma}[theorem]{Lemma}
\newtheorem{proposition}[theorem]{Proposition}
\newtheorem*{lemma*}{Lemma}
\theoremstyle{definition}
\newtheorem{definition}[theorem]{Definition}
\theoremstyle{remark}
\newtheorem{remark}[theorem]{Remark}
\newcommand{\mc}[1]{{\mathcal #1}}
\newcommand{\mf}[1]{{\mathfrak #1}}
\newcommand{\mb}[1]{{\mathbb #1}}
\newcommand\tint{{\textstyle\int}}
\newcommand{\id}{{1 \mskip -5mu {\rm I}}}
\renewcommand{\tilde}{\widetilde}
\newcommand{\ad}{\mathop{\rm ad }}
\newcommand{\Mat}{\mathop{\rm Mat }}
\renewcommand{\ker}{\mathop{\rm Ker }}
\newcommand{\im}{\mathop{\rm Im }}
\newcommand{\Span}{\mathop{\rm Span }}
\newcommand{\ord}{\mathop{\rm ord }}
\newcommand{\dord}{\mathop{\rm dord }}
\newcommand{\ass}[1]{\stackrel{#1}{\longleftrightarrow}}
\definecolor{light}{gray}{.9}
\begin{document}


\title{Non-local Poisson structures and applications to the theory of integrable systems II}

\author{
Alberto De Sole
\thanks{Dept. of Math., University of Rome ``La Sapienza'',
Rome, Italy,
and IHES, France,
desole@mat.uniroma1.it 
},~~
Victor G. Kac
\thanks{Department of Math., M.I.T.,
Cambridge, MA 02139, USA.
and IHES, France,
kac@math.mit.edu
}~~
}

\maketitle


\begin{abstract}
\noindent 
We develop further the Lenard-Magri scheme of integrability for a pair of compatible 
non-local Poisson structures, which we discussed in Part I.
We apply this scheme to several such pairs,
proving thereby integrability of various evolution equations, as well as hyperbolic equations.
Some of these equations may be new.
\end{abstract}


\section{Introduction}
\label{sec:intro}

In Part I of this series of papers we laid grounds of a rigorous theory of non-local Poisson structures.
In particular we have developed the Lenard-Magri scheme of integrability
for a pair of compatible non-local Poisson structures.

In Section \ref{sec:2} of the present Part II
we made several improvements of this scheme,
which facilitate its application in concrete examples.
The concrete examples are studied in Sections \ref{sec:3}, \ref{sec:4} and \ref{sec:5}.
In Section \ref{sec:3} we consider three compatible scalar non-local Poisson structures:
$$
L_1=\partial\,,\,\, 
L_2=\partial^{-1}\,,\,\, 
L_3=u'\partial^{-1}\circ u'
\,(\text{ Sokolov } \cite{Sok84})\,, 
$$
and take for a compatible pair $(H,K)$ arbitrary linear combinations of these three structures:
$H=\sum_ia_iL_i$, $K=\sum_ib_iL_i$.
We study in detail for which values of the coefficients $a_i$ and $b_i$
the corresponding Lenard-Magri scheme is integrable.
This means that there exists an infinite sequence of Hamiltonian functionals $\tint h_n,\,n\in\mb Z_+$,
and Hamiltonian vector fields $P_n,\,n\in\mb Z_+$,
such that we have
\begin{equation}\label{1.1}
\tint 0\ass{H}P_0\ass{K}\tint h_0\ass{H} P_1\ass{K}\tint h_1\ass{H}\dots\,,
\end{equation}
and the spans of the $\tint h_n$'s and of the $P_n$'s are infinite dimensional
(the association relation $\tint f\ass{H}P$ is recalled in Definition \ref{20120124:def}).
This produces integrable equations $u_{t_n}=P_n$.

Furthermore we study when the infinite sequence \eqref{1.1} can be extended to the left.
The most interesting cases are those when the sequence is ``blocked''
at some step $P_{-n},\,n>0$, to the left.
This leads to some interesting integrable hyperbolic equations.
As a result, we get two such equations
\begin{equation}\label{20121020:eq8-intro}
u_{tx} =
e^{u}-\alpha e^{-u}
+\epsilon(e^{u}-\alpha e^{-u})_{xx}
\,,
\end{equation}
where $\alpha$ and $\epsilon$ are $0$ or $1$, and
\begin{equation}\label{20121020:eq10-intro}
u_{tx} =
u+(u^3)_{xx}
\,.
\end{equation}
Of course, in the case when $\epsilon=0$, equation \eqref{20121020:eq8-intro} is the Liouville
(respectively sinh-Gordon) equation if $\alpha=0$ (resp. $\alpha=1$).
But for $\epsilon=1$ equation \eqref{20121020:eq8-intro} seems to be new.
Equation \eqref{20121020:eq10-intro}, first discovered in \cite{SW02},
is called the short pulse equation.
Its integrability was proved in \cite{SS04}

In Section \ref{sec:4}
we study, in a similar way, two liner combinations of the compatible non-local Poisson structures
$$
L_1=
u'\partial^{-1}\circ u' 
\,\,,\,\,\,\,
L_2=
\partial^{-1}\circ u'\partial^{-1}\circ u'\partial^{-1}
\,\,(\text{ Dorfman } \cite{Dor93})
\,.
$$
As a result we (re)prove integrability of the Krichever-Novikov equation
$$
u_t=u_{xxx}-\frac32\frac{u_{xx}^2}{u_x}\,,
$$
and also, moving to the left, establish integrability of the following equation
$$
\Big(\frac{u_{tx}}{u_x}\Big)_x
=
\frac1{2u_x}+\gamma u_x
\,,
$$
where $\gamma$ is a constant.

Finally, in Section \ref{sec:5}
we study, in a similar way, three two-components non-local Poisson structures
that are used in the study of the non-linear Schroedinger equation (NLS), \cite{TF86,Dor93,BDSK09,DSK12}.
As a result, we establish integrability of the following generalization of NLS:
$$
i\psi_t=\psi_{xx}+\alpha\psi|\psi|^2+i\beta(\psi|\psi|^2)_x\,,
$$
where $\alpha$ and $\beta$ are constants
(NLS is obtained for $\beta=0$).

We wish to thank Alexander Mikhailov and Vladimir Sokolov for useful correspondence,
and IHP and IHES for their hospitality.


\section{The Lenard-Magri scheme of integrability}
\label{sec:2}

\subsection{Non-local Poisson structures on an algebra of differential functions}
\label{sec:2.1}


Let $R_\ell=\mb F [u_i^{(n)}\, |\, i \in I,n \in \mb Z_+]$ be the algebra of differential polynomials
in the $\ell$ variables $u_i,\,i\in I=\{1,\dots,\ell\}$,
with the derivation $\partial$ defined by $\partial (u_i^{(n)}) = u^{(n+1)}_i$.
The partial derivatives $\frac{\partial}{\partial u_i^{(n)}}$
are commuting derivations of $R_\ell$, and they satisfy the following 
commutation relations with $\partial$:
\begin{equation}\label{eq:0.4}
\left[  \frac{\partial}{\partial u_i^{(n)}}, \partial \right] = \frac{\partial}{\partial u_i^{(n-1)}}
\,\,\,\,
\text{ (the RHS is $0$ if $n=0$) }\,.
\end{equation}
Recall from \cite{BDSK09} that an \emph{algebra of differential functions} 
is a differential algebra extension $\mc V$ of $R_\ell$,
endowed with commuting derivations
$$
\frac{\partial}{\partial u_i^{(n)}}:\,\mc V\to\mc V
\,\,,\,\,\,\,
i \in I ,\,n \in \mb Z_+\,,
$$
extending the usual partial derivatives on $R_\ell$,
such that only a finite number of
$\frac{\partial f}{\partial u_i^{(n)}}$ are non-zero for each $f\in \mc V$, 
and such that the commutation rules \eqref{eq:0.4} hold on $\mc V$.

A \emph{field of differential functions} is a field
which is an algebra of differential functions.
In particular, it contains the field of rational functions in the variables $u_i^{(n)}$,
which is itself a field of differential functions.
Given a field of differential functions $\mc V$,
it is easy to see that it can be extended
by adding solutions of algebraic equations over $\mc V$,
and also by functions of the form $F(\varphi_1,\dots,\varphi_n)$,
where $F$ is an infinitely differentiable function in $n$ variables
and $\varphi_1,\dots,\varphi_n$ lie in $\mc V$.
(Note, though, that in general we cannot add to $\mc V$ solutions of linear differential equations.
For example, a solution of the equation $f'=fu'$ is $f=e^u$, which can be added,
while a non-zero solution of the equation $f'=fu$ can never be added,
due to simple differential order considerations).
We will use these facts in the examples further on.

Note that, if the algebra of differential functions is a domain,
then its field of fractions is a field of differential functions,
with the maps $\frac{\partial}{\partial u_i^{(n)}}$ defined in the obvious way.

We will assume throughout the paper that $\mc V$ is a field of differential functions
in the variables $u_1,\dots,u_\ell$.
We denote by $\mc C=\big\{c\in\mc V\,\big|\,\partial c=0\big\}\subset\mc V$ the subset 
of \emph{constants}, and by
$$
\mc F=\Big\{f\in\mc V\,\Big|\,\frac{\partial f}{\partial u_i^{(n)}}=0
\,\,\text{ for all } i\in I,n\in\mb Z_+\Big\}\subset\mc V
$$
the subset of \emph{quasiconstants}. 
It is easy to see that $\mc C\subset\mc F\subset\mc V$
is a tower of field extensions.

Given $f\in\mc V$ which is not a quasiconstant, we say that it has \emph{differential order} $N$,
and we write $\dord(f)=N$,
if $\frac{\partial f}{\partial u_i^{(N)}}\neq0$ for some $i\in I$,
and $\frac{\partial f}{\partial u_j^{(n)}}=0$ for every $j\in I$ and $n>N$.
By definition, the differential order of a non-zero quasiconstant is $-\infty$.
We let $\mc V_N$ be the subfield of elements of differential order at most $N$.
This gives an increasing sequence of subfields 
\begin{equation}\label{difford}
\mc C\subset\mc F\subset\mc V_0\subset\mc V_1\subset\dots\subset\mc V\,,
\end{equation}
such that $\partial\mc V_N\subset\mc V_{N+1}$.
It is easy to show, using \eqref{eq:0.4}, that
\begin{equation}\label{20120907:eq1}
\partial\mc V\cap\mc V_N=\partial\mc V_{N-1}
\,\text{ for } N\geq 1\,,
\,\,\text{ and }\,\,
\partial\mc V\cap\mc V_0=\partial\mc F
\,.
\end{equation}
Given an arbitrary $k\times \ell$-matrix $A$ with entries in $\mc V$,
we define its differential order, denoted by $\dord(A)$,
as the maximal differential order of all its entries.

Given a matrix differential operator $D=\sum_{i=0}^n A_i\partial^i\in\Mat_{k\times\ell}\mc V[\partial]$,
we define its \emph{differential order} as
\begin{equation}\label{20120910:eq1}
\dord(D)=\max\{\dord(A_1),\dots,\dord(A_n)\}\,,
\end{equation}
which should not be confused with its \emph{order},
defined as 
\begin{equation}\label{20120910:eq2}
|D|=n \,\,\text{ if }\, A_n\neq0\,.
\end{equation}
(Note that the notion of order carries over to matrix pseudofferential operators,
while the differential order is not defined in general).
\begin{lemma}\label{20120910:lem1}
Let $D\in\Mat_{k\times\ell}\mc V[\partial]$
be a matrix  differential operator over $\mc V$
and let $F\in\mc V^\ell$.
Then:
\begin{enumerate}[(a)]
\item
$\dord(DF)\leq\max\{\dord(D),\dord(F)+|D|\}$.
\item
If $D$ has non-degenerate leading coefficient
and it satisfies $\dord(F)+|D|>\dord(D)$, then $\dord(DF)=\dord(F)+|D|$.
\item
If $D$ has non-degenerate leading coefficient
and it satisfies $\dord(DF)>\dord(D)$, then $\dord(DF)=\dord(F)+|D|$.
\end{enumerate}
\end{lemma}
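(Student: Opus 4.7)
For part (a), the strategy is to expand $DF = \sum_{i=0}^n A_i F^{(i)}$ entry-wise and bound each summand $(A_i)_{jk}\cdot(F_k)^{(i)}$. Since the partial derivatives $\partial/\partial u_r^{(s)}$ act as derivations on $\mathcal V$, the differential order is sub-additive under both sums and products, so I only need the elementary fact $\dord(\partial g) = \dord(g) + 1$ for non-quasiconstant $g$, which follows by iterating \eqref{eq:0.4}. Combining these bounds with $\dord((A_i)_{jk}) \leq \dord(D)$ and $\dord((F_k)^{(i)}) \leq \dord(F) + |D|$ and taking the max over $i,j,k$ yields (a).

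For part (b), set $N = \dord(F)$ and split $DF = A_n F^{(n)} + R$ with $R = \sum_{i<n} A_i F^{(i)}$. Applying (a) to each term of $R$ together with the hypothesis $N + n > \dord(D)$ gives $\dord(R) \leq N + n - 1$, so it remains to show $\dord(A_n F^{(n)}) = N + n$. The key computation is to apply $\partial/\partial u_i^{(N+n)}$ to the leading term via the iterated commutation formula
$$
\frac{\partial}{\partial u_i^{(m)}}\partial^n = \sum_{j=0}^{n}\binom{n}{j}\partial^{n-j}\frac{\partial}{\partial u_i^{(m-j)}},
$$
proved by induction on $n$ starting from \eqref{eq:0.4} (with the convention that $\partial/\partial u_i^{(s)} = 0$ for $s < 0$). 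With $m = N + n$, every term with $j < n$ is killed by the condition $\dord(F_k) \leq N$, yielding $\partial (F^{(n)})/\partial u_i^{(N+n)} = \partial F/\partial u_i^{(N)}$. Since the entries of $A_n$ have differential order below $N + n$, they too are annihilated by $\partial/\partial u_i^{(N+n)}$, and the Leibniz rule collapses the computation to $\partial (DF)/\partial u_i^{(N+n)} = A_n\cdot \partial F/\partial u_i^{(N)}$. Choosing an index $i$ such that $\partial F/\partial u_i^{(N)} \neq 0$ (such $i$ exists by the definition $\dord(F) = N$) and invoking non-degeneracy of $A_n$ makes the right-hand side non-zero, so $\dord(DF) \geq N + n$; the matching upper bound comes from (a).

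Part (c) is a direct consequence of (a) and (b): the hypothesis $\dord(DF) > \dord(D)$ together with (a) forces $\dord(F) + |D| \geq \dord(DF) > \dord(D)$, so (b) applies and gives the desired equality. The main technical obstacle is the leading-term calculation in (b); once the iterated commutation formula is in hand and one has tracked which partial derivatives annihilate the entries of $A_n$ versus those of $F$, the non-degeneracy hypothesis rules out any cancellation in the top-order part and the argument closes quickly.
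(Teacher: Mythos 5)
Your proof is correct and follows essentially the same route as the paper's: part (a) by bounding the differential order of each summand $A_sF^{(s)}$, part (b) by applying $\partial/\partial u_i^{(\dord(F)+|D|)}$ so that only the leading term survives and then invoking non-degeneracy of $A_n$ on the non-zero vector $\partial F/\partial u_i^{(\dord F)}$, and part (c) as a formal consequence of (a) and (b). The only difference is cosmetic: you spell out the iterated commutation formula and split off the remainder $R$ explicitly, where the paper applies the partial derivative to the whole sum at once.
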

\begin{proof}
Let $D=\sum_{s=0}^n A_s\partial^s$.
Clearly, for $f\in\mc V$ and $s\in\mb Z_+$, we have $\dord(f^{(i)})=\dord(f)+i$.
Hence, If  $h>\max\{\dord(D),\dord(F)+|D|\}$, we have
$$
\frac{\partial}{\partial u^{(h)}}(DF)_i=
\sum_{j=1}^\ell\sum_{s=0}^n \frac{\partial}{\partial u^{(h)}} (A_s)_{ij}F^{(s)}_j=0\,,
$$
for every $i=1,\dots,k$, proving part (a).
Furthermore, 
if $|D|=n$ and $\dord(F)+n>\dord|D|$, we can use \eqref{eq:0.4} to get
$$
\begin{array}{l}
\displaystyle{
\frac{\partial}{\partial u^{(\dord(F)+n)}}(DF)_i
=\sum_{j=1}^\ell\sum_{s=0}^n \frac{\partial}{\partial u^{(\dord(F)+n)}} (A_s)_{ij} F^{(s)}_j
} \\
\displaystyle{
=\sum_{j=1}^\ell (A_n)_{ij} \frac{\partial F^{(n)}_j}{\partial u^{(\dord(F)+n)}}
=\sum_{j=1}^\ell (A_n)_{ij} \frac{\partial F_j}{\partial u^{(\dord(F))}}
\,.
}
\end{array}
$$
Since, by assumption,
the leading coefficient $A_n\in\Mat_{\ell\times\ell}\mc V$ of $D$ is non-degenerate,
the RHS above is non-zero for some $i$.
Hence, 
$\dord(DF)=\dord(F)+n$, proving part (b).
Part (c) follows from parts (a) and (b).
\end{proof}

For $f\in\mc V$, as usual we denote by $\tint f$
the image of $f$ in the quotient space $\mc V/\partial\mc V$.
Recall that, by \eqref{eq:0.4}, we have a well-defined variational derivative
$\frac{\delta}{\delta u}:\,\mc V/\partial\mc V\to\mc V^{\ell}$,
given by
$$
\frac{\delta\tint f}{\delta u_i}=\sum_{n\in\mb Z_+}(-\partial)^n\frac{\partial f}{\partial u_i^{(n)}},\,i\in I\,.
$$
\begin{remark}\label{20120909:rem1}
Recall that $\ker\big(\frac\delta{\delta u}\big)\supset\mc F+\partial\mc V$,
and, by \cite[Prop.1.5]{BDSK09}, equality holds if $\mc V$ is normal.
\end{remark}


Denote by $\mc V((\partial^{-1}))$ the skewfield of pseudodifferential operators over $\mc V$.
For $H(\partial)=\sum_{n=-\infty}^Nh_n\partial^n\in\mc V((\partial^{-1}))$,
we denote by $H(\lambda)$ its symbol:
$H(\lambda)=\sum_{n=-\infty}^Nh_n\lambda^n\in\mc V((\lambda^{-1}))$
(where $\lambda$ is a variable commuting with $\mc V$).
A closed formula for the associative product in $\mc V((\partial^{-1}))$
in terms of the corresponding symbols is the following:
\begin{equation}\label{symbols}
(A\circ B)(\lambda)=A(\lambda+\partial)B(\lambda)\,.
\end{equation}
Here and further we always expand an expression as $(\lambda+\partial)^n,\,n\in\mb Z$,
in non-negative powers of $\partial$,
so that,
for $A(\partial)=\sum_{m=-\infty}^Na_m\partial^m$ 
and $B(\partial)=\sum_{n=-\infty}^Nb_n\partial^n$,
the RHS of \eqref{symbols} means 
$\sum_{m,n=-\infty}^N\sum_{k=0}^\infty \binom{m}{k} a_mb_n^{(k)} \lambda^{m+n-k}$.

An invertible element of $\Mat_{\ell\times\ell}\mc V((\partial^{-1}))$ is called
non-degenerate.
In particular, a matrix differential operator $A\in\Mat_{\ell\times\ell}\mc V[\partial]$ 
is called non-degenerate if it is invertible in $\Mat_{\ell\times\ell}\mc V((\partial^{-1}))$.


Denote by $\mc V(\partial)$ the minimal skew-subfield 
of the skew-field of pseudodifferential operators $\mc V((\partial^{-1}))$
containing the subalgebra $\mc V[\partial]$ of differential operators over $\mc V$.
Elements of $\mc V(\partial)$ are called \emph{rational} pseudodifferential operators.
Elements of $\Mat_{\ell\times\ell}\mc V(\partial)$ are called 
\emph{rational matrix pseudodifferential operators}.

We often manipulate with rational pseudodifferential operators without expanding them
as Laurent series in $\partial^{-1}$.
Then their symbols are computed using formula \eqref{symbols}.

\begin{theorem}[\cite{CDSK12b}]\label{20120720:thm1}
Let $H(\partial)\in\Mat_{\ell\times\ell}\mc V(\partial)$. Then
\begin{enumerate}[(a)]
\item
$H$ has a fractional decomposition $H=AB^{-1}$,
with $A,B\in\Mat_{\ell\times\ell}\mc V[\partial]$,
and $B$ non-degenerate,
which is minimal in the sense that $\ker(A)$ and $\ker(B)$
have zero intersection in any differential field extension of $\mc V$.
\item
Any other fractional decomposition $H=\tilde A\tilde B^{-1}$ of $H$ can be obtained
by multiplying $A$ and $B$ on the right by the same element 
of  $\Mat_{\ell\times\ell}\mc V[\partial]$.
\end{enumerate}
\end{theorem}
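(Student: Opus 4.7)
The plan is to prove (a) in three steps: first existence of some fractional decomposition, then reduction to a minimal one, and finally verification that minimality is equivalent to the stated kernel condition. For the first step I would establish that $\mc V[\partial]$ is a right Ore domain (any two non-zero scalar differential operators admit a non-zero common right multiple, seen by a standard degree-reduction argument in $|\cdot|$), so its Ore skew-field of right fractions coincides with $\mc V(\partial)$. Extending to the matrix case, $\Mat_{\ell\times\ell}\mc V[\partial]$ is right Ore with respect to non-degenerate elements, and its ring of right fractions is $\Mat_{\ell\times\ell}\mc V(\partial)$; this yields $H=AB^{-1}$ with $A,B\in\Mat_{\ell\times\ell}\mc V[\partial]$ and $B$ non-degenerate.

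Next I would reduce to a minimal decomposition. Fix once and for all a ``large enough'' differential field extension $\mc U$ of $\mc V$ in which every non-degenerate matrix differential operator of order $n$ has kernel of maximal dimension $\ell n$ over $\mc C$. If $\ker(A)\cap\ker(B)\neq 0$ in $\mc U$, pick a non-zero $v$ in the intersection. The key sub-lemma to prove (or import from \cite{CDSK12b}) is that there exists a non-degenerate $D\in\Mat_{\ell\times\ell}\mc V[\partial]$ with $|D|\geq 1$, annihilating $v$, such that $A=A'D$ and $B=B'D$; this is obtained by a matrix Wronskian-type construction that produces the minimal common right factor whose kernel contains $v$. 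Replacing $(A,B)$ by $(A',B')$ strictly decreases $|B|$, so after finitely many iterations one arrives at a decomposition with $\ker(A)\cap\ker(B)=0$ in $\mc U$, hence in any differential field extension.

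For part (b), given another decomposition $H=\tilde A\tilde B^{-1}$, I would use the right Ore property in $\Mat_{\ell\times\ell}\mc V[\partial]$ to find $E,F$ with $F$ non-degenerate and $\tilde B F=BE$; multiplying the identity $AB^{-1}=\tilde A\tilde B^{-1}$ on the right by $\tilde BF=BE$ forces $\tilde A F=AE$. For any $v\in\ker(F)$ in the universal extension, $BEv=\tilde BFv=0$ and $AEv=\tilde AFv=0$, so $Ev\in\ker(A)\cap\ker(B)=0$ by minimality of $(A,B)$. Hence $\ker(F)\subset\ker(E)$. A second key lemma, characterizing when a rational pseudodifferential operator is actually polynomial, asserts that this kernel containment is equivalent to $EF^{-1}$ being a matrix differential operator $D\in\Mat_{\ell\times\ell}\mc V[\partial]$. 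Then $\tilde B=BD$ and $\tilde A=AD$ as required.

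The main obstacle lies in the two kernel lemmas underpinning both halves: the extraction of a common right factor from a common kernel vector (used in the reduction to a minimal decomposition), and the characterization $\ker(F)\subset\ker(E)\iff EF^{-1}\in\Mat_{\ell\times\ell}\mc V[\partial]$ (used in uniqueness). Both require a careful construction of a sufficiently large differential field extension together with a matrix generalization of the classical Wronskian argument producing the minimal monic annihilator of a prescribed solution space. These are the technical heart of \cite{CDSK12b} and are the ingredients one would need to import or reprove in order to complete the argument sketched above.
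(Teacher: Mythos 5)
The paper offers no proof of this statement: it is imported wholesale from \cite{CDSK12b}, so there is nothing internal to compare against. Your sketch does follow the route of that reference --- Ore localization of $\Mat_{\ell\times\ell}\mc V[\partial]$ at its non-degenerate elements for existence, extraction of a common right factor carrying a common kernel vector for minimality, and the equivalence between $\ker(F)\subset\ker(E)$ and $EF^{-1}\in\Mat_{\ell\times\ell}\mc V[\partial]$ for uniqueness --- and you are explicit that the two kernel lemmas are the technical core you would have to import, which is no less than what the present paper itself does.

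There is, however, one concrete flaw in the reduction step of part (a). From $B=B'D$ with $D$ non-degenerate and $|D|\geq 1$ you conclude that $|B'|<|B|$ and use the order $|B|$ as the well-founded measure guaranteeing termination. For $\ell\geq 2$ the order of matrix differential operators is not additive under composition, and removing a nontrivial right factor need not lower it: for $B'=\mathrm{diag}(\partial,1)$ and $D=\mathrm{diag}(1,\partial)$ (a perfectly admissible non-degenerate factor with nonzero kernel) one has $|B'D|=|B'|=1$, so the order does not strictly decrease and your induction does not terminate as stated. The correct invariant --- and the one on which \cite{CDSK12b} actually bases the definition of minimality --- is the degree of the Dieudonn\`e determinant of $B$: it is a non-negative integer for non-degenerate matrix differential operators, it is additive under products, and it is strictly positive for any non-degenerate $D$ with nonzero kernel in the linear closure, so it strictly drops at each extraction. (The companion paper \cite{CDSK12} in the bibliography develops exactly this tool.) With that substitution, and granting the two deferred lemmas, the rest of your argument, including the treatment of part (b), is sound.
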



Let $H\in\Mat_{\ell\times\ell}\mc V(\partial)$ 
be an $\ell\times\ell$ rational matrix pseudodifferential operator.
It is called a \emph{Poisson structure} on $\mc V$
if it is skewadjoint and it satisfies the Jaocobi identity.
In order to explain what the latter condition means, we introduce
the non-local $\lambda$-bracket 
$\{\cdot\,_\lambda\,\cdot\}:\,\mc V\otimes\mc V\to\mc V((\lambda^{-1}))$
corresponding to $H$,
given by the following \emph{Master Formula} (cf. \cite{DSK06})
\begin{equation}\label{20110922:eq1}
\{f_\lambda g\}
=
\sum_{\substack{i,j\in I \\ m,n\in\mb Z_+}} 
\frac{\partial g}{\partial u_j^{(n)}}
(\lambda+\partial)^n
H_{ji}(\lambda+\partial)
(-\lambda-\partial)^m
\frac{\partial f}{\partial u_i^{(m)}}
\,.
\end{equation}
In particular
$\{{u_i}_\lambda{u_j}\}_H=H_{ji}(\lambda)$,
the symbol of the operator $H$.
%
%
This $\lambda$-bracket satisfies the following properties:
\begin{enumerate}[(i)]
\item
sesquilinearty:
$\{\partial f_\lambda g\}=-\lambda\{f_\lambda g\}$,
$\{f_\lambda\partial g\}=(\lambda+\partial)\{f_\lambda g\}$;
\item
left Leibniz rule:
$\{f_\lambda gh\}=\{f_\lambda g\}h+\{f_\lambda h\}g$;
\item
right Leibniz rule:
$\{fg_\lambda h\}={\{f_{\lambda+\partial} h\}}_\to g+{\{g_{\lambda+\partial} h\}}_\to f$,
where the arrow means that $\partial$ should be moved to the right.
\end{enumerate}
In fact, the above properties, rather than the Master Formula \eqref{20110922:eq1}, 
are used in most computations of the $\lambda$-brackets.

For a matrix pseudodifferential operator 
$H(\partial)\in\Mat_{\ell\times\ell}\mc V((\partial^{-1}))$ and elements  $f,g,h\in\mc V$, we have
$\{f_\lambda\{g_\mu h\}\}\in\mc V((\lambda^{-1}))((\mu^{-1}))$,
$\{g_\mu\{f_\lambda h\}\}\in\mc V((\mu^{-1}))((\lambda^{-1}))$,
and $\{\{f_\lambda g\}_{\lambda+\mu} h\}\in\mc V(((\lambda+\mu)^{-1}))((\lambda^{-1}))$.
On the other hand, if $H(\partial)\in\Mat_{\ell\times\ell}\mc V(\partial)$ 
is a rational matrix pseudodifferential operator,
all these three elements lie in the same subspace 
$\mc V_{\lambda,\mu}=\mc V[[\lambda^{-1},\mu^{-1},(\lambda+\mu)^{-1}]][\lambda,\mu]$, 
\cite[Cor.3.11]{DSK12}.
Hence, it makes sense to write the following \emph{Jacobi identity}:
$$
\{f_\lambda\{g_\mu h\}\}-\{g_\mu\{f_\lambda h\}\}=\{\{f_\lambda g\}_{\lambda+\mu} h\}\,.
$$
Note that the Jacobi identity holds for every $f,g,h\in\mc V$
if and only if it holds for every triple of generators $u_i,u_j,u_k$, \cite{DSK12}.
\begin{definition}\label{def:hamstr}
A \emph{non-local Poisson structure} on $\mc V$
is a rational matrix pseudodifferential operator $H(\partial)\in\Mat_{\ell\times\ell}\mc V(\partial)$,
which is skewadjoint,
and satisfies the Jacobi identity for any triple of generators:
$$
\{{u_i}_\lambda\{{u_j}_\mu {u_k}\}\}
-\{{u_j}_\mu\{{u_i}_\lambda {u_k}\}\}
=\{\{{u_i}_\lambda {u_j}\}_{\lambda+\mu} {u_k}\}
\,\,,\,\,\,\,i,j,k\in I\,.
$$
Two non-local Poisson structures $H$ and $K$ on $\mc V$
are \emph{compatible} if any their linear combination $\alpha H+\beta K$,
for $\alpha,\beta\in\mc C$ is a non-local Poisson structure.
\end{definition}
Recall that, if $H(\partial)\in\Mat_{\ell\times\ell}\mc V(\partial)$
is a non-local Poisson structure on $\mc V$,
then the corresponding non-local $\lambda$-bracket,
given by the Master Formula \eqref{20110922:eq1},
defines a structure of a \emph{non-local Poisson vertex algebra} on $\mc V$, \cite{DSK12}.

\subsection{Hamiltonian equations and integrability}
\label{sec:2.2}

Let $H\in\Mat_{\ell\times\ell}\mc V(\partial)$ be a non-local Poisson structure 
on the field of differential functions $\mc V$.
\begin{definition}\label{20120124:def}
We say that elements $\tint f\in\mc V/\partial\mc V$ and $P\in\mc V^\ell$
are $H$-\emph{associated},
and we use the notation $\tint f\ass{H} P$ (or $P\ass{H}\tint f$),
if there exist a fractional decomposition $H=AB^{-1}$
and an element $F\in\mc V^\ell$ such that
$$
\frac{\delta f}{\delta u}=B(\partial)F
\,\,,\,\,\,\,P=A(\partial)F\,.
$$
In this case,
we say that $\tint f$ is a \emph{Hamiltonian functional} for $H$,
and $P$ is a \emph{Hamiltonian vector field} for $H$.
\end{definition}
\begin{remark}\label{20120201:rem3}
Since, by Theorem \ref{20120720:thm1},
any fractional decomposition for $H$ is obtained from a minimal one
by multiplying both $A$ and $B$ on the right by the same matrix differential operator,
it is clear that in the notion of $H$-association it suffices to use a fixed minimal fractional
decomposition $H=A_0B_0^{-1}$.
In particular, 
the space of all Hamiltonian functionals for $H$ is
\begin{equation}\label{eq:hamfun}
\mc F(H)=
\Big(\frac{\delta}{\delta u}\Big)^{-1}(\im B_0)\,\subset\mc V/\partial\mc V\,,
\end{equation}
and the space of all Hamiltonian vector fields for $H$ is
\begin{equation}\label{eq:hamvec}
\mc H(H)=
A_0\Big(B_0^{-1}\Big(\im\frac{\delta}{\delta u}\Big)\Big)\,\subset\mc V^\ell\,.
\end{equation}
\end{remark}
%
%
\begin{lemma}\label{20120907:lem1}
\begin{enumerate}[(a)]
\item
If $\tint f\ass{H}P$ and $\tint g\ass{H}Q$, then $\tint(a f+b g)\ass{H}(aP+bQ)$ for every $a,b\in\mc C$.
In particular, $\mc F(H)$ and $\mc H(H)$ are vector spaces over $\mc C$.
\item
If $\tint f\ass{H}P$, then
$\Big\{\tint g\in\mc F(H)\,\Big|\,\tint g\ass{H}P\Big\}=\tint f+\mc F_0(H)$,
where 
\begin{equation}\label{20120908:eq1}
\mc F_0(H)=\Big\{\tint g\in\mc F(H)\,\Big|\,\tint g\ass{H}0\Big\}\,.
\end{equation}
\item
If $\tint f\ass{H}P$, then
$\Big\{Q\in\mc H(H)\,\Big|\,\tint f\ass{H}Q\Big\}=P+\mc H_0(H)$,
where 
\begin{equation}\label{20120908:eq2}
\mc H_0(H)=\Big\{Q\in\mc H(H)\,\Big|\,0\ass{H}Q\Big\}\,.
\end{equation}
\end{enumerate}
\end{lemma}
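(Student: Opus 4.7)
The whole lemma reduces to using the minimal fractional decomposition $H=A_0B_0^{-1}$, as allowed by Remark \ref{20120201:rem3}, and exploiting $\mc C$-linearity of differential operators. I would begin by rewriting the association relation intrinsically: $\tint f\ass{H} P$ holds iff there exists $F\in\mc V^\ell$ with $\tfrac{\delta f}{\delta u}=B_0 F$ and $P=A_0 F$. Since $a,b\in\mc C$ satisfy $\partial a=\partial b=0$, every differential operator over $\mc V$ (in particular $A_0$ and $B_0$) is $\mc C$-linear. I would use this throughout without further comment.

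For part (a), pick witnesses $F,G\in\mc V^\ell$ with $\tfrac{\delta f}{\delta u}=B_0F$, $P=A_0F$ and $\tfrac{\delta g}{\delta u}=B_0 G$, $Q=A_0G$. Because the variational derivative is $\mc C$-linear, and $aF+bG$ is a well-defined element of $\mc V^\ell$, one computes
\[
\frac{\delta(af+bg)}{\delta u}=a\,\frac{\delta f}{\delta u}+b\,\frac{\delta g}{\delta u}=B_0(aF+bG),
\qquad
aP+bQ=A_0(aF+bG),
\]
which is exactly the witness showing $\tint(af+bg)\ass{H}(aP+bQ)$. Taking $\tint g=0, Q=0$ (resp. $\tint f=0,P=0$) as a trivial baseline then shows that $\mc F(H)$ and $\mc H(H)$ are stable under $\mc C$-linear combinations; they are also nonempty (they contain $0$), so they are $\mc C$-subspaces of $\mc V/\partial\mc V$ and $\mc V^\ell$ respectively.

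For parts (b) and (c), the inclusion $\supset$ is immediate from (a): if $\tint f\ass{H}P$ and $\tint h\ass{H}0$, then $\tint(f+h)\ass{H}(P+0)=P$, and dually if $\tint f\ass{H}P$ and $0\ass{H}Q'$, then $\tint f\ass{H}(P+Q')$. For the inclusion $\subset$ in (b), suppose $\tint g\in\mc F(H)$ with $\tint g\ass{H}P$. Choose witnesses $F$ for $(\tint f,P)$ and $G$ for $(\tint g,P)$. Then $G-F$ satisfies $B_0(G-F)=\tfrac{\delta(g-f)}{\delta u}$ and $A_0(G-F)=P-P=0$, hence $\tint(g-f)\ass{H}0$, so $\tint(g-f)\in\mc F_0(H)$ and $\tint g\in\tint f+\mc F_0(H)$. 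Part (c) is the strict dual: given $Q\in\mc H(H)$ with $\tint f\ass{H}Q$, choose witnesses $F$ for $(\tint f,P)$ and $F'$ for $(\tint f,Q)$; then $F'-F$ demonstrates $0\ass{H}(Q-P)$, so $Q-P\in\mc H_0(H)$.

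I do not expect a real obstacle here; the only point worth a second look is the implicit claim that the witnesses $F,G$ for the two associations in part (b) can be combined freely. This is legitimate precisely because we have fixed one minimal decomposition $H=A_0B_0^{-1}$ (Remark \ref{20120201:rem3}), so both associations live inside the same ambient linear data $(A_0,B_0)$ acting on $\mc V^\ell$, and standard linearity arguments apply.
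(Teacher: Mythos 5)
Your proof is correct and fills in exactly the argument the paper leaves implicit (its proof of this lemma is literally the single word ``Obvious''): reduce to a fixed minimal fractional decomposition $H=A_0B_0^{-1}$ via Remark \ref{20120201:rem3} and use $\mc C$-linearity of $A_0$, $B_0$ and of $\frac{\delta}{\delta u}$ on the witnesses. The point you flag at the end --- that both associations in parts (b) and (c) can be taken with respect to the same $(A_0,B_0)$ --- is indeed the only step requiring care, and your appeal to the remark handles it properly.
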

\begin{proof}
Obvious.
\end{proof}
\begin{proposition}[{\cite[Sec.7.1]{DSK12}}]\label{prop:lieham}
\begin{enumerate}[(a)]
\item
The space $\mc V^\ell$ has a Lie algebra structure with Lie bracket 
$$
[P,Q]_i
=
\sum_{j\in I,n\in\mb Z_+}
\frac{\partial Q_i}{\partial u_j^{(n)}}\partial^nP_j
-\frac{\partial P_i}{\partial u_j^{(n)}}\partial^nQ_j
\,,
$$
and $\mc H(H)\subset\mc V^\ell$ is its subalgebra.
\item
We have a representation $\phi$ of the Lie algebra $\mc V^\ell$ 
on the space $\mc V/\partial\mc V$ given by
$$
\phi(P)\big(\tint h\big)=\int P\cdot\frac{\delta h}{\delta u}\,,
$$
and the subspace $\mc F(H)\subset\mc V/\partial\mc V$
is preserved by the action of the Lie subalgebra $\mc H(H)\subset\mc V^\ell$.
\item
If $\tint h\ass{H}P$ and $\tint h\ass{H}Q$ for some $\tint h\in\mc F(H)$,
then the action of $P,Q\in\mc H(H)$ on $\mc F(H)$ is the same:
$$
\int P\cdot\frac{\delta g}{\delta u}
=\int Q\cdot\frac{\delta g}{\delta u}
\,\,\text{ for all } \tint g\in\mc F(H)\,.
$$
\item
The space of Hamiltonian functionals $\mc F(H)$
is a Lie algebra with Lie bracket 
\begin{equation}\label{20120124:eq4}
\{\tint f,\tint g\}_H
=
\int P\cdot\frac{\delta g}{\delta u}
\quad
\Big(
=\int \frac{\delta g}{\delta u}\cdot A(\partial) B^{-1}(\partial) \frac{\delta f}{\delta u}
\,\,\Big)\,,
\end{equation}
where $P\in\mc H(H)$ is such that $\tint f\ass{H}P$.
\item
The Lie algebra action of $\mc H(H)$ on $\mc F(H)$ is by derivations
of the Lie bracket \eqref{20120124:eq4}.
\item
The subspace 
$$
\mc A(H)=\Big\{(\tint f,P)\in\mc F(H)\times\mc H(H)\,\Big|\,\tint f\ass{H}P\Big\}\,
$$
is a subalgebra of the direct product of Lie algebras $\mc F(H)\times\mc H(H)$.
\end{enumerate}
\end{proposition}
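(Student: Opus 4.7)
\emph{Strategy.} My plan is to exploit two correspondences: elements $P\in\mc V^\ell$ correspond bijectively to evolutionary derivations $X_P=\sum_{i,n}(\partial^n P_i)\frac{\partial}{\partial u_i^{(n)}}$ of $\mc V$; and $H$-association is encoded by the $\lambda$-bracket via the Master Formula \eqref{20110922:eq1}. Under these correspondences, the six assertions reduce to standard manipulations with evolutionary derivations combined with the Leibniz and Jacobi identities of $\{\cdot\,_\lambda\,\cdot\}_H$.

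For part (a), the map $P\mapsto X_P$ is injective into $\mathrm{Der}(\mc V)$, and a short computation of $[X_P,X_Q](u_i)=X_P(Q_i)-X_Q(P_i)$ produces exactly the stated formula, so the commutator on $\mathrm{Der}(\mc V)$ transfers to $\mc V^\ell$ and gives a Lie bracket. To see $\mc H(H)$ is closed under it, given $\tint f\ass{H}P$ and $\tint g\ass{H}Q$ I would verify that $\{\tint f,\tint g\}_H$ (the functional defined in (d)) is $H$-associated to $[P,Q]$. This is the classical statement that Hamiltonian vector fields close under the commutator, and in the non-local setting its essential ingredient is the Jacobi identity for $H$, taken in $\mc V[[\lambda^{-1},\mu^{-1},(\lambda+\mu)^{-1}]][\lambda,\mu]$ as guaranteed by \cite[Cor.3.11]{DSK12} and specialized to $\lambda=\mu=0$.

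For parts (b) and (c): well-definedness of $\phi(P)(\tint h)=\int P\cdot\frac{\delta h}{\delta u}$ on $\mc V/\partial\mc V$ is immediate from $\frac{\delta}{\delta u}\circ\partial=0$, and $\phi$ being a Lie algebra morphism follows from the identity $X_{[P,Q]}=[X_P,X_Q]$ used to define the bracket in (a). Part (c) reduces to showing that if $R\in\mc H_0(H)$, i.e.\ $0\ass{H}R$, then $\int R\cdot\frac{\delta g}{\delta u}=0$ for every $\tint g\in\mc F(H)$. Using a minimal fractional decomposition $H=AB^{-1}$ provided by Theorem~\ref{20120720:thm1}, we can write $R=AG$ with $BG=0$ and $\frac{\delta g}{\delta u}=BG'$, and then integration by parts together with the skewadjointness identity $A^*B+B^*A=0$ (which is equivalent to $H^*=-H$) gives
\[
\int R\cdot\tfrac{\delta g}{\delta u}
=\int G\cdot A^*BG'
=-\int G\cdot B^*AG'
=-\int BG\cdot AG'
=0.
\]
The invariance of $\mc F(H)$ under $\mc H(H)$ claimed in (b) is a similar but more delicate computation in the same spirit, using the Jacobi identity to show that $X_P\bigl(\frac{\delta g}{\delta u}\bigr)$ remains in the image of $B$ modulo $\partial\mc V$.

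For parts (d)--(f): skew-symmetry of $\{\tint f,\tint g\}_H$ is again the skewadjointness of $H$; its Jacobi identity is the Jacobi identity of the $\lambda$-bracket applied to lifts $f,g,h$ and then integrated by specializing to $\lambda=\mu=0$; and part (c) is precisely what ensures the formula \eqref{20120124:eq4} is independent of the choice of $P\ass{H}\tint f$. The derivation property in (e) follows from the right Leibniz rule of $\{\cdot\,_\lambda\,\cdot\}_H$, and (f) is immediate from (a)--(e) by chasing the definitions. The main obstacle I expect is the closedness step in (a) together with the $\mc F(H)$-invariance half of (b): both require turning the non-local Jacobi identity into an actual identity between elements of $\mc V^\ell$, which demands compatible choices of fractional decompositions for $P$ and $Q$ and a careful justification of the specialization $\lambda,\mu\to 0$ inside a three-variable rational expansion.
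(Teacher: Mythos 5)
The paper does not prove this proposition at all: it is imported verbatim from Part I with the citation [DSK12, Sec.~7.1], so there is no in-paper argument to compare yours against line by line. Judged on its own terms, your proposal correctly identifies all the right ingredients, and one piece is actually complete: the computation in part (c), where you write $R=AG$ with $BG=0$, $\frac{\delta g}{\delta u}=BG'$, and use the skewadjointness identity $A^*B+B^*A=0$ to get $\int R\cdot\frac{\delta g}{\delta u}=-\int BG\cdot AG'=0$, is correct as stated and is essentially the argument in [DSK12].

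However, there is a genuine gap, and it sits exactly where you flag ``the main obstacle'': everything in (a), (b), (d) and (f) beyond formal bookkeeping reduces to the single claim that if $\tint f\ass{H}P$ and $\tint g\ass{H}Q$, then $\{\tint f,\tint g\}_H\ass{H}[P,Q]$ --- this simultaneously gives closedness of $\mc H(H)$, invariance of $\mc F(H)$, closedness of $\mc A(H)$, and (combined with skewadjointness) the Jacobi identity of \eqref{20120124:eq4}. You assert this claim but do not prove it, and in the non-local setting it is not a routine specialization $\lambda,\mu\to0$ of the $\lambda$-bracket Jacobi identity: one must produce an explicit $F\in\mc V^\ell$ with $BF=\frac{\delta}{\delta u}\{\tint f,\tint g\}_H$ and $AF=[P,Q]$, which in [DSK12] requires the identity $\frac{\delta}{\delta u}\int P\cdot\xi=D_P^*(\xi)+D_\xi^*(P)$ for Fr\'echet derivatives, a reformulation of skewadjointness and of the Jacobi identity of $H$ in terms of the Fr\'echet derivatives of $A$ and $B$ applied to the chosen preimages $F,G$, and the minimality of the decomposition $H=AB^{-1}$ to transfer membership in $\im B$ across the computation. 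None of that is supplied or replaceable by the evolutionary-vector-field formalism alone. A secondary slip: part (e) does not follow from the right Leibniz rule of the $\lambda$-bracket; once (c) and (d) are in place, the action of any $Q\in\mc H(H)$ on $\mc F(H)$ coincides with $\ad\tint f$ for any $\tint f\ass{H}Q$, so (e) is just the Jacobi identity of (d) restated. So the proposal is a sound road map with one finished step, but the load-bearing lemma is missing.
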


A \emph{Hamiltonian equation} associated to the Poisson structure $H$
and to the Hamiltonian functional $\tint h\in\mc F(H)$,
with an associated Hamiltonian vector field $P\in\mc H(H)$,
is, by definition, the following evolution equation on the variables $u=\big(u_i\big)_{i\in I}$:
\begin{equation}\label{20120124:eq5}
\frac{du}{dt}
=P\,.
\end{equation}
By the chain rule, any element $f\in\mc V$ evolves according to the equation
$$
\frac{df}{dt}=\sum_{i\in I}\sum_{n\in\mb Z_+}(\partial^nP_i)\frac{\partial f}{\partial u_i^{(n)}}\,,
$$
and, integrating by parts,
a local functional $\tint f\in\mc V/\partial\mc V$
evolves according to
$$
\frac{d\tint f}{dt}=\int P\cdot\frac{\delta f}{\delta u}\,.
$$
An \emph{integral of motion} for the Hamiltonian equation \eqref{20120124:eq5}
is a Hamiltonian functional $\tint f\in\mc F(H)$ such that
$$
\frac{d\tint f}{dt}=\{\tint h,\tint f\}_H=0\,,
$$
i.e. $\tint f$ lies in the centralizer of $\tint h$ in the Lie algebra $\mc F(H)$.
The Hamiltonian equation \eqref{20120124:eq5} is said to be \emph{integrable}
if there is an infinite sequence of pairs $(\tint h_n,P_n)\in\mc F(H)\times\mc H(H),\,n\geq0$,
such that $\tint h_0=\tint h$, $P_0=P$,
we have $\tint h_n\ass{H}P_n$ for every $n\in\mb Z_+$,
the sequence $\{\tint h_n\}_{n\in\mb Z_+}$ spans 
an infinite dimensional abelian subalgebra of the Lie algebra $\mc F(H)$,
and the sequence $\{P_n\}_{n\in\mb Z_+}$ spans an infinite dimensional 
abelian subalgebra of the Lie algebra $\mc H(H)$.
Equivalently,  equation \eqref{20120124:eq5} is integrable if 
there exists an abelian subalgebra $\mf h$
of the Lie algebra $\mc A(H)$ defined in Proposition \ref{prop:lieham}(f),
such that both its canonical projections $\pi_1(\mf h)$ and $\pi_2(\mf h)$ 
in $\mc F(H)$ and $\mc H(H)$ respectively are infinite dimensional, 
and $\tint h\in\pi_1(\mf h)$.
In this case, we have an \emph{integrable hierarchy} of Hamiltonian equations
$$
\frac{du}{dt_n} = P_n\,,\,\,n\in\mb Z_+\,.
$$


Sometimes a non-local Poisson structure $H$ admits a nice decomposition
of the form
\begin{equation}\label{20121018:eq1}
H=A_1B_1^{-1}A_2B_2^{-1}\dots A_nB_n^{-1}\,,
\end{equation}
where $A_1,\dots,A_n,B_1,\dots,B_n$ are $\ell\times\ell$ matrix differential operators
and the $B_i$'s are-non degenerate.
In this case, to check that $\tint f\in\mc V/\partial\mc V$ and $P\in\mc V^\ell$
are $H$-associated one can use the following result.
\begin{proposition}\label{20121018:prop1}
Let $H\in\Mat_{\ell\time\ell}\mc V(\partial)$ be a non-local Poisson structure
of the form \eqref{20121018:eq1}.
Then we have
$\tint f\ass{H}P$, provided that there exist $F_1,\dots,F_n\in\mc V^\ell$
such that
\begin{equation}\label{20121018:eq2}
P=A_1F_1
\,\,,\,\,\,\,
B_1F_1=A_2F_2
\,\,,\,\,\dots\,\,,\,\,
B_{n-1}F_{n-1}=A_nF_n
\,\,,\,\,\,\,
B_nF_n=\frac{\delta f}{\delta u}\,.
\end{equation}
\end{proposition}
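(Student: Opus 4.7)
The plan is to argue by induction on the number $n$ of fractions appearing in~\eqref{20121018:eq1}, using the right Ore condition in the skew-field $\mc V(\partial)$ (equivalently, Theorem~\ref{20120720:thm1}) as the main algebraic tool. The base case $n=1$ is immediate: the decomposition $H=A_1 B_1^{-1}$ is itself fractional, and the hypotheses $P=A_1 F_1$, $B_1 F_1=\delta f/\delta u$ supply $F=F_1\in\mc V^\ell$ as the single witness required in Definition~\ref{20120124:def}, so $\tint f\ass{H}P$.

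For the inductive step I split off the leading fraction, writing $H=A_1 B_1^{-1}\tilde H$ with $\tilde H=A_2 B_2^{-1}\cdots A_n B_n^{-1}$, and apply the induction hypothesis to $\tilde H$ with the tail $F_2,\dots,F_n$ of the chain (taking $B_1 F_1$ in the role of the target vector $P$). This produces a fractional decomposition $\tilde H=\tilde A\tilde B^{-1}$ together with a single $\tilde F\in\mc V^\ell$ satisfying $\tilde B\tilde F=\delta f/\delta u$ and $\tilde A\tilde F=B_1 F_1$. To combine with the leading factor I invoke right Ore on the pair $(B_1,\tilde A)$: there exist matrix differential operators $\hat A,\hat B\in\Mat_{\ell\times\ell}\mc V[\partial]$, with $\hat B$ non-degenerate, such that $B_1\hat A=\tilde A\hat B$, so that
\[
H \;=\; A_1 B_1^{-1}\tilde A\tilde B^{-1} \;=\; (A_1\hat A)\,(\tilde B\hat B)^{-1}
\]
is a fractional decomposition of $H$.

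What remains is to exhibit a single $F\in\mc V^\ell$ realising this decomposition, i.e.\ satisfying $(A_1\hat A)F=P$ and $(\tilde B\hat B)F=\delta f/\delta u$. The natural strategy is to impose the stronger conditions $\hat A F=F_1$ and $\hat B F=\tilde F$, whose mutual consistency at the formal level follows from combining the Ore identity $B_1\hat A=\tilde A\hat B$ with the already-proved relation $\tilde A\tilde F=B_1 F_1$ (applying $B_1\hat A$ to $F$ gives $B_1 F_1=\tilde A\tilde F$, exactly the chain). The main obstacle I anticipate is producing such $F$ in $\mc V^\ell$ itself, rather than merely in a differential extension: by Theorem~\ref{20120720:thm1} the Ore pair $(\hat A,\hat B)$ is determined only up to right multiplication by a common non-degenerate matrix differential operator, and I would exploit this freedom, together with the fact that $\mc V$ is a field of differential functions, to adapt $(\hat A,\hat B)$ to the specific vectors $F_1$ and $\tilde F$ so that the overdetermined linear system $\hat A F=F_1$, $\hat B F=\tilde F$ becomes solvable in $\mc V^\ell$.
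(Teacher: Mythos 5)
Your induction is set up sensibly and is in fact a mirror image of the paper's own proof (the paper peels the chain from the right, merging $B_{n-1}^{-1}A_n$ into a single fraction; you peel off the leftmost fraction $A_1B_1^{-1}$), but the final step of your argument is a genuine gap, and it is precisely the crux of the whole statement. After producing the Ore pair $B_1\hat A=\tilde A\hat B$ and the candidate decomposition $H=(A_1\hat A)(\tilde B\hat B)^{-1}$, you must exhibit $F\in\mc V^\ell$ with $\hat AF=F_1$ and $\hat BF=\tilde F$. You propose to obtain this by ``adapting'' $(\hat A,\hat B)$ using the freedom of right multiplication by a common non-degenerate matrix differential operator. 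This cannot work as described: replacing $(\hat A,\hat B)$ by $(\hat AE,\hat BE)$ only makes the system \emph{harder} to solve (any solution $F$ of the modified system yields the solution $EF$ of the original one, so the solution set shrinks), so the freedom you invoke runs in the wrong direction, and in any case ``adapting the pair to the given vectors'' is not an argument.

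The correct resolution is to take $(\hat A,\hat B)$ to be the cofactors in the \emph{right least common multiple} $R=B_1\hat A=\tilde A\hat B$ of $B_1$ and $\tilde A$, and then invoke Lemma \ref{20121018:lem1} (the result from \cite{CDSK12c} stated just before the proposition): since $B_1F_1=\tilde A\tilde F$ with $B_1$ non-degenerate, that lemma produces exactly the desired $F\in\mc V^\ell$ with $F_1=\hat AF$ and $\tilde F=\hat BF$. This is a nontrivial fact about matrix differential operators over $\mc V$ --- it fails for a general Ore pair --- and it is the one external input the proof cannot do without; your proposal reduces the proposition to it but neither states nor proves it. With that citation inserted your argument closes. (A minor further point: you apply the inductive hypothesis to $\tilde H=A_2B_2^{-1}\cdots A_nB_n^{-1}$, which need not be a non-local Poisson structure; this is harmless because the skewadjointness and Jacobi hypotheses play no role in the proof, which is purely about fractional decompositions.) For comparison, the paper uses the same Lemma \ref{20121018:lem1} at the other end of the chain: it writes $B_{n-1}^{-1}A_n=\tilde A_n\tilde B_{n-1}^{-1}$, lifts the relation $B_{n-1}F_{n-1}=A_nF_n$ to a single $\tilde F_{n-1}$, and collapses the last two fractions into $(A_{n-1}\tilde A_n)(B_n\tilde B_{n-1})^{-1}$, shortening the chain by one.
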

For the proof we use the following:
\begin{lemma}[\cite{CDSK12c}]\label{20121018:lem1}
Let $A$ and $B$ be $\ell\times\ell$ matrix differential operators, with $B$ non-degenerate,
and let $R=A\tilde{B}=B\tilde{A}$ be their right least common multiple.
If $F,G\in\mc V^\ell$ solves $BF=AG$,
then there exists $\tilde{F}\in\mc V^\ell$ such that $F=\tilde{B} \tilde{F},\,G=\tilde{A} \tilde{F}$.
\end{lemma}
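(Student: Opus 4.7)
The plan is to parametrize solutions of $BF=AG$ by a single vector $\tilde F\in\mc V^\ell$ via the right LCM data, and to establish surjectivity of this parametrization using the minimality of $R$.

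First, I would translate to the skew-field $\mc V(\partial)$: from the relation $R=A\tilde B=B\tilde A$ and the non-degeneracy of $B$, one derives the identity $B^{-1}A=\tilde A\tilde B^{-1}$, together with the non-degeneracy of $\tilde B$ (a feature of the right LCM construction in the Ore ring $\Mat_\ell\mc V[\partial]$). The hypothesis $BF=AG$ then rewrites formally as $F=\tilde A\tilde B^{-1}G$, motivating the candidate $\tilde F:=\tilde B^{-1}G$, a priori defined only in an algebraic extension of $\mc V$. The content of the lemma is that this candidate actually lies in $\mc V^\ell$ and realizes the claimed factorizations.

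To place $\tilde F$ inside $\mc V^\ell$, I would invoke the right coprimality of $\tilde A$ and $\tilde B$, which is a consequence of $R$ being the \emph{least} (not merely some) common right multiple. In the Ore ring $\Mat_\ell\mc V[\partial]$, this coprimality yields a Bezout-type identity producing $P,Q\in\Mat_\ell\mc V[\partial]$ with which $\tilde F$ is exhibited as an explicit $\mc V[\partial]$-combination of $F$ and $G$, manifestly in $\mc V^\ell$. The factorizations $F=\tilde B\tilde F$ and $G=\tilde A\tilde F$ are then checked by a direct algebraic manipulation using $BF=AG$ and the LCM identity $A\tilde B=B\tilde A$; the non-degeneracy of $\tilde A$ and $\tilde B$, together with Lemma~\ref{20120910:lem1}, can be used to confirm that no extraneous differential-order inflation occurs.

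The main obstacle is the matrix Bezout/coprimality step: the scalar Euclidean-algorithm argument in $\mc V[\partial]$ does not transfer verbatim to $\Mat_\ell\mc V[\partial]$, and I would rely on the structural results for minimal fractional decompositions and right LCMs in this Ore ring that are developed in the companion paper \cite{CDSK12c}, from which the lemma is quoted. Once these structural inputs are granted, the remaining verifications are a short formal computation using the LCM identity and the hypothesis $BF=AG$.
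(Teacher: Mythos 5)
The paper itself contains no proof to compare against: the lemma is quoted from \cite{CDSK12c}, which was still in preparation, so your proposal must stand on its own. Its purely structural ingredients are fine: $\tilde{B}$ can indeed be taken non-degenerate, leastness of $R$ does imply that $\tilde{A}$ and $\tilde{B}$ are right coprime, and right coprimality does yield a Bezout identity $P\tilde{A}+Q\tilde{B}=\id$ in $\Mat_{\ell}\mc V[\partial]$ (using that this ring is a principal ideal ring, one of the results of \cite{CDSK12c}). The gap is the central step, where you pass from $BF=AG$ to ``$F=\tilde{A}\tilde{B}^{-1}G$'' and assert that the two factorizations then follow ``by a direct algebraic manipulation''. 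A pseudodifferential operator such as $\tilde{B}^{-1}$ does not act on $\mc V^\ell$, and the hypothesis $BF=AG$ determines $F$ only modulo $\ker B$ -- the kernel of $B$ acting on functions, which is in general nonzero (e.g.\ $\ker\partial\supset\mc C$). Concretely, if $\tilde{F}_0$ solves $\tilde{B}\tilde{F}_0=G$ in some differential (not algebraic) field extension, all that follows from $A\tilde B=B\tilde A$ is $B(\tilde{A}\tilde{F}_0-F)=0$; whether the discrepancy $\tilde{A}\tilde{F}_0-F\in\ker B$ can be removed by adjusting $\tilde{F}_0$ by an element of $\ker\tilde{B}$ is exactly the content of the lemma, and no formal identity settles it. Your Bezout identity only handles the secondary issue -- that an $\tilde{F}$ solving \emph{both} equations, once it exists, lies in $\mc V^\ell$, since then $\tilde{F}=P\tilde{A}\tilde{F}+Q\tilde{B}\tilde{F}=PF+QG$ -- it cannot produce the existence of such an $\tilde{F}$.

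No manipulation from your listed inputs can, because the statement as printed is false without a further hypothesis on the pair $(A,B)$. Take $\ell=1$, $A=B=\partial$: the right least common multiple is $R=\partial$ with $\tilde{A}=\tilde{B}=1$, which are non-degenerate, right coprime, and satisfy $\ker\tilde A\cap\ker\tilde B=0$, so every structural fact you invoke holds; yet $F=u$, $G=u+1$ satisfy $BF=AG$, while the conclusion would force $F=\tilde{F}=G$. The missing assumption is (left) coprimality of $A$ and $B$, and the genuine argument is a kernel count, not an identity: over a suitable differential field extension, $\tilde{A}$ maps $\ker\tilde{B}$ \emph{injectively} into $\ker B$ (injectivity is minimality, $\ker\tilde{A}\cap\ker\tilde{B}=0$), and one must prove this map is \emph{onto}, which amounts to $\dim_{\mc C}\ker\tilde{B}=\dim_{\mc C}\ker B$; that equality comes from multiplicativity of the Dieudonn\'e determinant applied to $A\tilde{B}=B\tilde{A}$, the coprimality of $A$ and $B$, and the theorem of \cite{CDSK12} identifying the kernel dimension of a non-degenerate operator with the degree of its Dieudonn\'e determinant (in the example above this dimension count visibly fails: $\ker\tilde B=0$ while $\ker B=\mc C$). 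Only after correcting $\tilde{F}_0$ by an element of $\ker\tilde{B}$ does your Bezout step legitimately place $\tilde{F}$ in $\mc V^\ell$. A final, minor remark: the factorizations consistent with $BF=AG$, and with the way the lemma is used in Proposition \ref{20121018:prop1}, are $F=\tilde{A}\tilde{F}$, $G=\tilde{B}\tilde{F}$; the printed statement has the two tildes interchanged, and you silently -- and correctly -- worked with the consistent version.
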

\begin{proof}[Proof of Proposition \ref{20121018:prop1}]
For $n=1$ the above condition \eqref{20121018:eq2}
reduces to the Definition \ref{20120124:def} of $H$-association,
so there is nothing to prove.
We prove the statement by induction on $n\geq2$.
Let $R=A_n\tilde{B}_{n-1}=B_{n-1}\tilde{A}_n$
be the least right common multiple of $A_n$ and $B_{n-1}$.
Then, $B_{n-1}^{-1}A_n=\tilde{A}_n {\tilde{B}_{n-1}}^{-1}$,
and the decomposition \eqref{20121018:eq1} gives
$$
H=A_1B_1^{-1}\dots A_{n-2}B_{n-2}^{-1}(A_{n-1}\tilde{A}_n)(B_n\tilde{B}_{n-1})^{-1}\,.
$$
By assumption, we have $B_{n-1}F_{n-1}=A_nF_n$.
Hence, by Lemma \ref{20121018:lem1}, there exists $\tilde{F}_{n-1}\in\mc V^\ell$
such that $F_{n-1}=\tilde{A}_n \tilde{F}_{n-1}$ and $F_n=\tilde{B}_{n-1} \tilde{F}_{n-1}$.
It follows that
$A_{n-1}F_{n-1}=(A_{n-1}\tilde{A}_n)\tilde{F}_{n-1}$,
and $(B_n\tilde{B}_{n-1})\tilde{F}_{n-1}=B_nF_n=\frac{\delta f}{\delta u}$.
Therefore, the assumption \eqref{20121018:eq2} implies
$P=A_1F_1$,
$B_1F_1=A_2F_2$, ... ,
$B_{n-2}F_{n-2}=(A_{n-1}\tilde{A}_n) \tilde{F}_{n-1}$,
$(B_n\tilde{B}_{n-1})\tilde{F}_{n-1}=\frac{\delta f}{\delta u}$.
By the inductive assumption, we conclude that $\tint f\ass{H}P$.
\end{proof}

\subsection{The Lenard-Magri scheme of integrability}
\label{sec:2.3}

The so-called \emph{Lenard-Magri scheme of integrability}
provides an algorithm to construct hierarchies of integrable
bi-Hamiltonian equations.
Sufficient conditions for the applicability of this scheme
are given in the following theorem,
which summarizes the results in \cite[Sec.7.2]{DSK12}.
\begin{theorem}\label{th:lmscheme}
Let $H$ and $K\in\Mat_{\ell\times\ell}\mc V(\partial)$
be compatible non-local Poisson structures 
on a field of differential functions $\mc V$ (in $\ell$ variables $u_1,\dots,u_\ell$),
and assume that $K$ is non-degenerate.
Let, for some $N\in\mb Z_+$, $\{\tint h_n\}_{n=0}^N\subset\mc F(H)$
and $\{P_n\}_{n=0}^N\subset\mc H(H)$ be such that
\begin{equation}\label{20120907:eq2}
\tint 0\ass{H}P_0\ass{K}\tint h_0\ass{H} P_1\ass{K}\tint h_1\ass{H}\dots\ass{H} P_N\ass{K}\tint h_N\,,
\end{equation}
and such that the following orthogonality conditions
hold for some fractional decompositions $H=AB^{-1}$ and $K=CD^{-1}$:
\begin{equation}\label{20120621:eq2}
\big(\Span{}_{\mc C}\big\{\frac{\delta h_n}{\delta u}\big\}_{n=0}^N\big)^\perp\subset\im(C)
\,\,,\,\,\,\,
\big(\Span{}_{\mc C}\big\{P_n\big\}_{n=0}^N\big)^\perp \subset\im(B)\,,
\end{equation}
where the orthogonal complement is with respect to the pairing
$\mc V^{\ell}\times\mc V^\ell\to\mc V/\partial\mc V,\,(\xi,P)\mapsto\tint \xi\cdot P$.
Then the given finite sequences can be extended 
to infinite sequences $\{\tint h_n\}_{n=0}^\infty$, 
$\{P_n\}_{n=0}^\infty$, 
such that 
\begin{equation}\label{20120907:eq3}
\tint h_{n-1}\ass{H} P_n\ass{K}\tint h_n
\,\,\text{ for all } n\geq 1\,,
\end{equation}
in some field of differential functions extension $\tilde{\mc V}$ of $\mc V$.
(In fact, the $P_n$'s and $\frac{\delta h_n}{\delta u}$'s can be taken in $\mc V^\ell$.)

The elements $\tint h_n,\,n\in\mb Z_+$, are in involution
with respect to both Lie brackets \eqref{20120124:eq4} for $H$ and $K$
(in the Lie algebras $\tilde{\mc F}(H)$ and $\tilde{\mc F}(K)$, 
subspaces of $\tilde{\mc V}/\partial\tilde{\mc V}$):
\begin{equation}\label{involution}
\{\tint h_m,\tint h_n\}_H=\{\tint h_m,\tint h_n\}_K=0
\,\,,\,\,\,\,\text{ for all } m,n\in\mb Z_+\,,
\end{equation}
and,
\begin{equation}\label{compatible}
[P_m,P_n]\in\ker(B^*)\cap\ker(D^*)
\,\,\text{ for all } m,n\in\mb Z_+\,.
\end{equation}

In particular, if $\ker(B^*)\cap\ker(D^*)=0$,
then the evolution equations
\begin{equation}\label{20120721:eq1}
\frac{du}{dt_n}=P_n\,\,,\,\,\,\,\,n\in\mb Z_+\,,
\end{equation}
form a compatible hierarchy,
and the $\tint h_n$'s
are its integrals of motion;
therefore, if the subspaces 
$\Span_{\tilde{\mc C}}\{\tint h_n\}_{n\in\mb Z_+}\subset\tilde{\mc V}/\partial\tilde{\mc V}$
and $\Span_{\mc C}\{P_n\}_{n\in\mb Z_+}\subset\mc V^\ell$
are infinite dimensional,
then \eqref{20120721:eq1} is an integrable hierarchy of Hamiltonian equations.
\end{theorem}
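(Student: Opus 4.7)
The plan is to prove the theorem in three stages: (i) inductively extend the given finite Lenard--Magri chain to an infinite chain in a field extension $\tilde{\mc V}$ of $\mc V$; (ii) verify involution \eqref{involution} by the Lenard--Magri shifting identity; and (iii) derive \eqref{compatible} from (ii).

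For the extension, fix minimal fractional decompositions $H=AB^{-1}$ and $K=CD^{-1}$, so that $B$, $C$ and $D$ are non-degenerate (the last two because $K$ is non-degenerate and invertible). Inductively, assume \eqref{20120907:eq3} holds up to some index $M\ge N$, with chosen witnesses $F_n,G_n\in\tilde{\mc V}^\ell$ satisfying $BF_n=\delta h_{n-1}/\delta u$, $P_n=AF_n$, $CG_n=P_n$ and $DG_n=\delta h_n/\delta u$. Non-degeneracy of $B$ lets us solve $BF_{M+1}=\delta h_M/\delta u$ in a further extension $\tilde{\mc V}'$ (adjoining the components of $F_{M+1}$ as new differential functions), and set $P_{M+1}:=AF_{M+1}$; similarly, non-degeneracy of $C$ lets us solve $CG_{M+1}=P_{M+1}$ and produces a candidate $\xi_{M+1}:=DG_{M+1}$ for the next variational derivative. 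The non-trivial point is to produce $\tint h_{M+1}$ with $\delta h_{M+1}/\delta u=\xi_{M+1}$, i.e.\ to check variational exactness. The obstruction is the skew-symmetric part of the Frechet derivative of $\xi_{M+1}$; compatibility of $H$ and $K$, combined with their skew-adjointness, converts this obstruction into an expression whose pairing with test vectors is supported on $\Span\{P_n\}_{n\le M}\cup\Span\{\delta h_n/\delta u\}_{n\le M}$. The inclusions in \eqref{20120621:eq2} then allow one to absorb the residual obstruction by modifying $F_{M+1}$ and $G_{M+1}$ by appropriate elements of $\ker B$ and $\ker C$, following \cite[Sec.~7.2]{DSK12}. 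This produces the required $\tint h_{M+1}$ and completes the induction.

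Once the infinite chain is constructed, \eqref{involution} follows from the standard Lenard--Magri shift: combining $P_n=AF_n=CG_n$ and $\delta h_n/\delta u=BF_{n+1}=DG_n$ with the identities $B^*A=-A^*B$ and $D^*C=-C^*D$ coming from skew-adjointness of $H$ and $K$, one computes
$$
\{\tint h_m,\tint h_n\}_K=\tint P_m\cdot\tfrac{\delta h_n}{\delta u}=\{\tint h_{m-1},\tint h_{n+1}\}_K,
$$
and iterating the shift reduces any $\{\tint h_m,\tint h_n\}_K$ to a bracket of consecutive or coincident terms, which vanishes by skew-symmetry; the same argument works for $\{\cdot,\cdot\}_H$. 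For \eqref{compatible}, Proposition \ref{prop:lieham}(e) combined with the involution forces $\tint Q\cdot[P_m,P_n]=0$ for every $Q\in\im B\cup\im D$, hence $[P_m,P_n]\in\ker B^*\cap\ker D^*$. The hierarchy and integrability conclusions are then immediate from the definitions.

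The main obstacle is the variational-exactness step in the extension: $\xi_{M+1}=DG_{M+1}$ is a priori only an element of $(\tilde{\mc V}')^\ell$, and producing an actual local functional $\tint h_{M+1}$ with $\delta h_{M+1}/\delta u=\xi_{M+1}$ is what forces the orthogonality hypotheses \eqref{20120621:eq2} to be combined delicately with the compatibility of $H$ and $K$. The remaining stages---the shifting identity for \eqref{involution} and the derivation argument for \eqref{compatible}---are essentially formal once the chain exists.
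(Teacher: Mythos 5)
First, a point of reference: this paper does not prove Theorem \ref{th:lmscheme} at all --- it is explicitly stated as a summary of \cite[Sec.7.2]{DSK12}, and the only argument supplied here is Remark \ref{rem:nonminimal}, which reduces arbitrary fractional decompositions to minimal ones. So your proposal has to be measured against the proof in Part I. Your overall architecture (extend the chain, prove involution by the shift identity, deduce \eqref{compatible}) is the right one, but the extension step has a genuine gap coming from doing things in the wrong order. In the actual argument one first proves, for the \emph{given finite} chain, the relations $\tint\frac{\delta h_m}{\delta u}\cdot P_n=0$ for all $m,n\le N$: this is the Lenard zig-zag induction, using only the witnesses $F_n,G_n$ and the identities $A^*B=-B^*A$, $C^*D=-D^*C$ from skew-adjointness, and it requires no extension of $\mc V$. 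Only then do the hypotheses \eqref{20120621:eq2} enter: they give $\frac{\delta h_N}{\delta u}\in\big(\Span_{\mc C}\{P_n\}\big)^\perp\subset\im(B)$, hence solvability of $BF_{N+1}=\frac{\delta h_N}{\delta u}$ \emph{in} $\mc V^\ell$, and then $P_{N+1}=AF_{N+1}\in\big(\Span_{\mc C}\{\frac{\delta h_n}{\delta u}\}\big)^\perp\subset\im(C)$, hence solvability of $CG_{N+1}=P_{N+1}$ in $\mc V^\ell$. This is precisely what justifies the parenthetical claim that the $P_n$'s and $\frac{\delta h_n}{\delta u}$'s can be taken in $\mc V^\ell$. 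By instead solving these linear systems by brute force in a field extension (using non-degeneracy of $B$ and $C$) and postponing involution to your stage (ii), you lose that claim and deprive \eqref{20120621:eq2} of its actual role, which is why you are then forced to invoke it, incorrectly, in the exactness step.

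Second, the exactness step itself is misdescribed. That $\xi_{N+1}=DG_{N+1}$ is a variational derivative is \emph{not} obtained by ``absorbing an obstruction supported on the spans'' via \eqref{20120621:eq2} and modifying $F_{N+1},G_{N+1}$ by elements of $\ker B$ and $\ker C$; nothing in the orthogonality hypotheses controls the symmetry of the Fr\'echet derivative of $\xi_{N+1}$. The correct mechanism is a closedness-propagation lemma for the compatible pair: if $\xi$ is closed (self-adjoint Fr\'echet derivative) and $K\eta=H\xi$, then $\eta$ is closed. Applied along the chain starting from $\xi_{-1}=0$, this gives closedness of every $\xi_n$, and closedness implies exactness only after passing to a normal extension $\tilde{\mc V}$ (cf. Remark \ref{20120909:rem1}); that is the one place where the extension is genuinely needed, and it produces the $\tint h_n$'s, not the $P_n$'s. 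Your stages (ii) and (iii) are acceptable as sketches (the shift identity and the derivation argument are indeed formal once the witnesses exist), but as written the induction in stage (i) does not close: without the finite-chain orthogonality relations established \emph{before} the extension, neither the membership $\frac{\delta h_M}{\delta u}\in\im(B)$ nor the exactness of $\xi_{M+1}$ is justified.
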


\begin{remark}\label{rem:nonminimal}
In \cite[Sec.7.2]{DSK12} the above result is stated under the assumption that 
the algebra of differential functions $\mc V$ is normal.
The present formulation is basically equivalent due to the fact 
that any algebra of differential functions can be extended to a normal one.
Moreover, in the same paper we assume that the fractional decompositions
$H=AB^{-1}$ and $K=CD^{-1}$ are minimal.
However, it is clear that, if Theorem \ref{th:lmscheme} holds 
for minimal fractional decompositions $H=A_0B_0^{-1},\,K=C_0D_0^{-1}$,
then it automatically holds for arbitrary fractional decompositions
$H=AB^{-1},\,K=CD^{-1}$.
Indeed, thanks to Theorem \ref{20120720:thm1},
$\im C\subset\im C_0,\,\im B\subset\im B_0,\,
\ker B_0^*\subset\ker B^*,\,\ker D_0^*\subset\ker D^*$.
\end{remark}

\begin{remark}\label{20120906:rem1}
The condition that $\ker(B^*)\cap\ker(D^*)=0$ is actually not needed 
for the proof of integrability of each given equation $\frac{du}{dt}=P_n$.
Indeed, since $\ker(B^*)\cap\ker(D^*)$ is finite dimensional over $\mc C$,
then the space $\Span_{\mc C}\{P_n\}_{n\in\mb Z_+}\subset\mc V^\ell$
contains an infinite dimensional abelian subalgebra containing any given element of it.
This is true due to the following result.
\end{remark}
\begin{lemma}\label{20120906:lem1}
Let $U$ be an infinite dimensional subspace of a Lie algebra
such that $[U,U]$ is finite dimensional.
Then any element of $U$ is contained in an infinite dimensional abelian subalgebra of $U$.
\end{lemma}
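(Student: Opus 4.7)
The plan is to use the hypothesis that $[U,U]$ is finite dimensional to produce, for any $u_0 \in U$, a large subspace of $U$ commuting with $u_0$, and then to iterate.

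First I would fix $u_0 \in U$ and look at the adjoint map $\ad(u_0)\colon U \to [U,U]$. Its image lies in the finite dimensional subspace $[U,U]$, so $\ker(\ad(u_0)) \cap U$ has finite codimension in $U$. Since $U$ is infinite dimensional, this centralizer $C_0 := \{v \in U \mid [u_0,v]=0\}$ is still infinite dimensional and contains $u_0$.

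The second step is the inductive construction. Suppose I have built linearly independent elements $u_0,u_1,\dots,u_n \in U$ which pairwise commute, together with an infinite dimensional subspace $C_n \subset U$ consisting of elements commuting with all of $u_0,\dots,u_n$. Since $C_n$ is infinite dimensional and $\Span_{\mc C}\{u_0,\dots,u_n\}$ is finite dimensional, I can pick $u_{n+1} \in C_n$ outside this span. Now $\ad(u_{n+1})$ maps $C_n$ into $[U,U]$, which is finite dimensional, so $C_{n+1} := \ker\bigl(\ad(u_{n+1})|_{C_n}\bigr)$ again has finite codimension in $C_n$ and is therefore infinite dimensional; by construction every element of $C_{n+1}$ commutes with $u_0,\dots,u_{n+1}$.

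Finally, I take $V = \Span_{\mc C}\{u_n \mid n \geq 0\}$. By construction the $u_n$ are linearly independent, so $V$ is infinite dimensional; and by construction $[u_i,u_j]=0$ for all $i,j$, so $V$ is an abelian subalgebra of $U$ containing $u_0$.

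There is no real obstacle: the only thing one must watch is that at each step the newly chosen $u_{n+1}$ is linearly independent from $u_0,\dots,u_n$, which is automatic because one is picking from an infinite dimensional space and avoiding a finite dimensional one. The whole argument hinges on the single observation that having image in a finite dimensional space forces a kernel of finite codimension, which preserves the property of being infinite dimensional at every stage of the induction.
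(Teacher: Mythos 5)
Your proof is correct and follows essentially the same argument as the paper: at each stage the adjoint map into the finite dimensional space $[U,U]$ has kernel of finite codimension, so the iterated centralizers stay infinite dimensional and one can extract an infinite linearly independent commuting family containing the given element. The only difference is that you spell out the induction in more detail than the paper does.
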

\begin{proof}
Let $a_1$ be a non-zero element of $U$.
The centralizer $C_1$ of $a_1$ in $U$
is the kernel of the map $\ad a:\,U\to[U,U]$,
hence, it has finite codimension in $U$.
Next, let $a_2$ be an element of $C_1$ linearly independent of $a_1$,
and let $C_2$ be its centralizer in $C_1$.
By the same argument, $C_2$ has finite codimension in $C_1$.
In this fashion we construct an infinite sequence of linearly independent
commuting elements of $U$.
\end{proof}

According to Theorem \ref{th:lmscheme}, 
when we apply the Lenard-Magri scheme of integrability
we get infinite sequences 
$\{\tint h_n\}_{n=0}^\infty\subset\mc V/\partial\mc V$, $\{P_n\}_{n=0}^\infty\subset\mc V^\ell$,
satisfying the relations \eqref{20120907:eq3}.
At this point, in order to prove integrability,
we still need to prove that the spaces $\Span_{\mc C}\{\tint h_n\}_{n=0}^\infty$
and $\Span_{\mc C}\{P_n\}_{n=0}^\infty$ are infinite dimensional.
(In fact, it is easy to see that it suffices to check the latter condition).
For this one usually uses differential order considerations, based on the following result.
\begin{proposition}\label{20120910:prop}
Suppose that the rational $\ell\times\ell$-matrix pseudodifferential operators 
$H=AB^{-1},\,K=CD^{-1}\,\in\Mat_{\ell\times\ell}\mc V(\partial)$, are in their minimal fractional decompositions.
Assume also that the matrix differential operators $A,B,C,D$ have non-degenerate
leading coefficients.
Let $P,Q\in\mc V^\ell$ and $\tint h\in\mc V/\partial\mc V$ satisfy the following relations
\begin{equation}\label{20120910:eq3}
P\ass{K}\tint h\ass{H}Q\,,
\end{equation}
and assume that
\begin{equation}\label{20120911:eq1}
\dord(P)>\max\!\big\{\!\dord(A)-|H|+|K|,\dord(B)+|K|,\dord(C),\dord(D)+|K|\big\}.
\end{equation}
(Here we use the notation introduced in \eqref{20120910:eq1} and \eqref{20120910:eq2}).
Then
$$
\dord\big(\frac{\delta h}{\delta u}\big)=\dord(P)-|K|
\,\,\,\,\text{ and }\,\,
\dord(Q)=\dord(P)+|H|-|K|\,.
$$
\end{proposition}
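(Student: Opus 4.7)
The plan is to unpack the two association relations $P\ass{K}\tint h\ass{H}Q$ using Definition \ref{20120124:def} (together with Remark \ref{20120201:rem3}, which allows us to use our fixed minimal fractional decompositions $H=AB^{-1}$ and $K=CD^{-1}$), producing $F,G\in\mc V^\ell$ with
$$
P=CF,\qquad \frac{\delta h}{\delta u}=DF=BG,\qquad Q=AG.
$$
I will then track differential orders along the chain $P\mapsto F\mapsto \frac{\delta h}{\delta u}\mapsto G\mapsto Q$ by applying Lemma \ref{20120910:lem1}(b)--(c) four times. Note that $|H|=|A|-|B|$ and $|K|=|C|-|D|$ as orders of pseudodifferential operators, so the four terms in the max on the right side of \eqref{20120911:eq1} are tailored exactly to one application each.

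For the first claim I would propagate order from $P$ through $F$ to $\frac{\delta h}{\delta u}$. Since $C$ has non-degenerate leading coefficient and, by hypothesis, $\dord(P)=\dord(CF)>\dord(C)$, Lemma \ref{20120910:lem1}(c) yields $\dord(F)=\dord(P)-|C|$. The hypothesis $\dord(P)>\dord(D)+|K|$ rearranges (using $|K|=|C|-|D|$) into $\dord(F)+|D|>\dord(D)$, so Lemma \ref{20120910:lem1}(b) applied to $\frac{\delta h}{\delta u}=DF$ gives
$$
\dord\Big(\frac{\delta h}{\delta u}\Big)=\dord(F)+|D|=\dord(P)-|C|+|D|=\dord(P)-|K|,
$$
which is the first assertion.

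For the second claim I would repeat the same two-step argument on the other side. The hypothesis $\dord(P)>\dord(B)+|K|$ and the value just computed give $\dord(\frac{\delta h}{\delta u})>\dord(B)$, so Lemma \ref{20120910:lem1}(c) applied to $\frac{\delta h}{\delta u}=BG$ yields $\dord(G)=\dord(P)-|K|-|B|$. Finally, the remaining hypothesis $\dord(P)>\dord(A)-|H|+|K|$ rearranges (using $|H|=|A|-|B|$) into $\dord(G)+|A|>\dord(A)$, and Lemma \ref{20120910:lem1}(b) applied to $Q=AG$ produces $\dord(Q)=\dord(G)+|A|=\dord(P)+|H|-|K|$, as desired.

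The proof is essentially a bookkeeping exercise, and I do not expect a serious obstacle. Each of the four terms in the $\max$ of \eqref{20120911:eq1} is used exactly once, precisely to activate either part (b) or part (c) of Lemma \ref{20120910:lem1} at one of the four stages. The minimality of the fractional decompositions and the non-degeneracy assumptions on the leading coefficients of $A,B,C,D$ do not enter in any subtle way beyond ensuring that Lemma \ref{20120910:lem1} is applicable and that $F,G$ can be chosen in $\mc V^\ell$.
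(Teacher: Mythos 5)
Your proof is correct and follows essentially the same route as the paper's: both unpack the two association relations via the fixed minimal fractional decompositions and then apply Lemma \ref{20120910:lem1}(b)--(c) four times along the chain $P\to F\to \frac{\delta h}{\delta u}\to G\to Q$, with each term of the max in \eqref{20120911:eq1} activating exactly one step. The only cosmetic difference is the naming of the intermediate elements $F,G$.
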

\begin{proof}
By definition, the relations \eqref{20120910:eq3} amount to the existence of elements
$F,G\in\mc V^\ell$ such that
$$
CG=P
\,\,,\,\,\,\,
DG=\frac{\delta h}{\delta u}
\,\,,\,\,\,\,
BF=\frac{\delta h}{\delta u}
\,\,,\,\,\,\,
AF=Q
\,.
$$
Since $C$ has non-degenerate leading coefficient 
and $\dord(P)=\dord(CG)>\dord(C)$, 
we get by Lemma \ref{20120910:lem1}(c) that 
$$
\dord(G)=\dord(CG)-|C|=\dord(P)-|C|\,.
$$
Next, since by assumption $D$ has non-degenerate leading coefficient 
and $\dord(G)+|D|=\dord(P)-|C|+|D|=\dord(P)-|K|>\dord(D)$, 
we get by Lemma \ref{20120910:lem1}(b) that 
$$
\begin{array}{c}
\displaystyle{
\dord(\frac{\delta h}{\delta u})=\dord(DG)=\dord(G)+|D|=\dord(P)-|C|+|D|
} \\
\displaystyle{
=\dord(P)-|K|\,.
}
\end{array}
$$
Similarly, since, by assumption, $B$ has non-degenerate leading coefficient 
and $\dord(BF)=\dord(\frac{\delta h}{\delta u})=\dord(P)-|K|>\dord(B)$,
we get by Lemma \ref{20120910:lem1}(c) that 
$$
\dord(F)=\dord(BF)-|B|=\dord(\frac{\delta h}{\delta u})-|B|=\dord(P)-|K|-|B|\,.
$$
Finally, since, by assumption, $A$ has non-degenerate leading coefficient 
and $\dord(F)+|A|=\dord(P)-|K|=|B|+|A|=\dord(P)-|K|+|H|>\dord(A)$, 
we get by Lemma \ref{20120910:lem1}(b) that 
$$
\begin{array}{l}
\displaystyle{
\dord(Q)=\dord(AF)=\dord(F)+|A|=\dord(P)-|K|-|B|+|A|
}\\
\displaystyle{
=\dord(P)-|K|+|H|\,.
}
\end{array}
$$
\end{proof}

Recall from \cite[Theorem 5.1]{DSK12} that, 
given two compatible non-local Poisson structures $H$ and $K$, 
with $\det(K)\neq0$ ,
we get an infinite family of compatible non-local Poisson structures
given by $H^{[0]}=K$, and $H^{[s]}=(H\circ K^{-1})^{s-1}\circ H$, for $s\geq1$.
\begin{proposition}\label{20121018:rem}
Suppose that the sequences
$\{\tint h_n\}_{n\in\mb Z_+}\subset\mc F(H)\cap\mc F(K)$,
$\{P_n\}_{n\in\mb Z_+}\subset\mc H(H)\cap\mc H(K)$
satisfy the Lenard-Magri recursive relations
$\tint h_{n-1}\ass{H} P_n\ass{K}\tint h_n$ for every $n\in\mb Z_+$ 
(we let $\tint h_{-1}=\tint 0$).
Then, we have
$\tint h_n\ass{H^{[s]}} P_{n+s}$ for every $n,s\in\mb Z_+$.
Consequently, all the evolution equations $\frac{du}{dt_n}=P_n$
are Hamiltonian with respect to all the non-local Poisson structures $H^{[s]},\,s\in\mb Z_+$.
\end{proposition}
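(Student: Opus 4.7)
The plan is to apply Proposition \ref{20121018:prop1} to a natural multi-term factorization of $H^{[s]}$. Fix minimal fractional decompositions $H = AB^{-1}$ and $K = CD^{-1}$; since $K^{-1} = DC^{-1}$, for every $s \geq 1$ we have
$$
H^{[s]} = (AB^{-1})(DC^{-1})(AB^{-1}) \cdots (DC^{-1})(AB^{-1}),
$$
with $s$ copies of $AB^{-1}$ alternating with $s-1$ copies of $DC^{-1}$. Both $B$ and $C$ are non-degenerate (the latter because $C = KD$ is a product of non-degenerate pseudodifferential operators), so this is a valid factorization of the form \eqref{20121018:eq1}. By Remark \ref{20120201:rem3}, the hypothesis $\tint h_{m-1} \ass{H} P_m$ (with the convention $\tint h_{-1} = \tint 0$) produces, for every $m \geq 0$, an element $F_m \in \mc V^\ell$ with $BF_m = \delta h_{m-1}/\delta u$ and $AF_m = P_m$; likewise $P_m \ass{K} \tint h_m$ produces $G_m \in \mc V^\ell$ with $CG_m = P_m$ and $DG_m = \delta h_m/\delta u$.

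The next step is to assemble the sequence required by Proposition \ref{20121018:prop1} by interleaving the $F_m$'s and $G_m$'s. Set $\hat F_{2k-1} := F_{n+s-k+1}$ for $k = 1, \ldots, s$, and $\hat F_{2k} := G_{n+s-k}$ for $k = 1, \ldots, s-1$. One then checks directly that $A\hat F_1 = AF_{n+s} = P_{n+s}$; that for $k = 1, \ldots, s-1$, both $B\hat F_{2k-1}$ and $D\hat F_{2k}$ equal $\delta h_{n+s-k}/\delta u$, while both $C\hat F_{2k}$ and $A\hat F_{2k+1}$ equal $P_{n+s-k}$; and finally that $B\hat F_{2s-1} = BF_{n+1} = \delta h_n/\delta u$. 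These are exactly the relations \eqref{20121018:eq2}, so Proposition \ref{20121018:prop1} delivers $\tint h_n \ass{H^{[s]}} P_{n+s}$. The remaining case $s = 0$, where $H^{[0]} = K$ by definition, is nothing but the hypothesis $\tint h_n \ass{K} P_n$.

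For the stated consequence, fix $n \in \mb Z_+$ and any $s \in \{0, 1, \ldots, n\}$; the first part applied with $n$ replaced by $n-s$ gives $\tint h_{n-s} \ass{H^{[s]}} P_n$, which exhibits $\frac{du}{dt_n} = P_n$ as a Hamiltonian equation for $H^{[s]}$ with Hamiltonian functional $\tint h_{n-s}$. The proof presents no substantial obstacle once the correct factorization of $H^{[s]}$ is recognized: all the chain equalities required by Proposition \ref{20121018:prop1} hold tautologically, because the Lenard-Magri recursion alternates $H$- and $K$-couplings in exactly the pattern demanded by the decomposition of $H^{[s]}$.
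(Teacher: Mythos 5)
Your proof is correct and follows essentially the same route as the paper: both factor $H^{[s]}$ as the alternating product $AB^{-1}DC^{-1}\cdots AB^{-1}$ (with $2s-1$ factors) and feed the interleaved witnesses of the $H$- and $K$-associations, read in reverse order, into Proposition \ref{20121018:prop1}. The only differences are cosmetic (you name the two families of witnesses $F_m$ and $G_m$ separately, and you explicitly note the non-degeneracy of $C$, which the paper leaves implicit).
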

\begin{proof}
By assumption we have
$\tint h_n\ass{H}P_{n+1}\ass{K}\dots\ass{K}\tint h_{n+s-1}\ass{H}P_{n+s}$,
which means, by definition, that there exist $F_n,F_{n+1},\dots,F_{n+2s-2}\in\mc V^\ell$
such that
$$
\begin{array}{l}
\displaystyle{
\frac{\delta h_n}{\delta u}=BF_n
\,\,,\,\,\,\,
AF_{n+2i-2}=P_{n+i}=CF_{n+2i-1}
\,\,\text{ for } 1\leq i\leq s-1
\,,} \\
\displaystyle{
DF_{n+2i-1}=\frac{\delta h_{n+i}}{\delta u}=BF_{n+2i}
\,\,\text{ for } 1\leq i\leq s-1
\,\,,\,\,\,\,
AF_{n+2s-2}=P_{n+s}\,.
}
\end{array}
$$
On the other hand, the non-local Poisson structure $H^{[s]}$
admits the decomposition
$$
H^{[s]}=AB^{-1}DC^{-1}\dots AB^{-1}DC^{-1}AB^{-1}\,,
$$
i.e. it is of the form \eqref{20121018:eq1}
with $n=2s+1$,
and with $A_i=A$ and $B_i=B$ for odd $i$,
and $A_i=D$ and $B_i=C$ for even $i$.
The claim follows from Proposition \ref{20121018:prop1}.
\end{proof}

\subsection{Lenard-Magri schemes of S-type and C-type}
\label{sec:3.2}

Consider a Lenard-Magri scheme as in \eqref{20120907:eq2}.
We say that it is \emph{integrable} if it can be extended indefinitely,
and the linear spans of $\{\tint h_n\}_{n\in\mb Z_+}$ and $\{P_n\}_{n\in\mb Z_+}$
are infinite dimensional;
in this case, the corresponding hierarchy of evolution Hamiltonian equations
$\frac{du}{dt_n}=P_n,\,n\in\mb Z_+$ is integrable.
We say that the Lenard-Magri scheme \eqref{20120907:eq2}
is \emph{finite} if it can be extended indefinitely,
but in in any such infinite extension 
the linear span of $\{\tint h_n\}_{n\in\mb Z_+}$ or of $\{P_n\}_{n\in\mb Z_+}$
is finite dimensional.
Moreover, we say that the Lenard-Magri scheme \eqref{20120907:eq2}
is \emph{blocked} if it cannot be extended indefinitely,
namely, for some $n$, 
there is no $\tint h_n$ such that $P_n\ass{K}\tint h_n$,
or there is no $P_{n+1}$ such that $\tint h_n\ass{H}P_{n+1}$.

For an integrable Lenard-Magri scheme \eqref{20120907:eq2},
we say that it is of \emph{S-type} the differential orders of the elements $P_n$ can grow to infinity,
and it is of \emph{C-type} if the differential orders of the $P_n$'s are necessarily bounded.
It is easy to see that for an integrable Lenard-Magri scheme of S-type 
the order of the pseudodifferential $H$ should be greater than the order 
of the pseudodifferential $K$.
Indeed, since we have $P_n\ass{K}\tint h_n\ass{H}P_{n+1}$,
if $P_n$ has differential order large enough,
then, by Proposition \ref{20120910:prop}, 
$\dord(P_{n+1})=\dord(P_n)+\ord(H)-\ord(K)$.

\begin{remark}
This terminology in inspired by the terminology of Calogero,
who calls an integrable hierarchy of ``S-type'' if the differential orders 
of the canonical conserved densities are unbounded,
and of ``C-type'' otherwise (see \cite{MSS90,MS12}).
Note that, though, these two terminologies are closed but do not coincide. For example,
the linear equation $\frac{du}{dt}=u'''$ is C-integrable in Calogero's terminology,
but the corresponding Lenard-Magri scheme, considered for example in \cite{BDSK09},
is integrable of S-type.
\end{remark}

\section{Liouville type integrable systems}
\label{sec:3}

In this section $\mc V$ is a field of differential functions in $u$,
and we assume that $\mc V$ contains all the functions that we encounter
in our computations.

Recall from \cite[Example 4.6]{DSK12} that we have the following triple
of compatible non-local Poisson structures:
$$
L_1=\partial\,,\,\, 
L_2=\partial^{-1}\,,\,\, 
L_3=u'\partial^{-1}\circ u'\,. 
$$
Given two non-local Poisson structures $H$ and $K$ of the form
\begin{equation}\label{20121006:eq2}
H=a_1L_1+a_2L_2+a_3L_3
\,\,,\,\,\,\,
K=b_1L_1+b_2L_2+b_3L_3\,,
\end{equation}
with $a_i,b_i\in\mc C,\,i=1,2,3$,
we want to discuss the integrability of the corresponding Lenard-Magri scheme.

\subsection{Preliminary computations}
\label{sec:3.1}

First, we find a minimal fractional decomposition for the operators $H$ and $K$.
\begin{lemma}\label{lem:frac}
For $X=x_1L_1+x_2L_2+x_3L_3$, with $x_1,x_2,x_3\in\mc C$, we have
\begin{equation}\label{frac-liouv}
X=\Big[x_1\partial^2\circ\frac1{u''}\partial+\frac{x_2+x_3(u')^2}{u''}\partial-x_3u'\Big]
\Big[\partial\circ\frac1{u''}\partial\Big]^{-1}\,.
\end{equation}
The above fractional decomposition is minimal only for $x_2x_3\neq0$.
For $x_2\neq0,\,x_3=0$, 
the minimal fractional decomposition is 
\begin{equation}\label{frac-liouv1}
X=(x_1\partial^2+x_2)\partial^{-1}\,,
\end{equation}
for $x_2=0,\,x_3\neq0$ it is 
\begin{equation}\label{frac-liouv2}
X=\Big(x_1\partial\circ\frac1{u'}\partial+x_3u'\Big)
\Big(\frac1{u'}\partial\Big)^{-1}\,,
\end{equation}
and for $x_2=x_3=0$ it is 
$X=x_1\partial$.
\end{lemma}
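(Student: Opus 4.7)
The plan is a direct verification of \eqref{frac-liouv} by multiplying out $X\cdot B$, where $B = \partial\circ\frac{1}{u''}\partial$ is the proposed denominator, followed by a short kernel analysis in an arbitrary differential field extension to pin down when the decomposition is minimal.

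For the verification, I would compute $L_i\cdot B$ for $i=1,2,3$ separately. The pieces $L_1\cdot B = \partial^2\circ\frac{1}{u''}\partial$ and $L_2\cdot B = \frac{1}{u''}\partial$ are immediate from $\partial^{-1}\circ\partial = 1$. The only nontrivial piece is $L_3\cdot B$: using the commutation $u'\partial = \partial\circ u' - u''$ I rewrite
\[
u'\partial\circ\tfrac{1}{u''}\partial \;=\; \partial\circ\tfrac{u'}{u''}\partial - \partial,
\]
and then cancel the outer $u'\partial^{-1}$ against the leading $\partial$ on each term to obtain $L_3\cdot B = \frac{(u')^2}{u''}\partial - u'$. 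Summing with coefficients $x_1,x_2,x_3$ reproduces exactly the numerator in \eqref{frac-liouv}.

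For minimality I compute $\ker B$ in any differential field extension $\tilde{\mc V}$. From $Bf=0$ we get $\partial(\frac{f'}{u''})=0$, i.e.\ $f = c\,u' + c'$ with $c,c'$ in the field of constants $\tilde{\mc C}$, so $\ker B$ is two-dimensional. Evaluating the numerator $A$ on this space: the $x_1\partial^2\circ\frac{1}{u''}\partial$ term annihilates $u'$ (since $\partial(\frac{u''}{u''})=0$) and of course annihilates constants, while the remaining terms give $A(u') = x_2$ and $A(c') = -x_3 u' c'$. Hence $A(cu'+c') = c\,x_2 - c'\,x_3 u'$, which vanishes iff $c\,x_2 = c'\,x_3 = 0$. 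Thus $\ker A\cap\ker B = 0$ precisely when $x_2 x_3\neq 0$, so \eqref{frac-liouv} is minimal exactly in this case.

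For the three degenerate cases the simpler decompositions are obtained by cancelling a common right factor, and minimality reduces to a one-dimensional check. When $x_3 = 0$ both numerator and denominator of \eqref{frac-liouv} are right-multiples of $\frac{1}{u''}\partial$, and cancelling gives \eqref{frac-liouv1}; the new kernel is just $\tilde{\mc C}$, on which the numerator acts as multiplication by $x_2$, so minimality holds iff $x_2\neq 0$. When $x_2 = 0$, the same kind of computation as above with the denominator $B' = \frac{1}{u'}\partial$ (using $u'\partial^{-1}\circ u'\cdot\frac{1}{u'}\partial = u'$) yields \eqref{frac-liouv2}; again $\ker B' = \tilde{\mc C}$, on which the numerator gives $x_3 u'\cdot c$, so minimality holds iff $x_3\neq 0$. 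The case $x_2 = x_3 = 0$ is trivial. The main place one has to be careful is the $L_3\cdot B$ computation, where one must keep track of operator compositions with non-constant coefficients; everything else reduces to solving a second-order ODE for $\ker B$ and evaluating a first-order expression on the resulting two-dimensional space.
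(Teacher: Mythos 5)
Your proposal is correct: the identity $X\circ B=A$ checks out term by term (including the key computation $u'\partial^{-1}\circ u'\circ\partial\circ\frac1{u''}\partial=\frac{(u')^2}{u''}\partial-u'$), and your minimality analysis via $\ker B=\tilde{\mc C}\oplus\tilde{\mc C}u'$ with $A(cu'+c')=cx_2-c'x_3u'$ matches the paper's definition of minimal fractional decomposition. The paper dismisses this proof as ``Straightforward,'' and your argument is exactly the direct verification the authors intend, so there is nothing further to compare.
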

\begin{proof}
Straightforward.
\end{proof}

Later we will need the following simple facts concerning the numerators 
of the fractional decompositions for $X$.
\begin{lemma}\label{20120909:lem1}
\begin{enumerate}[(a)]
\item
For $x_1,x_2,x_3\in\mc C$, $x_1\neq0$, consider the equation
\begin{equation}\label{20120909:eq1}
\Big(x_1\partial^2\circ\frac1{u''}\partial+\frac{x_2+x_3(u')^2}{u''}\partial-x_3u'\Big)F=f\,,
\end{equation}
in the variables $F\in\mc V$ and $f\in\mc F$.
If $x_3\neq0$,
then all the solutions of equation \eqref{20120909:eq1} are
$$
F=\alpha u'
\,\,,\,\,\,\,
f=x_2\alpha
\,\,\text{ for some }\,\alpha\in\mc C\,,
$$
while if $x_3=0$, 
then all the solutions of equation \eqref{20120909:eq1} are
$$
F=\alpha u'+\beta (xu'-u)+\gamma
\,\,,\,\,\,\,
f=x_2(\alpha+\beta x)
\,\,\text{ for some }\,\alpha,\beta,\gamma\in\mc C\,.
$$
\item
For $x_1,x_2\in\mc C$, $x_1\neq0$, 
an element $F\in\mc V$ satisfies
\begin{equation}\label{20120909:eq2}
(x_1\partial^2+x_2)F\in\mc F
\end{equation}
if and only if $F\in\mc F$.
\item
For $x_1,x_3\in\mc C$, $x_1\neq0$, 
an element $F\in\mc V$ satisfies
\begin{equation}\label{20120909:eq3}
\Big(x_1\partial\circ\frac1{u'}\partial+x_3u'\Big)F\in\mc V_1
\end{equation}
if and only if $F\in\mc V_0$ and $F'=\frac{\partial F}{\partial u}u'$.
\end{enumerate}
\end{lemma}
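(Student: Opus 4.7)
All three parts admit a uniform strategy: introduce a change of variable that exposes the leading structure of the operator, perform differential-order bookkeeping in the spirit of Lemma \ref{20120910:lem1} to force $F$ into a small-order subspace, and close with elementary algebra on the residual constraints. Throughout, the hypothesis $x_1\neq 0$ guarantees non-degenerate leading coefficients, and the converse implications (that the claimed $F$'s satisfy the claimed conditions) are in each case a direct computation.

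Part (b) is immediate: if $F\in\mc V_N\setminus\mc V_{N-1}$ with $N\geq 0$, then $\frac{\partial}{\partial u^{(N+2)}}(x_1\partial^2+x_2)F = x_1\frac{\partial F}{\partial u^{(N)}}\neq 0$, which is incompatible with membership in $\mc F$; hence $F\in\mc F$. For part (c), set $H:=F'/u'$ so that the LHS equals $x_1 H' + x_3 u' F$. If $F\in\mc V_N$ with $N\geq 1$, then $H\in\mc V_{N+1}$ with non-vanishing $\frac{\partial H}{\partial u^{(N+1)}} = (u')^{-1}\frac{\partial F}{\partial u^{(N)}}$, so $H'$ has differential order $N+2\geq 3$ while $u' F$ has order $\leq N$; this contradicts LHS $\in\mc V_1$. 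Hence $F\in\mc V_0$, i.e.\ $F=F(x,u)$, and then $H=F_x/u'+F_u$; a non-zero $F_x$ would make $\partial H'/\partial u''=-F_x/(u')^2$ non-zero, again contradicting LHS $\in\mc V_1$. So $F_x=0$, which is exactly the condition $F'=F_u u'$.

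For part (a), set $G:=F'/u''$ so that the LHS equals $x_1 G'' + x_2 G + x_3 u'(u'G - F)$, and proceed by iterated order reductions. First, $F\in\mc V_N$ with $N\geq 2$ would give $G\in\mc V_{N+1}$ and hence $G''\in\mc V_{N+3}$ with non-trivial leading coefficient $(u'')^{-1}\frac{\partial F}{\partial u^{(N)}}$, while $x_2 G$ and $x_3 u'(u'G - F)$ have order $\leq N+1$; this forces $F\in\mc V_1$, i.e.\ $F=F(x,u,u')$. Writing $G=(F_x+F_u u')/u'' + F_{u'}$, one iterates the same reasoning: the putative orders $2$, $1$, and $0$-but-not-$\mc F$ of $G$ are successively ruled out by inspecting $\partial\,\text{LHS}/\partial u^{(4)},\partial u^{(3)},\partial u^{(2)}$, yielding the simultaneous constraints $F_x+F_u u'=0$, $F_{u'u'}=0$, and $F_{u'}\in\mc F$. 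Hence $F=A(x)u'+B(x,u)$ with $A\in\mc F$; substituting into $F_x+F_u u'=0$ and matching powers of $u'$ gives $A_x+B_u=0$ and $B_x=0$, so $B=B(u)$ and $A_x=-B_u=:c\in\mc C$, i.e.\ $A=cx+d_1$ and $B=-cu+d_2$. This yields $F=\alpha u'+\beta(xu'-u)+\gamma$ with $(\alpha,\beta,\gamma)=(d_1,c,d_2)$. Substituting back, $G=\beta x+\alpha\in\mc F$, $G''=0$, and $u'G-F=\beta u-\gamma$, so LHS $= x_2(\alpha+\beta x)+x_3 u'(\beta u-\gamma)$; this lies in $\mc F$ iff $x_3(\beta u-\gamma)=0$, producing both classifications in the statement.

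The main obstacle is the iterated reduction in part (a): at each stage one must case-split on which of $F_x+F_u u'$, $F_{u'u'}$, and $(F_{u'})_u$ vanishes, identify which subterm of $G$ carries the dominant differential order, and verify that the leading $x_1 G''$ contribution cannot be cancelled by $x_2 G$ or $x_3 u'(u'G-F)$, both of which, at each stage, have strictly lower differential order than $G''$.
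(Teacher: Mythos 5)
Your proposal is correct and follows essentially the same route as the paper: in each part you bound the differential order of $F$ by differentiating the left-hand side with respect to a sufficiently high $u^{(n)}$ and invoking the commutation relation \eqref{eq:0.4}, and then solve the residual low-order constraints explicitly (the paper phrases the final quasiconstant step in (a) via $\partial\mc V\cap\mc V_0=\partial\mc F$ rather than by matching powers of $u'$, but this is only a cosmetic difference). The substitutions $G=F'/u''$ and $H=F'/u'$ are exactly the quantities the paper manipulates implicitly, so no new idea is introduced and none is missing.
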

\begin{proof}
If $n\geq2$ and $F\in\mc V$ solves equation \eqref{20120909:eq1}
and has differential order less than or equal to $n$, then,
using \eqref{eq:0.4}, we have
$$
0=\frac{\partial}{\partial u^{(n+3)}}LHS\eqref{20120909:eq1}
=x_1\frac1{u''}\frac{\partial F}{\partial u^{(n)}}\,,
$$
implying that $\frac{\partial F}{\partial u^{(n)}}=0$.
Hence $F$ must have differential order at most $1$.
Then we have
$$
0=\frac{\partial}{\partial u^{(4)}}LHS\eqref{20120909:eq1}
=x_1\Big(\frac1{u''}\frac{\partial F}{\partial u'}-\frac1{(u'')^2}F'\Big)\,,
$$
so that $F'=\frac{\partial F}{\partial u'}u''$.
But then equation \eqref{20120909:eq1} becomes
\begin{equation}\label{20120730:eq1}
x_1\Big(\frac{\partial F}{\partial u'}\Big)^{''}+(x_2+x_3(u')^2)\frac{\partial F}{\partial u'}-x_3u'F=f\,.
\end{equation}
If $\frac{\partial F}{\partial u'}$ has differential order $n\geq0$,
then applying $\frac{\partial}{\partial u^{(n+2)}}$ to both sides of equation \eqref{20120730:eq1}
we get $\frac{\partial^2 F}{\partial u^{(n)}\partial u'}=0$.
Hence, $\frac{\partial F}{\partial u'}=\varphi$ must be a quasiconstant.
In other words, $F=\varphi u'+f_0$, where $f_0\in\mc V_0$ has differential 
order less than or equal to $0$.
But then the condition $F'=\frac{\partial F}{\partial u'}u''$ becomes
$\varphi' u'+f_0'=0$.
This implies, using \eqref{20120907:eq1},
that $(f_0+\varphi'u)'=\varphi''u\in\partial\mc V\cap\mc V_0=\partial\mc F$.
So, necessarily, $\varphi''=0$.
Hence, $\varphi=\alpha+\beta x$
and $f_0=-\beta u+\gamma$, for some constants $\alpha,\beta,\gamma\in\mc C$.
Putting these results together, we have
$$
F=(\alpha+\beta x) u'-\beta u+\gamma\,,
$$
and plugging back into equation \eqref{20120730:eq1} we get
$$
x_2(\alpha+\beta x)+x_3\beta uu'-x_3\gamma u'=f\,.
$$
Since, by assumption, $f\in\mc F$,
we obtain $\beta=\gamma=0$ if $x_3\neq 0$,
completing the proof of part (a).

For part (b) we just observe that, if $F\in\mc V_n$ for some $n\geq0$
satisfies condition \eqref{20120909:eq2}, then
$$
0=\frac{\partial}{\partial u^{(n+2)}}(x_1F''+x_2F)
=x_1\frac{\partial F}{\partial u^{(n)}}\,.
$$
Hence, $F$ must be a quasiconstant.

Similarly, 
if $F\in\mc V_n$ for some $n\geq1$
satisfies condition \eqref{20120909:eq3}, then
$$
0=\frac{\partial}{\partial u^{(n+2)}}
\Big(x_1\partial\frac{F'}{u'}+x_3u'F\Big)
=\frac{x_1}{u'}\frac{\partial F}{\partial u^{(n)}}\,.
$$
Hence, $F$ must lie in $\mc V_0$.
Furthermore, 
$$
0=\frac{\partial}{\partial u''}
\Big(x_1\partial\frac{F'}{u'}+x_3u'F\Big)
=x_1\Big(-\frac{F'}{(u')^2}+\frac1{u'}
\frac{\partial F}{\partial u}\Big)\,.
$$
Hence, $F$ must be such that $F'=\frac{\partial F}{\partial u}u'$.
\end{proof}

Next, we compute the spaces $\mc F_0(X)$ and $\mc H_0(X)$
defined in \eqref{20120908:eq1} and \eqref{20120908:eq2}.
Here and further we use the following notation: given two constants $x_i,x_j\in\mc C$, we let
\begin{equation}\label{notation}
x_{ij}=\sqrt{-\frac{x_j}{x_i}}\,\in\mc C\,.
\end{equation}
(We assume that the field $\mc C$ contains all such elements).
\begin{lemma}\label{20120908:lem1}
For $X=x_1L_1+x_2L_2+x_3L_3$, we have:
\begin{enumerate}[(a)]
\item
\begin{enumerate}[]
\item
$\mc F_0(X)=\ker\big(\frac\delta{\delta u}\big)$
if $x_1x_2x_3\neq0$;
\item
$\mc F_0(X)=\mc C\tint e^{x_{12}x}u
+\mc C\tint e^{-x_{12}x}u+\ker\big(\frac{\delta}{\delta u}\big)$
if $x_1x_2\neq0,x_3=0$;
\item
$\mc F_0(X)=\mc C\tint e^{x_{13}u}
+\mc C\tint e^{-x_{13}u}+\ker\big(\frac{\delta}{\delta u}\big)$
if $x_1x_3\neq0,x_2=0$;
\item
$\mc F_0(X)=\mc C\tint u+\ker\big(\frac\delta{\delta u}\big)$
if $x_1\neq0$ and $x_2=x_3=0$;
\item
$\mc F_0(X)=\mc C\tint\sqrt{x_2+x_3(u')^2}+\ker\big(\frac\delta{\delta u}\big)$,
if $x_1=0,x_2x_3\neq0$;
\item
$\mc F_0(X)=\ker\big(\frac\delta{\delta u}\big)$ if $x_1=0$ and $x_2=0$ or $x_3=0$.
\end{enumerate}
\item
\begin{enumerate}[]
\item
$\mc H_0(X)=\mc C\oplus \mc Cu'$ if $x_2x_3\neq0$;
\item
$\mc H_0(X)=\mc C$ if $x_2\neq0,x_3=0$;
\item
$\mc H_0(X)=\mc Cu'$ if $x_2=0,x_3\neq0$;
\item
$\mc H_0(X)=0$ if $x_2=x_3=0$.
\end{enumerate}
\end{enumerate}
\end{lemma}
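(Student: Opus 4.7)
The plan is to go through each case by using the minimal fractional decomposition $X=AB^{-1}$ provided by Lemma \ref{lem:frac}, and then unwinding the definitions: by Definition \ref{20120124:def} and Remark \ref{20120201:rem3}, $Q\in\mc H_0(X)$ iff $Q=AF$ for some $F\in\mc V^\ell$ with $BF=0$, and $\tint g\in\mc F_0(X)$ iff $\frac{\delta g}{\delta u}=BF$ for some $F$ with $AF=0$. So the computation reduces, in each case, to (i) finding $\ker A$ or $\ker B$ acting on $\mc V$, and (ii) producing a local functional whose variational derivative equals the prescribed element $BF\in\mc V$.

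I would handle $\mc H_0(X)$ first, since it is easier. In the case $x_2x_3\neq 0$ the denominator from \eqref{frac-liouv} is $B=\partial\circ\tfrac1{u''}\partial$, so $BF=0$ forces $F=cu'+d$ with $c,d\in\mc C$; plugging into $A$ one gets $AF=cx_2-dx_3u'$, exhibiting $\mc C\oplus\mc Cu'$. The three remaining subcases use the simpler denominators $\partial$, $\tfrac1{u'}\partial$, and $1$ from \eqref{frac-liouv1}, \eqref{frac-liouv2} and the last line of Lemma \ref{lem:frac}; in each, $\ker B$ is either $\mc C$, $\mc C$, or $0$, and the corresponding image under $A$ gives the claimed $\mc C$, $\mc Cu'$, $0$.

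For $\mc F_0(X)$ the computation of $\ker A$ is where Lemma \ref{20120909:lem1} is used. When $x_1\neq0$ and $x_2x_3\neq0$, part (a) applied with $f=0$ and $x_3\neq0$ forces $F=\alpha u'$ with $x_2\alpha=0$, hence $F=0$ and $\tint g\in\ker(\delta/\delta u)$. When $x_1x_2\neq0,x_3=0$ the minimal numerator is $A=x_1\partial^2+x_2$; solving $AF=0$ in a suitable extension of $\mc V$ gives $F=\alpha e^{x_{12}x}+\beta e^{-x_{12}x}$, then $\frac{\delta g}{\delta u}=BF=F'$, and one checks $\frac{\delta}{\delta u}\int e^{\pm x_{12}x}u=e^{\pm x_{12}x}$, producing the stated $\mc C\tint e^{x_{12}x}u+\mc C\tint e^{-x_{12}x}u$. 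For $x_1x_3\neq0,x_2=0$, Lemma \ref{20120909:lem1}(c) reduces $AF=0$ to $F=\phi(u)$ with $\phi''=x_{13}^2\phi$, so $F=\alpha e^{x_{13}u}+\beta e^{-x_{13}u}$; then $\frac{\delta g}{\delta u}=BF=\phi'(u)$, and since $\frac{\delta}{\delta u}\tint e^{\pm x_{13}u}=\pm x_{13}e^{\pm x_{13}u}$, one recovers the claim. The case $x_1\neq0,x_2=x_3=0$ is $X=x_1\partial$, giving $F\in\mc C$ and $\tint g\in\mc C\tint u+\ker(\delta/\delta u)$. Finally, for $x_1=0,x_2x_3\neq0$ one verifies directly that $F=\sqrt{x_2+x_3(u')^2}$ satisfies $AF=0$, that the solution space of this first-order equation over $\mc C$ is one-dimensional, and that $\frac{\delta}{\delta u}\tint\sqrt{x_2+x_3(u')^2}=-\partial\frac{x_3u'}{\sqrt{x_2+x_3(u')^2}}=-BF/\alpha$ up to sign; and when $x_1=0$ with $x_2x_3=0$, $A$ is multiplication by a non-zero element, so $\ker A=0$ and $\mc F_0(X)=\ker(\delta/\delta u)$.

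The main obstacle will be showing that the listed solutions exhaust $\ker A$ when $A$ is a differential operator of order at most $2$ with coefficients depending on derivatives of $u$; this is precisely what Lemma \ref{20120909:lem1} is designed to achieve (by the differential order argument used in its proof). A minor technical point to handle is the enlargement of $\mc V$ to contain the transcendental solutions $e^{\pm x_{12}x}$, $e^{\pm x_{13}u}$ and $\sqrt{x_2+x_3(u')^2}$; this is justified by the discussion after the definition of algebra of differential functions, which allows us to adjoin smooth functions of elements of $\mc V$ as well as exponentials of primitives of logarithmic derivatives of elements of $\mc V$.
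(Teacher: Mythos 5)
Your proposal is correct and follows essentially the same route as the paper's proof: fix the minimal fractional decompositions from Lemma \ref{lem:frac}, reduce $\mc H_0(X)$ to computing $\ker$ of the denominator and $\mc F_0(X)$ to computing $\ker$ of the numerator via Lemma \ref{20120909:lem1}, and then integrate to identify the Hamiltonian functionals $\tint e^{\pm x_{12}x}u$, $\tint e^{\pm x_{13}u}$, $\tint u$, $\tint\sqrt{x_2+x_3(u')^2}$. The individual kernel and image computations you outline (e.g.\ $Y(cu'+d)=cx_2-dx_3u'$ in the case $x_2x_3\neq0$) agree with those in the paper.
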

\begin{proof}
First, let us find all elements $P\in\mc H_0(X)$. 
By Remark \ref{20120201:rem3},
if $X=YZ^{-1}$ is a minimal fractional decomposition,
we need to solve the following equations in $F,P\in \mc V$:
\begin{equation}\label{20120908:eq3}
ZF=0\,\,,\,\,\,\,P=YF\,.
\end{equation}
By Lemma \ref{lem:frac},
if $x_2=x_3=0$, then $Y=x_1\partial$ and $Z=1$, so the only solution 
of \eqref{20120908:eq3} is given by $F=0$, $P=0$.
If $x_2\neq0,\,x_3=0$, then $Y=x_1\partial^2+x_2$ and $Z=\partial$, so 
we get $F\in\mc C$ and $P\in\mc C$.
Similarly, if $x_2=0,\,x_3\neq0$, then $Y=x_1\partial\circ\frac1{u'}\partial+x_3u'$ 
and $Z=\frac1{u'}\partial$, so we get $F\in\mc C$ and $P\in\mc Cu'$.
Finally, if $x_2\neq0,\,x_3\neq0$, then 
$Y=x_1\partial^2\circ\frac1{u''}\partial+\frac{x_2+x_3(u')^2}{u''}\partial-x_3u'$
and $Z=\partial\circ\frac1{u''}\partial$.
Hence, the solutions of \eqref{20120908:eq3} are 
$F=\alpha+\beta u'\in\mc C\oplus\mc Cu'$,
and $P=YF=x_2\beta-x_3\alpha u'\in\mc Cu'$.
This proves part  (b).

Next, we find all elements $\tint f\in\mc F_0(X)$,
namely all solutions of the following equations in $F\in \mc V$ and $\tint f\in\mc V/\partial\mc V$:
\begin{equation}\label{20120908:eq4}
YF=0\,\,,\,\,\,\,\frac{\delta f}{\delta u}=ZF\,.
\end{equation}
If $x_1=0,x_2\neq0,x_3=0$, we have $Y=x_2$ is invertible,
and similarly, if $x_1=0,x_2=0,x_3\neq0$, we have $Y=x_3u'$ is invertible too.
In both these cases we thus have $F=0$, and hence 
$\tint f\in\ker\big(\frac\delta{\delta u}\big)$.
If $x_1=0,x_2\neq0,x_3\neq0$, then 
$Y=\frac{x_2+x_3(u')^2}{u''}\partial-x_3u'$ and $Z=\partial\circ\frac1{u''}\partial$.
The equation $YF=0$ has a one-dimensional (over $\mc C$) space of solution, spanned
by $F=\sqrt{x_2+x_3(u')^2}$.
Hence, all elements $\tint f\in\mc F_0(X)$ are obtained solving the equation
$$
\frac{\delta f}{\delta u}=\alpha\partial\circ\frac1{u''}\partial\sqrt{x_2+x_3(u')^2}
=\alpha\Big(\frac{x_3u'}{\sqrt{x_2+x_3(u')^2}}\Big)'\,,
$$
for $\alpha\in\mc C$. 
Its solutions are of the form
$\tint f=-\alpha\sqrt{x_2+x_3(u')^2}+k$,
where $k\in\ker\big(\frac\delta{\delta u}\big)$.
Next, 
if $x_1\neq0,x_2=x_3=0$, then $Y=x_1\partial$ and $Z=1$, so the equations \eqref{20120908:eq4}
give $F\in\mc C$ and $\tint f\in\mc C\tint u+\ker\big(\frac{\delta}{\delta u}\big)$.
If $x_1\neq0,x_2\neq0,\,x_3=0$, then $Y=x_1\partial^2+x_2$ and $Z=\partial$.
In this case, the first equation in \eqref{20120908:eq4} reads
$$
x_1F''+x_2F=0\,.
$$
By Lemma \ref{20120909:lem1}(b), it must be $F\in\mc F$, 
and it is easy to see that 
the space of solutions is two-dimensional over $\mc C$, 
consisting of elements of the form
$$
F=\alpha_+e^{x_{12}x}+\alpha_-e^{-x_{12}x}\,,
$$
with $\alpha_\pm\in\mc C$.
Then, the second equation in \eqref{20120908:eq4} gives
$$
\frac{\delta f}{\delta u}
=\alpha_+x_{12}e^{x_{12}x}
-\alpha_-x_{12}e^{-x_{12}x}\,,
$$
so that 
$\tint f=
\alpha_+x_{12} \tint e^{x_{12}x}u
-\alpha_-x_{12} \tint e^{-x_{12}x}u+k$,
where $k\in\ker\big(\frac{\delta}{\delta u}\big)$.
Similarly, we consider the case $x_1\neq0,x_2=0,x_3\neq0$.
In this case $Y=x_1\partial\circ\frac1{u'}\partial+x_3u'$ and $Z=\frac1{u'}\partial$.
The first equation in \eqref{20120908:eq4} reads
$$
x_1\Big(\frac{F'}{u'}\Big)'+x_3u'F=0\,.
$$
By Lemma \ref{20120909:lem1}(c),
we must have $F\in\mc V_0$ such that $F'=\frac{\partial F}{\partial u}$.
It is easy to see that
the space of solutions is two-dimensional over $\mc C$, and
it consists of elements of the form
$$
F=\alpha_+e^{x_{13}u}+\alpha_-e^{-x_{13}u}\,,
$$
with $\alpha_\pm\in\mc C$.
Then, the second equation in \eqref{20120908:eq4} gives
$$
\frac{\delta f}{\delta u}
=\alpha_+x_{13}e^{x_{13}u}
-\alpha_-x_{13}e^{-x_{13}u}\,,
$$
and its solutions for $\tint f$ are of the form
$\tint f=
\alpha_+\tint e^{x_{13}u}
+\alpha_-\tint e^{-x_{13}u}+k$,
for $k\in\ker\big(\frac{\delta}{\delta u}\big)$.
Finally, we are left to consider the case when
$x_1\neq0,x_2\neq0,\,x_3\neq0$.
In this case
$Y=x_1\partial^2\circ\frac1{u''}\partial+\frac{x_2+x_3(u')^2}{u''}\partial-x_3u'$
and $Z=\partial\circ\frac1{u''}\partial$.
The first equation in \eqref{20120908:eq4} reads
$$
x_1\Big(\frac{F'}{u''}\Big)''+(x_2+x_3(u')^2)\frac{F'}{u''}-x_3u'F=0\,.
$$
By Lemma \ref{20120909:lem1}(a),
the only solution of this equation is $F=0$.
But then the second equation in \eqref{20120908:eq4} 
gives $\tint f\in\ker\big(\frac{\delta}{\delta u}\big)$.
\end{proof}

In the statement of Lemma \ref{20120908:lem1} and further on in this section,
we assume that $\mc V$ contains all the elements which appear in the statement,
namely
$e^{x_{12}x}$, $e^{x_{13}u}$,
and $\sqrt{x_2+x_3(u')^2}$.

Next, for each element $\tint f\in\mc F_0(X)$, we want to find an element $P\in\mc H(X)$
which is $X$-associated to it,
and for each element $P\in\mc H_0(X)$, we want to find an element $\tint f\in\mc H(X)$
which is $X$-associated to it.
Recall, by Lemma \ref{20120907:lem1},
that if $\tint f\ass{X}P$,
then all elements in $\mc H(X)$ which are $X$-associated to $\tint f$
are obtained adding to $P$ an arbitrary element of $\mc H_0(X)$,
and all elements in $\mc F(X)$ which are $X$-associated to $P$
are obtained adding to $\tint f$ an arbitrary element of $\mc F_0(X)$.
\begin{lemma}\label{20120908:lem2}
Let $X=x_1L_1+x_2L_2+x_3L_3$, and let $a_2,a_3,\gamma\in\mc C\backslash\{0\}$. 
We have:
\begin{enumerate}[(i)]
\item
\begin{enumerate}[]
\item
$\nexists P\in\mc H(X)$ such that 
$\tint e^{\gamma x}u\ass{X}P$, if $x_3\neq0$;
\item
$\tint e^{\gamma x}u\ass{X}\frac1\gamma (x_1\gamma^2+x_2)e^{\gamma x}$, if $x_3=0$.
\end{enumerate}
\item
\begin{enumerate}[]
\item
$\nexists P\in\mc H(X)$ such that 
$\tint e^{\gamma u}\ass{X}P$, if $x_2\neq0$;
\item
$\tint e^{\gamma u}\ass{X}(x_1\gamma^2+x_3)e^{\gamma u}u'$, if $x_2=0$.
\end{enumerate}
\item
$\tint u\ass{X}(x_2x+x_3uu')$.
\item
$\tint\sqrt{a_2+a_3(u')^2}\ass{X}
-\Big(x_1\partial^2\circ\frac1{u''}\partial+\frac{x_2+x_3(u')^2}{u''}\partial-x_3u'\Big)\sqrt{a_2+a_3(u')^2}$.
\item
\begin{enumerate}[]
\item 
$\tint 0\ass{X}1$, if $x_2\neq0$;
\item
$\nexists \tint f\in\mc F(X)$ such that 
$\tint f\ass{X}1$, if $x_2=0,x_1x_3\neq0$;
\item
$\tint\frac1{2x_3u'}\ass{X}1$, if $x_1=0,x_2=0,x_3\neq0$;
\item
$\tint\frac{xu}{x_1}\ass{X}1$, if $x_1\neq0,x_2=0,x_3=0$.
\end{enumerate}
\item
\begin{enumerate}[]
\item 
$\tint 0\ass{X}u'$, if $x_3\neq0$;
\item
$\nexists \tint f\in\mc F(X)$ such that 
$\tint f\ass{X}u'$, if $x_3=0,x_1x_2\neq0$;
\item
$\tint\frac{-(u')^2}{2x_2}\ass{X}u'$, if $x_1=0,x_2\neq0,x_3=0$;
\item
$\tint\frac{u^2}{2x_1}\ass{X}u'$, if $x_1\neq0,x_2=0,x_3=0$.
\end{enumerate}
\end{enumerate}
\end{lemma}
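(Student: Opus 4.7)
The plan is to treat each item case by case, according to the vanishing pattern of $(x_1,x_2,x_3)$, and in each subcase use the minimal fractional decomposition $X=YZ^{-1}$ produced by Lemma \ref{lem:frac}. By Definition \ref{20120124:def}, the relation $\tint f\ass{X}P$ reduces to exhibiting $F\in\mc V$ with $ZF=\frac{\delta f}{\delta u}$ and $YF=P$; and the non-existence of such an $F$ precludes the association.

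For the positive items (iii), (iv), (v)(a,c,d), (vi)(a,c,d) and the existence cases in (i)(b) and (ii)(b), I would exhibit $F$ explicitly and verify both equalities by direct calculation. For instance, for (iii) with $x_2x_3\neq 0$ the choice $F=xu'-u$ satisfies $\partial\circ\frac{1}{u''}\partial F=\partial x=1=\frac{\delta\tint u}{\delta u}$, and the computation of $YF$ yields $x_2x+x_3uu'$ after the cubic-in-$u'$ terms cancel; analogous shorter choices $F=x$, $F=u$, $F=1$ handle the degenerate subcases. For (iv) the choice $F=-\sqrt{a_2+a_3(u')^2}$ gives $\partial\circ\frac{1}{u''}\partial F=-\partial\frac{a_3u'}{\sqrt{a_2+a_3(u')^2}}=\frac{\delta}{\delta u}\tint\sqrt{a_2+a_3(u')^2}$, and then $YF$ equals the stated $P$. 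Since $\tint f\ass{X}P$ is independent of the choice of fractional decomposition (Remark \ref{20120201:rem3}), the formula in (iv), written with the generic $Y$, also produces the correct $P$ in the degenerate subcases.

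For the non-existence parts (i)(a) and (ii)(a) the obstruction lies on the $\frac{\delta f}{\delta u}$ side. For (i)(a) one has $\frac{\delta\tint e^{\gamma x}u}{\delta u}=e^{\gamma x}$, and solving $ZF=e^{\gamma x}$ (for either of the decompositions arising when $x_3\neq 0$) would force $e^{\gamma x}u'\in\partial\mc V$; however $\frac{\delta}{\delta u}(e^{\gamma x}u')=-\gamma e^{\gamma x}\neq 0$, contradicting $\partial\mc V\subset\ker\bigl(\frac{\delta}{\delta u}\bigr)$. An identical variational test shows $\gamma e^{\gamma u}\notin\partial\mc V$, which blocks $ZF=\gamma e^{\gamma u}$ and yields (ii)(a).

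For (v)(b) and (vi)(b) the obstruction is in the equation $YF=P$ itself. In (vi)(b), $Y=x_1\partial^2+x_2$ has order $2$ and differential order $0$, so Lemma \ref{20120910:lem1}(b) gives $\dord(YF)=\dord(F)+2$ whenever $\dord(F)\geq 0$, which cannot match $\dord(u')=1$; the remaining case $F\in\mc F$ would force $F=u'/x_2\notin\mc F$, a contradiction. In (v)(b), the analogous count rules out $\dord(F)\geq 1$, and an explicit look at $F\in\mc V_0$ (where the coefficient of $u''/(u')^2$ in $YF$ forces $F_x=0$, reducing $YF=1$ to $u'\bigl(x_1F''(u)+x_3F(u)\bigr)=1$, impossible because the RHS is not divisible by $u'$), together with the quasiconstant case, completes the argument. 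The principal obstacle is the bookkeeping of many subcases, and within them the careful differential-order analysis needed to exclude $F$ in the non-existence parts.
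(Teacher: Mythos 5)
Your proposal is correct and matches the paper's own proof in both structure and substance: the same case analysis over the vanishing pattern of $(x_1,x_2,x_3)$, the same fractional decompositions from Lemma \ref{lem:frac}, the same explicit choices of $F$ (e.g.\ $F=xu'-u$ for (iii) and $F=-\sqrt{a_2+a_3(u')^2}$ for (iv)), and the same obstructions for the non-existence parts (the test $\partial\mc V\subset\ker\big(\frac{\delta}{\delta u}\big)$ for (i)(a) and (ii)(a), and differential-order counting for (v)(b) and (vi)(b)). The only difference is cosmetic: the paper packages the order analysis of the negative cases into Lemma \ref{20120909:lem1}, which you redo inline.
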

\begin{proof}
The condition $\tint e^{\gamma x}u\ass{X}P$
means that,
for some fractional decomposition $X=YZ^{-1}$,
there exists $F\in\mc V$ such that $P=YF$ and $ZF=e^{\gamma x}$.
Let us consider first the case $x_3\neq0$.
In this case, by Lemma \ref{lem:frac} a minimal fractional decomposition for $X$ is
\eqref{frac-liouv} if $x_2\neq0$, and \eqref{frac-liouv2} if $x_2=0$.
In the former case
$Z=\partial\circ\frac1{u''}\partial$, hence the equation 
$ZF=e^{\gamma x}$ reads
$\partial\frac{F'}{u''}=e^{\gamma x}$,
namely
$$
F'=\frac1\gamma e^{\gamma x}u''+\alpha u''\,,
$$
for some $\alpha\in\mc C$.
This equation has no solutions since, integrating by parts,
$\tint \Big(\frac1\gamma e^{\gamma x}u''+\alpha u''\Big)=\gamma\tint e^{\gamma x}u$,
and this is not zero by \eqref{20120907:eq1}.
Similarly, in the case $x_2=0$ we have 
$Z=\frac1{u'}\partial$, hence the equation 
$ZF=e^{\gamma x}$ reads
$$
F'=e^{\gamma x}u'\,,
$$
which has no solutions since, integrating by parts,
$\tint e^{\gamma x}u'=-\gamma\tint e^{\gamma x}u\neq0$.
To conclude the proof of part (i), we consider the case $x_3=0$.
By Lemma \ref{lem:frac} a fractional decomposition for $X$ is
$X=YZ^{-1}$ given by \eqref{frac-liouv1}.
Hence, a solution $F\in\mc V$ to the equation $ZF=e^{\gamma x}$ is $F=\frac1\gamma e^{\gamma x}$,
and in this case we have $P=YF=(x_1\partial^2)\frac1\gamma e^{\gamma x}
=\big(x_1\gamma+\frac{x_2}{\gamma}\big)e^{\gamma x}$.

Next, let us prove part (ii).
The condition $\tint e^{\gamma u}\ass{X}P$
is equivalent to the existence of $F\in\mc V$ such that $P=YF$ and $ZF=\gamma e^{\gamma u}$,
where $X=YZ^{-1}$.
Let us consider first the case $x_2\neq0$.
In this case, by Lemma \ref{lem:frac} a minimal fractional decomposition $X=YZ^{-1}$ for $X$ 
has $Z=\partial\circ\frac1{u''}\partial$ if $x_3\neq0$, and $Z=\partial$ if $x_3=0$.
In both cases the equation $ZF=\gamma e^{\gamma u}$
would imply $\gamma e^{\gamma u}\in\partial\mc V$, which is not the case by \eqref{20120907:eq1}.
In the case $x_2=0$,
a fractional decomposition for $X$ is
$X=YZ^{-1}$ given by \eqref{frac-liouv2}.
Hence, a solution $F\in\mc V$ to the equation $ZF=\gamma e^{\gamma u}$ 
is $F=e^{\gamma u}$,
and in this case we have 
$P=YF=\big(x_1\partial\circ\frac1{u'}\partial+x_3u'\big)e^{\gamma u}
=(x_1\gamma^2+x_3)e^{\gamma u}u'$.

For part (iii) it suffices to check, 
using the fractional decomposition \eqref{frac-liouv}, 
that $F=xu'-u\in\mc V$ is a solution of the equations
$$
\begin{array}{l}
\vphantom{\Big(}
ZF=\partial\frac{F'}{u''}=\frac{\delta}{\delta u}\tint u\,, \\
YF=\Big(x_1\partial^2\circ\frac1{u''}\partial+\frac{x_2+x_3(u')^2}{u''}\partial-x_3u'\Big)F=x_2x+x_3uu'\,.
\end{array}
$$

Similarly, for part (iv),
letting $F=-\sqrt{a_2+a_3(u')^2}\in\mc V$, we have
$$
\begin{array}{l}
\vphantom{\Big(}
ZF=\partial\frac{F'}{u''}=\frac{\delta}{\delta u}\tint \sqrt{a_2+a_3(u')^2}\,, \\
YF=-\Big(x_1\partial^2\circ\frac1{u''}\partial
+\frac{x_2+x_3(u')^2}{u''}\partial-x_3u'\Big)\sqrt{a_2+a_3(u')^2}\,.
\end{array}
$$

Next, let us prove part (v).
For $x_2\neq0$, 
consider the fractional decomposition $X=YZ^{-1}$ given by \eqref{frac-liouv}.
It is easy to check that, letting $F=\frac{u'}{x_2}$, we have
$$
\begin{array}{l}
ZF=\partial\circ\frac1{u''}\partial\frac{u'}{x_2}=0\,,\\
YF=\Big(x_1\partial^2\circ\frac1{u''}\partial
+\frac{x_2+x_3(u')^2}{u''}\partial-x_3u'\Big)\frac{u'}{x_2}=1\,.
\end{array}
$$
Hence, $\tint 0\ass{X}1$, as we wanted.
If $x_1\neq0,x_2=0,x_3\neq0$, a minimal fractional decomposition for $X$ is
$X=YZ^{-1}$ given by \eqref{frac-liouv2}.
Therefore the relation $\tint f\ass{X}1$ is equivalent
to the existence of $F\in\mc V$ such that
\begin{equation}\label{20120909:eq4}
ZF=\frac{F'}{u'}=\frac{\delta f}{\delta u}
\,\,,\,\,\,\,
YF=\Big(x_1\partial\circ\frac1{u'}\partial+x_3u'\Big)F=1\,.
\end{equation}
By Lemma \ref{20120909:lem1}(c),
the second equation in \eqref{20120909:eq4} implies that $F\in\mc V_0$ 
is such that $F'=\frac{\partial F}{\partial u}u'$.
In this case, the second equation in \eqref{20120909:eq4} reads
$$
x_1\partial\frac{\partial F}{\partial u}+x_3Fu'=1\,,
$$
which, by the commutation relation \eqref{eq:0.4},  is equivalent to 
$$
\Big(x_1\frac{\partial^2 F}{\partial u^2}+x_3F\Big)u'=1\,.
$$
But obviously, the above equation is never satisfied.
If $x_1=0,x_2=0,x_3\neq0$,
it is easy to check that $F=\frac1{x_3u'}$ solves
$$
YF=x_3u'\frac1{x_3u'}=1
\,\,,\,\,\,\,
ZF=\frac{1}{u'}\partial\frac1{x_3u'}=\frac{\delta}{\delta u}\tint\frac1{2x_3u'}\,,
$$
proving that $\tint\frac1{2x_3u'}\ass{X}1$.
Finally, $x_1\neq0,x_2=0,x_3=0$,
a minimal fractional decomposition $X=YZ^{-1}$ is given by $Y=x_1\partial$ and $Z=1$.
In this case, it is immediate to check that 
$F=\frac{x}{x_1}$ solves
$$
YF=x_1\partial\frac{x}{x_1}=1
\,\,,\,\,\,\,
ZF=\frac{x}{x_1}=\frac{\delta}{\delta u}\tint\frac{xu}{x_1}\,,
$$
proving that $\tint\frac{xu}{x_1}\ass{X}1$.

We are left to prove part (vi).
For $x_3\neq0$, 
consider the fractional decomposition $X=YZ^{-1}$ given by \eqref{frac-liouv}.
It is easy to check that, letting $F=\frac{-1}{x_3}$, we have
$$
\begin{array}{l}
ZF=\partial\circ\frac1{u''}\partial\frac{-1}{x_3}=0\,,\\
YF=\Big(x_1\partial^2\circ\frac1{u''}\partial
+\frac{x_2+x_3(u')^2}{u''}\partial-x_3u'\Big)\frac{-1}{x_3}=u'\,.
\end{array}
$$
Hence, $\tint 0\ass{X}u'$, as we wanted.
If $x_1\neq0,x_2\neq0,x_3=0$, the minimal fractional decomposition for $X$ is
$X=YZ^{-1}$ given by \eqref{frac-liouv1}.
Therefore the relation $\tint f\ass{X}u'$ is equivalent
to the existence of $F\in\mc V$ such that
\begin{equation}\label{20120909:eq5}
ZF=F'=\frac{\delta f}{\delta u}
\,\,,\,\,\,\,
YF=(x_1\partial^2+x_2)F=u'\,.
\end{equation}
If $F\in\mc V_n$, for $n\geq0$, we get, applying $\frac{\partial}{\partial u^{(n+2)}}$
to both sides of the second equation in \eqref{20120909:eq5}, 
that $\frac{\partial F}{\partial u^{(n)}}=0$.
Hence, it must be $F\in\mc F$. But in this case, the second equation in \eqref{20120909:eq5}
has clearly no solutions.
If $x_1=0,x_2\neq0,x_3=0$,
it is easy to check that $F=\frac{u'}{x_2}$ solves
$$
YF=x_2\frac{u'}{x_2}=u'
\,\,,\,\,\,\,
ZF=\partial\frac{u'}{x_2}=\frac{\delta}{\delta u}\tint\frac{-(u')^2}{2x_2}\,,
$$
proving that $\tint\frac{-(u')^2}{2x_2}\ass{X}u'$.
Finally, if $x_1\neq0,x_2=0,x_3=0$,
a minimal fractional decomposition $X=YZ^{-1}$ is given by $Y=x_1\partial$ and $Z=1$.
In this case, it is immediate to check that 
$F=\frac{u}{x_1}$ solves
$$
YF=x_1\partial\frac{u}{x_1}=u'
\,\,,\,\,\,\,
ZF=\frac{u}{x_1}=\frac{\delta}{\delta u}\tint\frac{u^2}{2x_1}\,,
$$
proving that $\tint\frac{u^2}{2x_1}\ass{X}u'$.
\end{proof}

In order the check orthogonality conditions \eqref{20120621:eq2}
for Liouville type integrable systems we will use the following results.
\begin{lemma}\label{20120909:lem2}
\begin{enumerate}[(a)]
\item
$(\mc C1)^\perp=\im(\partial)$;
\item
$(\mc Cu')^\perp=\im(\frac1{u'}\partial)$;
\item
$\big(\Span_{\mc C}\{1,u'\}\big)^\perp=\im(\partial\circ \frac1{u''}\partial)$;
\item
for $b_2,b_3\in\mc C\backslash\{0\}$, we have
$$
\big(\mc C\frac{\delta}{\delta u}\tint\sqrt{b_2+b_3(u')^2}\big)^\perp
=\im\Big(\frac{b_2+b_3(u')^2}{u''}\partial-b_3u'\Big)\,.
$$
\end{enumerate}
\end{lemma}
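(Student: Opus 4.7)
\emph{Proof proposal.}
The plan is to use the pairing explicitly: $\xi\in V^\perp$ if and only if $\xi\eta\in\partial\mc V$ for each $\eta\in V$. For each part the inclusion $\im(\text{operator})\subset V^\perp$ will follow from a direct integration-by-parts computation showing that the formal adjoint of the operator annihilates the generator of $V$; the reverse inclusion will require explicitly solving an equation in $\mc V$, exploiting that $\mc V$ is a field so that $u'$ and $u''$ are invertible.

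Parts (a)--(c) are essentially formal. In (a), $\xi\in(\mc C\cdot 1)^\perp$ means $\tint\xi=0$, equivalently $\xi\in\partial\mc V=\im(\partial)$. In (b), $\tint\xi u'=0$ says $\xi u'=f'$ for some $f\in\mc V$, and dividing by the invertible element $u'$ gives $\xi=\frac{1}{u'}\partial f$. In (c), the two orthogonality conditions give first $\xi=g'$; integration by parts turns the second into $\tint g u''=0$, so $gu''=h'$ for some $h\in\mc V$, whence $g=\frac{h'}{u''}$ and $\xi=\partial\circ\frac{1}{u''}\partial\, h$.

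Part (d) is the main case. I would set $g:=\sqrt{b_2+b_3(u')^2}$ and $A:=\frac{g^2}{u''}\partial-b_3u'$, and begin by computing $\varphi:=\frac{\delta}{\delta u}\tint g$. Since $g$ depends only on $u'$, with $\frac{\partial g}{\partial u'}=\frac{b_3u'}{g}$, one has $\varphi=-\partial\frac{b_3u'}{g}$; using $g'=\frac{b_3u'u''}{g}$ and the identity $g^2-b_3(u')^2=b_2$, this simplifies to $\varphi=-\frac{b_2b_3u''}{g^3}$. For the inclusion $\im(A)\subset(\mc C\varphi)^\perp$, take the formal adjoint $A^*=-\partial\circ\frac{g^2}{u''}-b_3u'$ and substitute: the two terms evaluate to $-\frac{b_2b_3^2u'u''}{g^3}$ and $+\frac{b_2b_3^2u'u''}{g^3}$ respectively, so they cancel and $A^*\varphi=0$; hence $\tint(AF)\varphi=\tint F\cdot A^*\varphi=0$ for every $F\in\mc V$.

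For the reverse inclusion in (d), assume $\tint\xi\varphi=0$ and seek $F\in\mc V$ with $AF=\xi$. A direct check using $g'=\frac{b_3u'u''}{g}$ gives $Ag=0$, so $g$ lies in the kernel of $A$. The variation-of-parameters ansatz $F=gH$ then reduces $AF=\xi$, after cancellation of the two $b_3u'gH$ contributions, to $\frac{g^3}{u''}H'=\xi$, i.e.\ $H'=\frac{u''\xi}{g^3}$. Since $\frac{u''}{g^3}=-\frac{\varphi}{b_2b_3}$, this right-hand side equals $-\frac{\xi\varphi}{b_2b_3}$, which lies in $\partial\mc V$ precisely by the assumption $\tint\xi\varphi=0$; so $H$ exists in $\mc V$ and $F=gH$ gives the required preimage (note $g\in\mc V$ by the standing convention of the section). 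The main obstacle is the cancellation in $A^*\varphi=0$, which is what makes the adjoint strategy work; once that identity is in hand, the rest is a routine first-order linear ODE argument.
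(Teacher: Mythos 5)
Your proposal is correct and follows essentially the same route as the paper: parts (a)--(c) by the same reduction to divisibility in the field $\mc V$, and in (d) the same two steps, namely the adjoint identity $\big(\partial\circ\frac{b_2+b_3(u')^2}{u''}+b_3u'\big)\partial\frac{u'}{\sqrt{b_2+b_3(u')^2}}=0$ for one inclusion and, for the other, dividing $\xi\varphi=\partial(\cdot)$ by the (invertible) element $\varphi$ and producing the explicit preimage. Your variation-of-parameters packaging via $A\sqrt{b_2+b_3(u')^2}=0$ is just a cleaner way of motivating the paper's substitution $g=h/\sqrt{b_2+b_3(u')^2}$, and yields the same preimage.
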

\begin{proof}
Parts (a) and (b) are immediate.
Let us prove part (c).
It is immediate to check, integrating by parts,
that $\tint (\alpha+\beta u')\partial\frac{f'}{u''}=0$ for every $\alpha,\beta\in\mc C$.
Hence, $\im(\partial\circ \frac1{u''}\partial)\subset\big(\Span_{\mc C}\{1,u'\}\big)^\perp$.
On the other hand, if
$f\in\big(\Span_{\mc C}\{1,u'\}\big)^\perp$,
it must be 
$$
f=\partial g=\frac1{u'}\partial h
\,\,\text{ for some }\, g,h\in\mc V\,.
$$
But then
$\partial h=u'\partial g=\partial(u'g)-u''g$,
which implies
$g=\frac1{u''}\partial(u'g-h)$.
Hence, 
$$
f=\partial \frac1{u''}\partial(u'g-h)\in\im(\partial\circ \frac1{u''}\partial)\,.
$$

We are left to prove part (d).
By definition of the variational derivative, we have
$$
-\frac1{b_3}\frac{\delta}{\delta u}\tint\sqrt{b_2+b_3(u')^2}
=\partial\frac{u'}{\sqrt{b_2+b_3(u')^2}}\,.
$$
The inclusion
$\im\Big(\frac{b_2+b_3(u')^2}{u''}\partial-b_3u'\Big)\subset
\Big(\mc C\partial\frac{u'}{\sqrt{b_2+b_3(u')^2}}\Big)^\perp$
follows by integration by parts, and the following straightforward identity
$$
\Big(\partial\circ\frac{b_2+b_3(u')^2}{u''}+b_3u'\Big)
\partial\frac{u'}{\sqrt{b_2+b_3(u')^2}}=0\,.
$$
We are left to prove the opposite inclusion.
If $f\in\Big(\mc C\partial\frac{u'}{\sqrt{b_2+b_3(u')^2}}\Big)^\perp$,
we have
$$
f=\frac{\partial g}{\partial\frac{u'}{\sqrt{b_2+b_3(u')^2}}}
=\frac{(b_2+b_3(u')^2)^{\frac32}}{b_2u''}\partial g
\,,
$$
for some $g\in\mc V$.
Letting $g=\frac{h}{\sqrt{b_2+b_3(u')^2}}$, we then get,
$$
\begin{array}{l}
\displaystyle{
f=\frac{(b_2+b_3(u')^2)^{\frac32}}{b_2u''}
\partial\frac{h}{\sqrt{b_2+b_3(u')^2}}
}\\
\displaystyle{
=\frac{(b_2+b_3(u')^2)^{\frac32}}{b_2u''}
\Big(
-\frac{b_3u'u''}{(b_2+b_3(u')^2)^{\frac32}}h
+\frac{1}{\sqrt{b_2+b_3(u')^2}}h'
\Big)
}\\
\displaystyle{
=
\Big(\frac{b_2+b_3(u')^2}{u''}\partial-b_3u'\Big)\frac{h}{b_2}\,.
}
\end{array}
$$
\end{proof}

\subsection{Integrability of the Lenard-Magri scheme in the case $b_1=0$}
\label{sec:3.3}

In this and the next two Sections we consider the case when $b_1=0$,
for which we get integrable Lenard-Magri schemes of S-type,
in the terminology introduced in Section \ref{sec:3.2}, in the case $a_1\neq0$ 
(described in Section \ref{sec:3.4}),
and of C-type in the case $a_1=0$ (described in Section \ref{sec:3.5}).
In Sections \ref{sec:3.6} we will consider the remaining case,
when $b_1\neq0$, for which we again get some integrable Lenard-Magri schemes of C-type.

According to Theorem \ref{th:lmscheme},
in order to apply successfully the Lenard-Magri scheme of integrability,
we need to find finite sequences $\{P_n\}_{n=0}^N$, $\{\tint h_n\}_{n=0}^N$,
satisfying all the relations \eqref{20120907:eq2}
and the orthogonality conditions \eqref{20120621:eq2}.
For $b_1=0$ we display below such sequences
separately in all possibilities for the coefficients $a_2,a_3,b_2,b_3$ being zero or non-zero,
and $a_1$ arbitrary.
(Note that, since we are assuming $b_1=0$, we don't need to consider the case $b_2=b_3=0$.)
\begin{enumerate}[(i)]
\item$b_2b_3\neq0,a_2a_3\neq0$:
$\tint0\ass{H}1\ass{K}\tint 0\ass{H}u'\ass{K}\tint\sqrt{b_2+b_3(u')^2}$.
\item$b_2b_3\neq0,a_2\neq0,a_3=0$:
$\tint0\ass{H}1\ass{K}\tint\sqrt{b_2+b_3(u')^2}$.
\item$b_2b_3\neq0,a_2=0,a_3\neq0$:
$\tint0\ass{H}u'\ass{K}\tint\sqrt{b_2+b_3(u')^2}$.
\item$b_2b_3\neq0,a_2=a_3=0$:
$\tint0\ass{H}0\ass{K}\tint\sqrt{b_2+b_3(u')^2}$.
\item$b_2\neq0,b_3=0,a_2a_3\neq0$:
$\tint0\ass{H}1\ass{K}\tint 0\ass{H}u'\ass{K}\tint \frac{-(u')^2}{2b_2}$.
\item$b_2\neq0,b_3=0,a_2\neq0,a_3=0$:
$\tint0\ass{H}1\ass{K}\tint 0$.
\item$b_2\neq0,b_3=0,a_2=0,a_3\neq0$:
$\tint0\ass{H}u'\ass{K}\tint \frac{-(u')^2}{2b_2}$.
\item$b_2\neq0,b_3=0,a_2=a_3=0$:
$\tint0\ass{H}0\ass{K}\tint 0$.
\item$b_2=0,b_3\neq0,a_2a_3\neq0$:
$\tint0\ass{H}u'\ass{K}\tint 0\ass{H}1\ass{K}\tint \frac1{2b_3u'}$.
\item$b_2=0,b_3\neq0,a_2\neq0,a_3=0$:
$\tint0\ass{H}1\ass{K}\tint \frac1{2b_3u'}$.
\item$b_2=0,b_3\neq0,a_2=0,a_3\neq0$:
$\tint0\ass{H}u'\ass{K}\tint 0$.
\item$b_2=0,b_3\neq0,a_2=a_3=0$:
$\tint0\ass{H}0\ass{K}\tint 0$.
\end{enumerate}
All the above $H$- and $K$-association relations hold due 
to Lemmas \ref{20120908:lem1} and \ref{20120908:lem2}.
Moreover, 
using Lemmas \ref{lem:frac} and \ref{20120909:lem2}
we check that both orthogonality conditions \eqref{20120621:eq2} hold.
Hence, by Theorem \ref{th:lmscheme}
all the above sequences can be continued indefinitely
to an infinite sequence
$$
\tint 0\ass{H}P_0\ass{K}\tint h_0\ass{H}P_1\ass{K}\tint h_1\ass{H}\dots\,.
$$
Note that, by Lemma \ref{20120907:lem1}, 
at each step the subsequent term is unique 
up to a linear combinations of the previous steps.

Next, we want to discuss integrability of the corresponding hierarchies of Hamiltonian equations
$\frac{du}{dt_n}=P_n,\,n\in\mb Z_+$.
Namely, we need to see when the vector spaces $\Span_{\mc C}\{\tint h_n\}\subset\mc V/\partial\mc V$
and $\Span_{\mc C}\{P_n\}\subset\mc V$ are infinite dimensional.

First, we consider the cases (vi), (viii), (xi) and (xii), where we show that
integrability does not occur (regardless of $a_1$ being zero or non-zero)
since the Lenard-Magri scheme repeats itself.
In case (vi), by Lemmas \ref{20120908:lem1} and \ref{20120908:lem2}
we have
$\mc H_0(H)=\mc C$ and, for every $\alpha\in\mc C$, 
$\{\tint f\in\mc F(K)\,|\,\tint f\ass{K}\alpha\}=\mc F_0(K)=\ker\big(\frac{\delta}{\delta u}\big)$.
Hence, any infinite sequence extending the given finite one will have
$\tint h_n\in\ker\big(\frac{\delta}{\delta u}\big)$ and $P_n\in\mc C$, for every $n\in\mb Z_+$.
Similarly, in case (xi) we have
$\mc H_0(H)=\mc Cu'$ and, for every $\alpha u'\in\mc Cu'$, 
$\{\tint f\in\mc F(K)\,|\,\tint f\ass{K}\alpha u'\}=\mc F_0(K)=\ker\big(\frac{\delta}{\delta u}\big)$.
Hence, any infinite sequence extending the given finite one will have
$\tint h_n\in\ker\big(\frac{\delta}{\delta u}\big)$ and $P_n\in\mc C u'$, for every $n\in\mb Z_+$.
In cases (viii) and (xii) we have 
$\mc H_0(H)=0$ and $\mc F_0(K)=\ker\big(\frac{\delta}{\delta u}\big)$.
Hence, 
$\tint h_n\in\ker\big(\frac{\delta}{\delta u}\big)$ and $P_n=0$, for every $n\in\mb Z_+$.
In conclusion, in all these cases $\Span_{\mc C}\{P_n\}$ is finite dimensional,
and integrability does not occur.

For the remaining 8 cases, 
we prove in Section \ref{sec:3.4} that 
when $a_1\neq0$ we get some integrable Lenard-Magri scheme of S-type,
and we prove in Section \ref{sec:3.5} that when $a_1=0$ 
we get some integrable Lenard-Magri scheme of C-type
(in the terminology of Section \ref{sec:3.2}).

\subsection{Integrable Lenard-Magri schemes of S-type: $b_1=0$ and $a_1\neq0$}
\label{sec:3.4}

\subsubsection*{Cases $(i), (ii), (iii), (iv)$}

In all the sequences (i)-(iv), after one or two steps, we arrive at (after shifting indices)
$\tint h_{-1}=\tint\sqrt{b_2+b_3(u')^2}$.
The next term in the sequence, which we denote $P_0$,
is obtained by solving the following equations for $F$ and $P_0$ in $\mc V$:
$$
\begin{array}{l}
\displaystyle{
BF=\partial\circ\frac1{u''}\partial F=\frac\delta{\delta u}\tint\tint\sqrt{b_2+b_3(u')^2}
\,,} \\
\displaystyle{
P_0=AF
=\Big(a_1\partial^2\circ\frac1{u''}\partial+\frac{a_2+a_3(u')^2}{u''}\partial-a_3u'\Big)F\,.
}
\end{array}
$$
It is easy to check that a solution is given by $F=-\sqrt{b_2+b_3(u')^2}$,
and
\begin{equation}\label{20120910:eq4}
\begin{array}{l}
\displaystyle{
P_0
=-\Big(\frac{a_1b_3u'}{\sqrt{b_2+b_3(u')^2}}\Big)^{\prime\prime}
+\frac{(a_3b_2-a_2b_3)}{\sqrt{b_2+b_3(u')^2}}
}\\
\displaystyle{
=
-\frac{a_1b_2b_3u'''}{(b_2+b_3(u')^2)^{\frac32}}
+3\frac{a_1b_2b_3^2u'(u'')^2}{(b_2+b_3(u')^2)^{\frac52}}
+\frac{(a_3b_2-a_2b_3)}{\sqrt{b_2+b_3(u')^2}}
\,.
}
\end{array}
\end{equation}

The above computation works regardless whether $a_1$ is zero or not.
But to prove that the Lenard-Magri scheme is integrable of S-type
we need to assume $a_1\neq0$,
in which case $\dord(P_0)=3$ is greater than
$\max\{\dord(A)-|H|+|K|,\,\dord(B)+|K|,\,\dord(C),\,\dord(D)+|K|\}$,
which is less than or equal to 2 (for all the cases (i)-(iv)).
By Proposition \ref{20120910:prop}
it follows that $\dord(P_n)=2n+3$ for every $n\geq0$.
In particular, the elements $P_n,\,n\in\mb Z_+$, are linearly independent.
Therefore,
by Theorem \ref{th:lmscheme} and Remark \ref{20120906:rem1},
each Hamiltonian PDE $\frac{du}{dt}=P_n,\,n\in\mb Z_+$,
is integrable, associated to an integrable Lenard-Magri scheme of S-type.
(Note that, since $\ker(B^*)\cap\ker(D^*)=\mc C\oplus\mc Cu'\neq0$,
we cannot conclude using Theorem \ref{th:lmscheme}
that $[P_m,P_n]=0$ for every $m,n\in\mb Z_+$,
and therefore that the sequence of equations $\frac{du}{dt_n}=P_n,\,n\in\mb Z_+$,
form a compatible hierarchy.
We conjecture, though, that this is the case.)

After rescaling $x$ and $t$ appropriately 
in the equation $\frac{du}{dt}=P_0$,
we conclude that the following bi-Hamiltonian equation
is integrable, associated to an integrable Lenard-Magri scheme of S-type:
\begin{equation}\label{beauty1}
\frac{du}{dt}
=
\frac{u'''}{(1+(u')^2)^{\frac32}}
-3\frac{u'(u'')^2}{(1+(u')^2)^{\frac52}}
+\frac{\alpha}{(1+(u')^2)^{\frac12}}
\,\,,\,\,\,\,\alpha\in\mc C\,.
\end{equation}
This is an equation of the form \cite[eq.(41.5)]{MSS90} with $a_3=(1+(u')^2)^{\frac12}$ 
(but in fact this particular $a_3$ does not appear in their list).

\subsubsection*{Cases $(v), (vii)$}

In the sequences (v) and (vii), after one or two steps, we arrive at (after shifting indices)
$\tint h_{-1}=\tint\frac{-(u')^2}{2b_2}$.
The next term in the sequence, which we denote $P_0$,
is obtained by solving the following equations for $F$ and $P_0$ in $\mc V$:
$$
\begin{array}{l}
\displaystyle{
BF=\partial\circ\frac1{u''}\partial F=\frac\delta{\delta u}\tint\tint\frac{-(u')^2}{2b_2}
\,,} \\
\displaystyle{
P_0=AF
=\Big(a_1\partial^2\circ\frac1{u''}\partial+\frac{a_2+a_3(u')^2}{u''}\partial-a_3u'\Big)F\,.
}
\end{array}
$$
It is easy to check that a solution is given by $F=\frac{(u')^2}{2b_2}$,
and
\begin{equation}\label{20120910:eq4b}
P_0=\frac{a_1}{b_2}u'''+\frac{a_2}{b_2}u'+\frac{a_3}{2b_2}(u')^3\,.
\end{equation}

As before, 
if $a_1\neq0$, $\dord(P_0)=3$ is greater than
$\max\{\dord(A)-|H|+|K|,\,\dord(B)+|K|,\,\dord(C),\,\dord(D)+|K|\}$,
which is at most 2.
By Proposition \ref{20120910:prop}
it follows that $\dord(P_n)=2n+3$ for every $n\geq0$.
In particular, the elements $P_n,\,n\in\mb Z_+$, are linearly independent.
Therefore,
by Theorem \ref{th:lmscheme} and Remark \ref{20120906:rem1},
each Hamiltonian PDE $\frac{du}{dt}=P_n,\,n\in\mb Z_+$,
is integrable, associated to an integrable Lenard-Magri scheme of S-type.

After rescaling $x$ and $t$ appropriately 
in the equation $\frac{du}{dt}=P_0$,
we conclude that the following bi-Hamiltonian equation
is integrable, associated to an integrable Lenard-Magri scheme of S-type:
\begin{equation}\label{beauty2}
\frac{du}{dt}
=
u'''+\epsilon u'+\alpha(u')^3\,,
\end{equation}
where $\epsilon$ is 1 (in case (v)) or 0 (in case (vii)) and $\alpha\in\mc C$.
By a Galilean transformation we can make $\epsilon=0$.
The resulting equation is called the potential modified KdV equation
(equation (4.11) in the list of \cite{MSS90}).

\subsubsection*{Cases $(ix), (x)$}

In all the sequences (ix) and (x), after one or two steps, we arrive at (after shifting indices)
$\tint h_{-1}=\tint\frac{1}{2b_3u'}$.
The next term in the sequence, which we denote $P_0$,
is obtained by solving the following equations for $F$ and $P_0$ in $\mc V$:
$$
\begin{array}{l}
\displaystyle{
BF=\partial\circ\frac1{u''}\partial F=\frac\delta{\delta u}\tint\tint\frac{1}{2b_3u'}
\,,} \\
\displaystyle{
P_0=AF
=\Big(a_1\partial^2\circ\frac1{u''}\partial+\frac{a_2+a_3(u')^2}{u''}\partial-a_3u'\Big)F\,.
}
\end{array}
$$
It is easy to check that a solution is given by $F=\frac{-1}{2b_3u'}$,
and
\begin{equation}\label{20120910:eq4c}
P_0=-\frac{a_1}{b_3}\frac{u'''}{(u')^3}+\frac{3a_1}{b_3}\frac{(u'')^2}{(u')^4}
+\frac{a_2}{2b_3}\frac1{(u')^2}+\frac{a_3}{b_3}
\,.
\end{equation}

If $a_1\neq0$, we have $\dord(P_0)=3$, which is greater than
$\max\{\dord(A)-|H|+|K|,\,\dord(B)+|K|,\,\dord(C),\,\dord(D)+|K|\}$,
which is at most 2.
By Proposition \ref{20120910:prop}
it follows that $\dord(P_n)=2n+3$ for every $n\geq0$.
In particular, the elements $P_n,\,n\in\mb Z_+$, are linearly independent.
Therefore,
by Theorem \ref{th:lmscheme} and Remark \ref{20120906:rem1},
each Hamiltonian PDE $\frac{du}{dt}=P_n,\,n\in\mb Z_+$,
is integrable, associated to an integrable Lenard-Magri scheme of S-type.

After rescaling $x$ and $t$ appropriately 
in the equation $\frac{du}{dt}=P_0$,
we conclude that the following bi-Hamiltonian equation 
is integrable, associated to an integrable Lenard-Magri scheme of S-type:
\begin{equation}\label{beauty3}
\frac{du}{dt}
=\frac{u'''}{(u')^3}-3\frac{(u'')^2}{(u')^4}
+\frac1{(u')^2}+\alpha
\,\,,\,\,\,\,
\alpha\in\mc C\,.
\end{equation}
As explained in \cite{MSS90}, by a point transformation one can reduce this equation
to an equation of the form (4.1.4) in their list.

\begin{remark}\label{20120911:rem1}
Note that equation $\frac{du}{dt}=P_0$ with $P_0$ given by \eqref{20120910:eq4c}
is transformed, by the hodograph transformation $x\mapsto u,\,u\mapsto-x$, 
to the equation with $P_0$ given by \eqref{20120910:eq4b}, after exchanging $a_2$ and $b_2$ 
with $a_3$ and $b_3$ respectively.
Equivalently, equation \eqref{beauty3} can be transformed to equation \eqref{beauty2}
up to rescaling of $x$ and $t$.
\end{remark}

\subsection{Integrable Lenard-Magri schemes of C-type with $a_1=b_1=0$}
\label{sec:3.5}

\subsubsection*{Cases $(i), (ii), (iii), (iv)$}

As pointed out above,
in all the sequences (i)-(iv) we arrive,
after one or two steps (and after shifting indices), at 
$\tint h_{-1}=\tint\sqrt{b_2+b_3(u')^2}$.
In the case $a_1=0$ we can actually find an explicit solution for the sequences
$\{\tint h_n\}_{n\in\mb Z_+}$ and $\{P_n\}_{n\in\mb Z_+}$
satisfying the recursive formulas 
\begin{equation}\label{20120912:eq2}
P_n\ass{K}\tint h_n\ass{H}P_{n+1}\,\,,\,\,\,\,\,n\in\mb Z_+\,.
\end{equation}
It is given by ($n\geq0$):
\begin{equation}\label{20120912:eq1}
\begin{array}{l}
\displaystyle{
P_n=
\sum_{k=0}^n\binom{n}{k}\frac{(2n-1-2k)!!}{(2n-2k)!!} \frac{\Delta^{n+1-k}a_3^k}{b_3^{n}}
\frac{-u'}{(b_2+b_3(u')^2)^{\frac12+n-k}}\,,
} \\
\displaystyle{
h_n=
\sum_{k=0}^n\binom{n}{k}\frac{(2n-1-2k)!!}{(2n-2k+2)!!} \frac{\Delta^{n+1-k}a_3^k}{b_3^{n+1}}
\frac{-u'}{(b_2+b_3(u')^2)^{\frac12+n-k}}\,,
}
\end{array}
\end{equation}
where $\Delta=a_2b_3-a_3b_2$, which is non-zero unless the operators $H$ and $K$ are proportional.
Here and further we let $(-1)!!=1$.

First, note that for $n=0$, the above expression for $P_0$ is the same 
as the one in \eqref{20120910:eq4} with $a_1=0$. 
Hence $\tint h_{-1}\ass{H}P_0$.
Next, we check that indeed the sequences $\{\tint h_n\}_{n\in\mb Z_+}$ and $\{P_n\}_{n\in\mb Z_+}$
solve the recursive relations \eqref{20120912:eq2}.
For this, we fix the fractional decompositions $H=AB^{-1}$ and $K=CD^{-1}$ given by 
$$
A=\frac{a_2+a_3(u')^2}{u''}\partial-a_3u'
\,\,,\,\,\,\,
C=\frac{b_2+b_3(u')^2}{u''}\partial-b_3u'
\,\,,
B=D=\partial\circ\frac1{u''}\partial\,.
$$
Since $B=D$, the relations \eqref{20120912:eq2} hold
if there exists an element $F_n$ such that
\begin{equation}\label{20120912:eq3}
CF_n=P_n
\,\,,\,\,\,\,
BF_n=\frac{\delta h_n}{\delta u}
\,\,,\,\,\,\,
P_{n+1}=AF_n\,.
\end{equation}
A solution $F_n$ to equations \eqref{20120912:eq3} is given by $F_n=-h_n$.
Since $h_n$ depends only on $u'$,
we have 
$$
B(-h_n)=-\partial\circ\frac1{u''}\partial h_n
=(-\partial)\frac{\partial h_n}{\partial u'}=\frac{\delta h_n}{\delta u}\,,
$$
hence $F_n=-h_n$ satisfies the second equation in \eqref{20120912:eq3}.
The first and third equations in \eqref{20120912:eq3} can be easily checked
using the following straightforward identities ($m\in\mb Z_+$):
$$
\begin{array}{l}
\displaystyle{
C\frac{1}{(b_2+b_3(u')^2)^{\frac m2}}=\frac{-b_3 m u'}{(b_2+b_3(u')^2)^{\frac m2}}
\,, } \\
\displaystyle{
A\frac{1}{(b_2+b_3(u')^2)^{\frac m2}}
=m\Delta\frac{-u'}{(b_2+b_3(u')^2)^{\frac m2 +1}}
+(m+1)a_3\frac{-u'}{(b_2+b_3(u')^2)^{\frac m2}}
\,. }
\end{array}
$$

If $\Delta\neq0$, all the elements $P_n$ are linearly independent,
therefore, by Theorem \ref{th:lmscheme} and Remark \ref{20120906:rem1},
each Hamiltonian PDE $\frac{du}{dt}=P_n,\,n\in\mb Z_+$,
is integrable, associated to an integrable Lenard-Magri scheme of C-type.

\subsubsection*{Cases $(v), (vii)$}

In the sequences (v) and (vii) we arrive,
after one or two steps (and after shifting indices), at 
$\tint h_{-1}=\tint\frac{-(u')^2}{2b_2}$.
In the case $a_1=0$ we can actually find an explicit solution for the sequences
$\{\tint h_n\}_{n\in\mb Z_+}$ and $\{P_n\}_{n\in\mb Z_+}$
satisfying the recursive formulas \eqref{20120912:eq2}.
It is given by ($n\in\mb Z_+$):
\begin{equation}\label{20120912:eq1b}
\begin{array}{l}
\displaystyle{
P_{n-1}=
\sum_{k=0}^{n}\binom{n}{k}\frac{(2k-1)!!}{(2k)!!} \frac{a_2^{n-k}a_3^k}{b_2^{n}}
(u')^{2k+1}\,,
} \\
\displaystyle{
h_{n-1}=
-\sum_{k=0}^{n}\binom{n}{k}\frac{(2k-1)!!}{(2k+2)!!} \frac{a_2^{n-k}a_3^k}{b_2^{n}}
(u')^{2k+2}\,.
}
\end{array}
\end{equation}

First, note that the above expression for $P_0$ is the same 
as the one in \eqref{20120910:eq4b} with $a_1=0$. 
Hence $\tint h_{-1}\ass{H}P_0$.
Next, we check that indeed the sequences $\{\tint h_n\}_{n\in\mb Z_+}$ and $\{P_n\}_{n\in\mb Z_+}$
solve the recursive relations \eqref{20120912:eq2}.
For this, we fix the fractional decompositions $H=AB^{-1}$ and $K=CD^{-1}$ given by 
$$
A=\frac{a_2+a_3(u')^2}{u''}\partial-a_3u'
\,\,,\,\,\,\,
B=\partial\circ\frac1{u''}\partial
\,\,,\,\,\,\,
C=b_2
\,\,,\,\,\,\,
D=\partial\,.
$$

The relations \eqref{20120912:eq2} hold
if there exist elements $F_n,G_n\in\mc V$ such that
\begin{equation}\label{20120912:eq3b}
CF_n=P_n
\,\,,\,\,\,\,
DF_n=\frac{\delta h_n}{\delta u}
\,\,,\,\,\,\,
BG_n=\frac{\delta h_n}{\delta u}
\,\,,\,\,\,\,
P_{n+1}=AG_n\,.
\end{equation}
Solutions $F_n,G_n$ to equations \eqref{20120912:eq3b} are given by 
$F_n=\frac1{b_2}P_n$ and $G_n=-h_n$.
The first and third equations in \eqref{20120912:eq3b} are immediate.
The third equation follows from the immediate identity 
$\frac{P_n}{b_2}=-\frac{\partial h_n}{\partial u'}$.
Finally, the fourth identity in \eqref{20120912:eq3b} is easily checked
using the Tartaglia-Pascal triangle.

Clearly, if $a_3\neq0$, all the elements $P_n$ are linearly independent,
therefore, by Theorem \ref{th:lmscheme} and Remark \ref{20120906:rem1},
each Hamiltonian PDE $\frac{du}{dt}=P_n,\,n\in\mb Z_+$,
is integrable, associated to an integrable Lenard-Magri scheme of C-type.

\subsubsection*{Cases $(ix), (x)$}

In the sequences (ix) and (x) we arrive,
after one or two steps (and shift of indices), at 
$\tint h_{-1}=\tint\frac{1}{2b_3u'}$.
If $a_1=0$ we can find an explicit solution for the sequences
$\{\tint h_n\}_{n\in\mb Z_+}$ and $\{P_n\}_{n\in\mb Z_+}$
satisfying the recursive formulas \eqref{20120912:eq2}.
It is given by ($n\in\mb Z_+$):
\begin{equation}\label{20120912:eq1c}
\begin{array}{l}
\displaystyle{
P_{n-1}=
\sum_{k=0}^{n}\binom{n}{k}\frac{(2k-1)!!}{(2k)!!} \frac{a_3^{n-k}a_2^k}{b_3^{n}}\frac1{(u')^{2k}}
\,, } \\
\displaystyle{
h_{n-1}=
\sum_{k=0}^{n}\binom{n}{k}\frac{(2k-1)!!}{(2k+2)!!} \frac{a_3^{n-k}a_2^k}{b_3^{n+1}}
\frac1{(u')^{2k+1}}
\,.}
\end{array}
\end{equation}

First, note that the above expression for $P_0$ is the same 
as the one in \eqref{20120910:eq4c} with $a_1=0$. 
Hence $\tint h_{-1}\ass{H}P_0$.
Next, we check that indeed the sequences $\{\tint h_n\}_{n\in\mb Z_+}$ and $\{P_n\}_{n\in\mb Z_+}$
solve the recursive relations \eqref{20120912:eq2}.
For this, we fix the fractional decompositions $H=AB^{-1}$ and $K=CD^{-1}$ given by 
$$
A=\frac{a_2+a_3(u')^2}{u''}\partial-a_3u'
\,\,,\,\,\,\,
B=\partial\circ\frac1{u''}\partial
\,\,,\,\,\,\,
C=b_3 u'
\,\,,\,\,\,\,
D=\frac1{u'}\partial\,.
$$
Since equations \eqref{20120912:eq3b} hold with
$F_n=\frac1{b_3u'}P_n$ and $G_n=-h_n$
(a fact that can be easily checked directly),
it follows that the recursive relations \eqref{20120912:eq2} hold.

Clearly, if $a_2\neq0$, all the elements $P_n$ are linearly independent,
therefore, by Theorem \ref{th:lmscheme} and Remark \ref{20120906:rem1},
each Hamiltonian PDE $\frac{du}{dt}=P_n,\,n\in\mb Z_+$,
is integrable, associated to an integrable Lenard-Magri scheme of C-type.

\subsection{Integrable Lenard-Magri scheme of C-type with $b_1\neq0$}
\label{sec:3.6}

As we did in the previous sections, we study here the integrability of the Lenard-Magri scheme
when $b_1\neq0$.
We will consider separately the various cases, depending on the parameters $b_2,b_3,a_2,a_3$
being zero or non-zero.

\subsubsection*{Case 1: $b_2b_3\neq0$}

Let us consider first the case when $b_2$ and $b_3$ are both non-zero.
If $\tint h_0\in\mc V/\partial\mc V$ and $P_0\in\mc V$ satisfy
the relations $\tint 0\ass{H}P_0\ass{K}\tint h_0$,
then, by Lemmas \ref{20120908:lem1} and \ref{20120908:lem2},
we necessarily have $P_0\in\mc C\oplus\mc Cu'$
and $\frac{\delta h_0}{\delta u}=0$.
Hence, any infinite sequence extending the given finite one will have
$\tint h_n\in\ker\big(\frac{\delta}{\delta u}\big)$ and $P_n\in\mc C\oplus\mc Cu'$, 
for every $n\in\mb Z_+$.
In other words, 
the Lenard-Magri scheme repeats itself
and integrability does not occur.

\subsubsection*{Case 2: $b_2\neq0,b_3=0,a_3=0$}

In the case when $b_1b_2\neq0$, $b_3=0$ and $a_3=0$,
we can find explicitly all possible solutions for the sequences
$\{\tint h_n\}_{n\in\mb Z_+}$ and $\{P_n\}_{n\in\mb Z_+}$
satisfying the Lenard-Magri recursive relations \eqref{20120907:eq3}.

In order to describe such solutions, we need to introduce some polynomials.
We let $p_n(x;A,\epsilon),\,q_n(x;A,\epsilon)$, $n\in\mb Z_+$,
be the sequences of polynomials, depending on the 
$2\times2$ matrix 
$A=\Big(\begin{array}{ll} a_1 & a_2 \\ b_1 & b_2 \end{array}\Big)$,
and on the sequence of constant parameters $\epsilon=(\epsilon_0,\epsilon_1,\dots)$,
defined by the following recursive relations:
$p_0(x;A,\epsilon)=0$, and
\begin{equation}\label{20121003:eq1}
\begin{array}{l}
\displaystyle{
p_{n+1}(x;A,\epsilon)
=\frac{a_1}{b_1}p_n(x;A,\epsilon)
+\frac{a_2b_1-a_1b_2}{b_1^2}q_n(x;A,\epsilon)
} \\
\displaystyle{
\Big(\frac{d^2}{dx^2}+2b_{12}\frac{d}{dx}\Big)q_n(x;A,\epsilon)=p_n(x;A,\epsilon)
\,.}
\end{array}
\end{equation}
Here and further, as before, we use the notation \eqref{notation} with $x_i$ and $x_j$ 
replaced by $b_i$ and $b_j$.
It is easy to see that, if $p(x)$ is a polynomial of degree $n$,
then a solution $q(x)$ of the differential equation
$q''(x)+2b_{12}q'(x)=p(x)$
is a polynomial of degree $n+1$,
defined uniquely up to an additive constant $\epsilon_0$.
Hence, at each step in the recursion \eqref{20121003:eq1},
the resulting polynomial $p_{n+1}(x)$ depends on the previous step $p_n(x)$
and on the choice of a constant parameter $\epsilon_{n+1}$.

With the above notation, all sequences 
$\{\tint h_n\}_{n\in\mb Z_+}$, $\{P_n\}_{n\in\mb Z_+}$,
satisfying the Lenard-Magri recursive relations \eqref{20120907:eq3},
are as follows:
\begin{equation}\label{20121003:eq2}
\begin{array}{l}
\displaystyle{
\vphantom{\Bigg(}
P_n=
p_n(x;A,\epsilon^+)e^{b_{12}x}
+p_n(-x;A,\epsilon^-)e^{-b_{12}x}+a_2\delta_n\,,
} \\
\displaystyle{
\vphantom{\Bigg(}
h_n=
\frac1{b_1} 
\Big(q_n^\prime(x;A,\epsilon^+)+b_{12}q_n(x;A,\epsilon^+)\Big)
e^{b_{12}x} u
} \\
\displaystyle{
\vphantom{\Bigg(}
\,\,\,\,\,\,\,\,\,
-\frac1{b_1} 
\Big(q_n^\prime(-x;A,\epsilon^-)+b_{12}q_n(-x;A,\epsilon^-)\Big)
e^{-b_{12}x} u
\,,
}
\end{array}
\end{equation}
where $\epsilon^\pm=(\epsilon^\pm_0,\epsilon^\pm_1,\dots)$
and $\delta=(\delta_0,\delta_1,\dots)$
are arbitrary sequences of constant parameters.

It is not hard to check that 
the sequences $\{\tint h_n\}_{n\in\mb Z_+}$ and $\{P_n\}_{n\in\mb Z_+}$
indeed solve the recursive relations \eqref{20120907:eq3},
and any solution of the recursive relations \eqref{20120907:eq3} is obtained in this way.
To conclude, we observe that,
since $\Delta=a_2b_1-a_1b_2$ is non-zero
(unless the operators $H$ and $K$ are proportional),
all the elements $P_n$ are linearly independent,
therefore, by Theorem \ref{th:lmscheme} and Remark \ref{20120906:rem1},
each Hamiltonian PDE $\frac{du}{dt}=P_n,\,n\in\mb Z_+$,
is integrable, associated to an integrable Lenard-Magri scheme of C-type.

\subsubsection*{Case 3: $b_2\neq0,b_3=0,a_3\neq0$}

In the case when $b_1b_2\neq0$, $b_3=0$ and $a_3\neq0$,
we have by Lemma \ref{20120908:lem1}(b) that 
$\mc H_0(H)=a_2\mc C\oplus\mc Cu'$.
On the other hand, by Lemma \ref{20120908:lem2}(vi) there is no
element $\tint f\in\mc F(K)$ such that $\tint f\ass{K}u'$.
Similarly, 
by Lemma \ref{20120908:lem1}(a) we have 
$\mc F_0(K)=\mc C\tint e^{b_{12}x}u
+\mc C\tint e^{-b_{12}x}u
+\ker\big(\frac\delta{\delta u}\big)$.
On the other hand, by Lemma \ref{20120908:lem2}(i) there is no
element $P\in\mc F(H)$ such that $\tint e^{\pm b_{12}x}u\ass{H}P$.

In conclusion,
the Lenard-Magri recursion scheme, in this case,
cannot be applied,
since the following finite sequences cannot be extended
to infinite sequences satisfying the conditions \eqref{20120907:eq3}:
$$
\tint 0\ass{H} u'\ass{K}\not\exists\tint f
\,\,,\,\,\,\,
\tint 0\ass{H} 0\ass{K}\tint e^{\pm b_{12}x}u
\ass{H}\not\exists P
\,.
$$

\subsubsection*{Case 4: $b_2=0,b_3\neq0,a_2=0$}

In the case when $b_1b_3\neq0$, $b_2=0$ and $a_2=0$,
we can find explicitly all possible solutions for the sequences
$\{\tint h_n\}_{n\in\mb Z_+}$ and $\{P_n\}_{n\in\mb Z_+}$
satisfying the Lenard-Magri recursive relations \eqref{20120907:eq3}.
They are as follows:
\begin{equation}\label{20121003:eq3}
\begin{array}{l}
\displaystyle{
\vphantom{\Bigg(}
P_n=
p_n(u;A,\epsilon^+)e^{b_{13}u}
+p_n(-u;A,\epsilon^-)e^{-b_{13}u}+a_3\delta_nu'\,,
} \\
\displaystyle{
\vphantom{\Bigg(}
\frac{\delta h_n}{\delta u}=
\frac1{b_1} 
\Big(q_n^\prime(u;A,\epsilon^+)+b_{13}q_n(u;A,\epsilon^+)\Big)
e^{b_{13}u}
} \\
\displaystyle{
\vphantom{\Bigg(}
\,\,\,\,\,\,\,\,\,
-\frac1{b_1} 
\Big(q_n^\prime(-u;A,\epsilon^-)+b_{13}q_n(-u;A,\epsilon^-)\Big)
e^{-b_{12}u}
\,,
}
\end{array}
\end{equation}
where 
$p_n(u;A,\epsilon^+)$ and $q_n(u;A,\epsilon^+)$
are the polynomials defined in \eqref{20121003:eq1},
depending on the 
matrix $A=\Big(\begin{array}{ll} a_1 & a_3 \\ b_1 & b_3 \end{array}\Big)$,
and on the sequences of constant parameters
$\epsilon^\pm=(\epsilon^\pm_0,\epsilon^\pm_1,\dots)$
and $\delta=(\delta_0,\delta_1,\dots)$.

It is not hard to check, as in case 1, that 
the sequences $\{\tint h_n\}_{n\in\mb Z_+}$ and $\{P_n\}_{n\in\mb Z_+}$
solve the recursive relations \eqref{20120907:eq3},
and any solution of the recursive relations \eqref{20120907:eq3} is obtained in this way.
We also observe that,
since $\Delta=a_3b_1-a_1b_3$ is non-zero
(unless the operators $H$ and $K$ are proportional),
all the elements $P_n$ are linearly independent,
therefore, by Theorem \ref{th:lmscheme} and Remark \ref{20120906:rem1},
each Hamiltonian PDE $\frac{du}{dt}=P_n,\,n\in\mb Z_+$,
is integrable, associated to an integrable Lenard-Magri scheme of C-type.

\subsubsection*{Case 5: $b_2=0,b_3\neq0,a_2\neq0$}

In the case when $b_1b_3\neq0$, $b_2=0$ and $a_3=0$,
we have by Lemma \ref{20120908:lem1}(b) that 
$\mc H_0(H)=\mc C\oplus a_3\mc Cu'$.
On the other hand, by Lemma \ref{20120908:lem2}(v) there is no
element $\tint f\in\mc F(K)$ such that $\tint f\ass{K}1$.
Similarly, 
by Lemma \ref{20120908:lem1}(a) we have 
$\mc F_0(K)=\mc C\tint e^{b_{13}u}
+\mc C\tint e^{-b_{13}u}
+\ker\big(\frac\delta{\delta u}\big)$.
On the other hand, by Lemma \ref{20120908:lem2}(ii) there is no
element $P\in\mc F(H)$ such that $\tint e^{\pm b_{13}x}u\ass{H}P$.

In conclusion,
the Lenard-Magri recursion scheme, in this case,
cannot be applied,
since the following finite sequences cannot be extended
to infinite sequences satisfying the conditions \eqref{20120907:eq3}:
$$
\tint 0\ass{H} 1\ass{K}\not\exists\tint f
\,\,,\,\,\,\,
\tint 0\ass{H} 0\ass{K}\tint e^{\pm b_{13}u}
\ass{H}\not\exists P
\,.
$$

\subsubsection*{Case 6: $b_2=b_3=0$}

In the case when $b_1\neq0$, which we set equal to $1$, and $b_2=b_3=0$,
we have different possibilities according to the constants $a_2$ and $a_3$ being zero or not.

If $a_2a_3\neq0$,
the Lenard-Magri recursion scheme cannot be applied.
Indeed, by Lemma \ref{20120908:lem1} we have 
$\mc H_0(H)=\mc C\oplus\mc Cu'$
and
$\mc F_0(K)=\mc C\tint u\oplus\ker\big(\frac{\delta}{\delta u}\big)$,
and, whichever way we start the finite sequences
$\{\tint h_n\}_{n=0}^N$, $\{P_n\}_{n=0}^N$ as in \eqref{20120907:eq2},
there is no way to extend them to non-trivial infinite sequences:
$$
\begin{array}{l}
\displaystyle{
\vphantom{\bigg(}
\tint 0\ass{H} 1\ass{K}\tint xu\ass{H}\not\exists P_1
\,,} \\
\displaystyle{
\vphantom{\bigg(}
\tint 0\ass{H} u'\ass{K}\tint\frac12 u^2\ass{H}\not\exists P_1
\,,} \\
\displaystyle{
\vphantom{\bigg(}
\tint 0\ass{H} 0\ass{K}\tint u\ass{H} a_2x+a_3uu' \ass{K} \not\exists \tint h_1
\,.} 
\end{array}
$$

Next, we consider the cases when exactly one element between $a_2$ and $a_3=0$ is zero.
Recall the sequence of polynomials $p_n(x;A,\epsilon)$ defined by the
recursive equations \eqref{20121003:eq1}.
In the case $b_1=1,b_2=0$, such equations reduce to
$p_0(x;a_1,a_2,\epsilon)=0$ and
\begin{equation}\label{20121003:eq1b}
p_{n+1}(x;a_1,a_2,\epsilon)
=\Big(\frac{a_1}{b_1}+\frac{a_2}{b_1}\Big(\frac{d}{dx}\Big)^{-2}\Big)p_n(x;a_1,a_2,\epsilon)
\,.
\end{equation}
Here $\Big(\frac{d}{dx}\Big)^{-2}\Big)$ means integrating twice with respect to $x$,
which is defined uniquely up to adding a linear term $\epsilon_{2n}+\epsilon_{2n+1}x$.
In particular, at each step the degree increases by two.

In the case $a_2\neq0,a_3=0$, 
it is not hard to prove that all the sequences 
$\{\tint h_n\}_{n\in\mb Z_+}$, $\{P_n\}_{n\in\mb Z_+}$,
satisfying the Lenard-Magri recursive relations \eqref{20120907:eq3},
are as follows:
\begin{equation}\label{20121003:eq2b}
P_n=p_n(x;a_1,a_2,\epsilon)+\delta_n
\,\,,\,\,\,\,
\tint h_n=\tint \Big(\frac{d}{dx}\Big)^{-1}p_n(x;a_1,a_2,\epsilon)u\,,
\end{equation}
where $\epsilon=(\epsilon_0,\epsilon_1,\dots)$ and $(\delta_0,\delta_1,\dots)$
are arbitrary sequence of constant parameters.
Since, obviously,
all the elements $P_n$ are linearly independent,
we conclude that each Hamiltonian PDE $\frac{du}{dt}=P_n,\,n\in\mb Z_+$,
is integrable, associated to an integrable Lenard-Magri scheme of C-type.

Similarly, in the case $a_2=0,a_3\neq0$, all the sequences 
$\{\tint h_n\}_{n\in\mb Z_+}$, $\{P_n\}_{n\in\mb Z_+}$,
satisfying the Lenard-Magri recursive relations \eqref{20120907:eq3},
are as follows:
\begin{equation}\label{20121003:eq2c}
P_n=p_n(u;a_1,a_2,\epsilon)u'+\delta_nu'
\,\,,\,\,\,\,
\tint h_n=\tint \Big(\frac{d}{du}\Big)^{-2}p_n(u;a_1,a_2,\epsilon)\,,
\end{equation}
where $\epsilon=(\epsilon_0,\epsilon_1,\dots)$ and $(\delta_0,\delta_1,\dots)$
are arbitrary sequences of constant parameters.
Again, we conclude that each Hamiltonian PDE $\frac{du}{dt}=P_n,\,n\in\mb Z_+$,
is integrable, associated to an integrable Lenard-Magri scheme of C-type.

\subsection{Summary}
\label{sec:3.7}

Let us summarize the results from the previous sections 
by listing all the possibilities for the pairs $H$ and $K$ as in \eqref{20121006:eq2},
and specifying,
using the terminology of Section \ref{sec:3.2},
whether the corresponding Lenard-Magri sequence
\begin{equation}\label{20121006:eq4}
\tint0\ass{H}P_0\ass{K}\tint h_0\ass{H} P_1\ass{K}\tint h_1\ass{H}
P_2\ass{K}\dots\,,
\end{equation}
is \emph{integrable of S-type},
i.e. the $\mc C$-span of the elements $P_n$'s and $\tint h_n$'s is infinite dimensional
and $H$ has order strictly greater than $K$ ($b_1=0,a_1\neq0$),
whether it is \emph{integrable of C$_1$-type},
i.e. the $\mc C$-span of the elements $P_n$'s and $\tint h_n$'s is infinite dimensional
and the orders of $H$ and $K$ are both equal to $-1$ ($b_1=a_1=0$),
whether it is \emph{integrable of C$_2$-type},
i.e. the $\mc C$-span of the elements $P_n$'s and $\tint h_n$'s is infinite dimensional
and $H$ has order less than or equal to $K$ and $K$ has order 1 ($b_1\neq0$),
whether it is of \emph{finite type},
i.e. the $\mc C$-span of the elements $P_n$'s or $\tint h_n$'s is necessarily finite dimensional,
or whether it is \emph{blocked},
i.e. there are choices of $\tint h_n$ or $P_n$ for which the scheme
cannot be continued.
This is the list of all possibilities:
\begin{itemize}
\item
\emph{integrable of S-type}:
\begin{enumerate}[(a)]
\item
$b_1=0$, $(b_2,b_3)\neq(0,0)$, $a_1a_2a_3\neq0$;
\item
$b_1=0$, $a_1\neq0$, 
and either $b_2\neq0$, $a_2=0$ and $(b_3,a_3)\neq(0,0)$,
or $b_3\neq0$, $a_3=0$ and $(b_2,a_2)\neq(0,0)$.
\end{enumerate}
\item
\emph{integrable of C$_1$-type}:
\begin{enumerate}[]
\item
$b_1=a_1=0$, 
and either $b_2a_3\neq0$ and $(b_3,a_2)$ arbitrary,
or $b_3a_2\neq0$ and $(b_2,a_3)$ arbitrary.
\end{enumerate}
\item
\emph{integrable of C$_2$-type}:
\begin{enumerate}[(a)]
\item
$b_1a_1\neq0$, 
and either $b_2=a_2=0$ and $(b_3,a_3)\neq(0,0)$,
or $b_3=a_3=0$ and $(b_2,a_2)\neq(0,0)$;
\item
$b_1\neq0$, $a_1=0$,
and either $b_2=a_2=0$ and $a_3\neq0$ (with $b_3$ arbitrary),
or $b_3=a_3=0$ and $a_2\neq0$ (with $b_2$ arbitrary).
\end{enumerate}
\item
\emph{finite type}:
\begin{enumerate}[(a)]
\item
$b_1b_2b_3\neq0$, $a_1=0$, $(a_2,a_3)\neq(0,0)$;
\item
$b_1=0$, $a_1\neq0$,
and either $b_2=a_2=0$ and $b_3\neq0$ (with $a_3$ arbitrary),
or $b_3=a_3=0$ and $b_2\neq0$ (with $a_2$ arbitrary).
\item[(c1)]
$b_1=a_1=0$,
and either $b_2=a_2=0$ and $b_3a_3\neq0$,
or $b_2a_2\neq0$ and $b_3=a_3=0$;
\item[(c2)]
$b_1b_2b_3\neq0, a_1a_2a_3\neq0$;
\item[(d)]
$b_1b_2b_3\neq0, a_1\neq0, a_2a_3=0$.
\end{enumerate}
\item
\emph{blocked}:
\begin{enumerate}[(a)]
\item
$b_1\neq0$, $a_1=0$, 
and either $b_2=0$, $a_2\neq0$ and $(b_3,a_3)\neq(0,0)$,
or $b_3=0$, $a_3\neq0$ and $(b_2,a_2)\neq(0,0)$;
\item
$b_1\neq0, b_2b_3=0, a_1a_2a_3\neq0$;
\item
$b_1a_1\neq0$, 
and either $b_2a_3\neq0$ and $b_3=a_2=0$,
or $b_2=a_3=0$ and $b_3a_2\neq0$.
\end{enumerate}
\end{itemize}

\subsection{Going to the left}
\label{sec:3.8}

Suppose we have an integrable Lenard-Magri sequence \eqref{20121006:eq4}
(of S, C$_1$ or $C_2$-type).
A natural question is whether this sequence can be continued to the left:
\begin{equation}\label{20121006:eq1}
\dots\ass{H}P_{-1}\ass{K}\tint0\ass{H}P_0\ass{K}\tint h_0\ass{H} P_1\ass{K}\tint h_1\ass{H}
P_2\ass{K}\dots\,.
\end{equation}
In this way we get some additional equations
compatible with the given hierarchy $\frac{du}{dt_n}=P_n,\,n\in\mb Z_+$,
and additional integrals of motion $\tint h_n,\,n=-1,-2,\dots$,
in involution with the given $\tint h_n$'s, with $n\geq0$.

Clearly, trying to extend the Lenard-Magri scheme \eqref{20121006:eq1} to the left
amounts to switching the roles of the non-local Poisson structures $H$ and $K$,
and to constructing the ``dual'' Lenard-Magri sequence
\begin{equation}\label{20121006:eq3}
\tint0\ass{K}P_{-1}\ass{H}\tint h_{-1}\ass{K} P_{-2}\ass{H}\tint h_{-2}\ass{K}
P_{-3}\ass{H}\dots\,.
\end{equation}
So, we need to study, for each possible choice of the parameters $a_i,b_i,i=1,2,3$,
what type of Lenard-Magri scheme we get when we switch all the 
coefficients $a_i$'s with the $b_i$'s.

By looking at the list of all possibilities in the previous section,
after switching the roles of $H$ and $K$ we have the following:
\begin{itemize}
\item
integrable of S-type (a) $\stackrel{H\leftrightarrow K}{\longleftrightarrow}$ finite type (a);
\item
integrable of S-type (b) $\stackrel{H\leftrightarrow K}{\longleftrightarrow}$ blocked (a);
\item
integrable of C$_1$-type $\stackrel{H\leftrightarrow K}{\longleftrightarrow}$ integrable of C$_1$-type;
\item
integrable of C$_2$-type (a) $\stackrel{H\leftrightarrow K}{\longleftrightarrow}$ 
integrable of C$_2$-type (a);
\item
integrable of C$_2$-type (b) $\stackrel{H\leftrightarrow K}{\longleftrightarrow}$ 
finite-type (b);
\item
finite-type (c) $\stackrel{H\leftrightarrow K}{\longleftrightarrow}$ finite-type (c);
\item
finite-type (d) $\stackrel{H\leftrightarrow K}{\longleftrightarrow}$ blocked (b);
\item
blocked (c) $\stackrel{H\leftrightarrow K}{\longleftrightarrow}$ blocked (c).
\end{itemize}


We are only interested in the integrable (S or C-type) Lenard-Magri schemes.
We see from the above list that, after exchanging the roles of $H$ and $K$,
three things can happen.
The ``dual'' Lenard-Magri scheme \eqref{20121006:eq3} can be of \emph{finite}-type
(this happens in the cases S(a) and C$_2$(b)).
In this situation continuing the Lenard-Magri scheme to the left 
we never get any new interesting integrals of motion or equations.


The second possibility is that the ``dual'' Lenard-Magri scheme \eqref{20121006:eq3} 
is of \emph{integrable}-type
(this happens in the cases C$_1$ and C$_2$(a)).
In this situation we can continue the Lenard-Magri scheme to the left indefinitely.
In other words, in each of these cases we can merge two integrable systems,
``dual'' to each other, to get one integrable system with twice as many integrals of motion
and equations.


The most interesting situation is when the ``dual'' Lenard-Magri scheme \eqref{20121006:eq3} 
is \emph{blocked} (which happens in the case S(b)).
In this case,
if the sequence \eqref{20121006:eq3} is blocked at $P_{-k},\,k\geq0$,
we obtain an integrable PDE which is not of evolutionary type.

\subsection{Non-evolutionary integrable equations}
\label{sec:3.9}

According to the previous discussion, 
we need to consider the case of integrable Lenard-Magri scheme
of S-type (b), which means the following 5 cases:
\begin{enumerate}
\item
$b_1=0, b_2\neq0, b_3\neq0, a_1\neq0, a_2\neq0, a_3=0$; 
\item
$b_1=0, b_2\neq0, b_3\neq0, a_1\neq0, a_2=0, a_3\neq0$; 
\item
$b_1=0, b_2\neq0, b_3\neq0, a_1\neq0, a_2=0, a_3=0$; 
\item
$b_1=0, b_2\neq0, b_3=0, a_1\neq0, a_2=0, a_3\neq0$; 
\item
$b_1=0, b_2=0, b_3\neq0, a_1\neq0, a_2\neq0, a_3=0$; 
\end{enumerate}

\subsubsection*{Case 1: $b_1=0, b_2\neq0, b_3\neq0, a_1\neq0, a_2\neq0, a_3=0$} 

This case gives, to the right, the Lenard-Magri scheme listed as case (ii) in Section \ref{sec:3.4},
while, after exchanging the roles of $H$ and $K$ we get, to the left,
the ``blocked'' Lenard-Magri scheme listed as case 3 in Section \ref{sec:3.6}.
Hence, overall, we get the following scheme:
$$
\begin{array}{l}
\vphantom{\Bigg(}
\displaystyle{
\not\exists P\ass{K}
\tint e^{\pm a_{12}x}u
\ass{H}0 \ass{K}\tint 0
\ass{H}1\ass{K}\tint\sqrt{b_2+b_3(u')^2}\ass{H}
} \\
\displaystyle{
\ass{H}
-\frac{a_1b_2b_3u'''}{(b_2+b_3(u')^2)^{\frac32}}
+3\frac{a_1b_2b_3^2u'(u'')^2}{(b_2+b_3(u')^2)^{\frac52}}
-\frac{a_2b_3}{\sqrt{b_2+b_3(u')^2}}
\ass{K}\dots
\,.
}
\end{array}
$$
Here and further $\pm$ means that we take arbitrary linear combination of the above expressions
with $+$ and with $-$.
Trying to solve naively for $P$ in the above scheme,
we get the following expression
$$
P=
\pm \frac{b_2}{a_{12}} e^{\pm a_{12}x}
+b_3u' \partial^{-1} \big(e^{\pm a_{12}x}u'\big)
\,.
$$
The meaning of the above expression for $P$
is that the following partial differential equation
is a member of the integrable hierarchy associated to the Lenard-Magri scheme of S-type (ii):
\begin{equation}\label{20121020:eq3}
\Big(\frac{u_t}{u_x}\Big)_x
=
\pm\frac{b_2}{a_{12}}
\Big(\frac1{u_x}e^{\pm a_{12}x}\Big)_x
+b_3 e^{\pm a_{12}x} u_x
\,.
\end{equation}

\subsubsection*{Case 2: $b_1=0, b_2\neq0, b_3\neq0, a_1\neq0, a_2=0, a_3\neq0$} 

This case gives, to the right, the Lenard-Magri scheme listed as case (iii) in Section \ref{sec:3.4},
while, after exchanging the roles of $H$ and $K$ we get, to the left,
the ``blocked'' Lenard-Magri scheme listed as case 5 in Section \ref{sec:3.6}.
Hence, overall, we get the following scheme:
$$
\begin{array}{l}
\vphantom{\Bigg(}
\displaystyle{
\not\exists P\ass{K}
\tint e^{\pm a_{13}u}
\ass{H}0 \ass{K}\tint 0
\ass{H}u'\ass{K}\tint\sqrt{b_2+b_3(u')^2}\ass{H}
} \\
\displaystyle{
\ass{H}
-\frac{a_1b_2b_3u'''}{(b_2+b_3(u')^2)^{\frac32}}
+3\frac{a_1b_2b_3^2u'(u'')^2}{(b_2+b_3(u')^2)^{\frac52}}
+\frac{a_3b_2}{\sqrt{b_2+b_3(u')^2}}
\ass{K}\dots
\,.
}
\end{array}
$$
Trying to solve naively for $P$ we get
$$
P=
\pm a_{13}b_2\partial^{-1}e^{\pm a_{13}u}
+b_3u'e^{\pm a_{13}u}
\,.
$$
This means that the following hyperbolic partial differential equation
is a member of the integrable hierarchy associated to the Lenard-Magri scheme of S-type (iii):
\begin{equation}\label{20121020:eq4}
u_{tx}=
\pm a_{13}b_2 e^{\pm a_{13}u}\pm\frac{b_3}{a_{13}} \big(e^{\pm a_{13}u}\big)_{xx}
\,.
\end{equation}

\subsubsection*{Case 3: $b_1=0, b_2\neq0, b_3\neq0, a_1\neq0, a_2=0, a_3=0$} 

This case gives, to the right, the Lenard-Magri scheme listed as case (iv) in Section \ref{sec:3.4},
while, after exchanging the roles of $H$ and $K$ we get, to the left,
the ``blocked'' Lenard-Magri scheme listed as case 6 in Section \ref{sec:3.6}.
Hence, overall, we get, depending on how we choose to continue the scheme to the left,
the following two possibilities (or any their linear combination):
$$
\begin{array}{l}
\vphantom{\Bigg(}
\displaystyle{
\not\exists P
\ass{K}
\frac{1}{a_1}\tint xu
\ass{H}
1
\ass{K}\tint 0\ass{H}0
\ass{K}\tint\sqrt{b_2+b_3(u')^2}\ass{H}
} \\
\displaystyle{
\ass{H}
-\frac{a_1b_2b_3u'''}{(b_2+b_3(u')^2)^{\frac32}}
+3\frac{a_1b_2b_3^2u'(u'')^2}{(b_2+b_3(u')^2)^{\frac52}}
\ass{K}\dots
\,,
}
\end{array}
$$
or
$$
\begin{array}{l}
\vphantom{\Bigg(}
\displaystyle{
\not\exists P
\ass{K}
\frac{1}{a_1}\tint \frac12 u^2
\ass{H}
u'
\ass{K}\tint 0\ass{H}0
\ass{K}\tint\sqrt{b_2+b_3(u')^2}\ass{H}
} \\
\displaystyle{
\ass{H}
-\frac{a_1b_2b_3u'''}{(b_2+b_3(u')^2)^{\frac32}}
+3\frac{a_1b_2b_3^2u'(u'')^2}{(b_2+b_3(u')^2)^{\frac52}}
\ass{K}\dots
\,.
}
\end{array}
$$
Trying to solve naively for $P$ we get, in the first case
$$
P=
\frac{b_2}{2a_1}x^2+\frac{b_3}{a1}xuu'-\frac{b_3}{a_1}u'\partial^{-1}u
\,,
$$
which corresponds to the following integrable non-evolutionary partial differential equation:
\begin{equation}\label{20121020:eq1}
\Big(\frac{u_t}{u_x}\Big)_x
=
\frac{b_2}{2a_1}\Big(\frac{x^2}{u_x}\Big)_x+\frac{b_3}{a_1}xu_x
\,.
\end{equation}
In the second case we get
$$
P=
\frac{b_2}{a_1}\partial^{-1}u+\frac{b_3}{2a_1}u^2u'
\,,
$$
which corresponds to the following integrable hyperbolic partial differential equation:
\begin{equation}\label{20121020:eq2}
u_{tx}
=
\frac{b_2}{a_1}u+\frac{b_3}{6a_1}(u^3)_{xx}
\,.
\end{equation}
In conclusion, both equations \eqref{20121020:eq1} and \eqref{20121020:eq2}
are members of the integrable hierarchy associated to the Lenard-Magri scheme of S-type (iv).

\subsubsection*{Case 4: $b_1=0, b_2\neq0, b_3=0, a_1\neq0, a_2=0, a_3\neq0$} 

This case gives, to the right, the Lenard-Magri scheme listed as case (vii) in Section \ref{sec:3.4},
while, after exchanging the roles of $H$ and $K$ we get, to the left,
the ``blocked'' Lenard-Magri scheme listed as case 5 in Section \ref{sec:3.6}.
Hence, overall, we get the following scheme:
$$
\begin{array}{l}
\vphantom{\Bigg(}
\displaystyle{
\not\exists P\ass{K}
\tint e^{\pm a_{13}u}
\ass{H}0 \ass{K}\tint 0
\ass{H}u'\ass{K}\tint\frac{-(u')^2}{2b_2}\ass{H}
} \\
\displaystyle{
\ass{H}
\frac{a_1}{b_2}u'''+\frac{a_3}{2b_2}(u')^3
\ass{K}\dots
\,.
}
\end{array}
$$
Trying to solve naively for $P$ we get
$$
P=
\pm a_{13}b_2\partial^{-1}e^{\pm a_{13}u}
\,.
$$
This means that the following hyperbolic partial differential equation
is a member of the integrable hierarchy associated to the Lenard-magri scheme of S-type (vii):
\begin{equation}\label{20121020:eq6}
u_{tx}=
\pm a_{13}b_2 e^{\pm a_{13}u}
\,.
\end{equation}
As expected, equation \eqref{20121020:eq6} is obtained by \eqref{20121020:eq4}
letting $b_3=0$.

\subsubsection*{Case 5: $b_1=0, b_2=0, b_3\neq0, a_1\neq0, a_2\neq0, a_3=0$} 

This case gives, to the right, the Lenard-Magri scheme listed as case (x) in Section \ref{sec:3.4},
while, after exchanging the roles of $H$ and $K$ we get, to the left,
the ``blocked'' Lenard-Magri scheme listed as case 3 in Section \ref{sec:3.6}.
Hence, overall, we get the following scheme:
$$
\begin{array}{l}
\vphantom{\Bigg(}
\displaystyle{
\not\exists P\ass{K}
\tint e^{\pm a_{12}x}u
\ass{H}0 \ass{K}\tint 0
\ass{H}1\ass{K}\tint\frac1{2b_3u'}\ass{H}
} \\
\displaystyle{
\ass{H}
-\frac{a_1}{b_3}\frac{u'''}{(u')^3}+\frac{3a_1}{b_3}\frac{(u'')^2}{(u')^4}
+\frac{a_2}{2b_3}\frac1{(u')^2}
\ass{K}\dots
\,.
}
\end{array}
$$
Trying to solve naively for $P$ in the above scheme,
we get the following expression
$$
P=
b_3u' \partial^{-1} \big(e^{\pm a_{12}x}u'\big)
\,,
$$
and the associated non-evolutionary partial differential equation is
\begin{equation}\label{20121020:eq5}
\Big(\frac{u_t}{u_x}\Big)_x
=
b_3 e^{\pm a_{12}x} u_x
\,.
\end{equation}
In conclusion, equation \eqref{20121020:eq5}
is a member of the integrable hierarchy associated to the Lenard-Magri scheme of S-type (x).
Note that this equation is obtained letting $b_2=0$ in equation \eqref{20121020:eq3}.

\subsubsection*{Conclusion} 

After rescaling the variables $u$, $x$ and $t$, or replacing $x$ by $x+$ const., or $u$ by $u+$ const.,
in equations \eqref{20121020:eq3}-\eqref{20121020:eq5},
we conclude that the following are all the integrable non-evolutionary partial differential equations
which are members of some integrable hierarchy of bi-Hamiltonian equations, 
with $H$ and $K$ as in \eqref{20121006:eq2}:
\begin{eqnarray}
&& u_{tx}
=
e^{u}-\alpha e^{-u}
+\epsilon(e^{u}-\alpha e^{-u})_{xx}
\,,\label{20121020:eq8}\\
&& \Big(\frac{u_t}{u_x}\Big)_x
=
(e^{x}-\alpha e^{-x}) u_x
+\epsilon \Big(\frac{e^{x}-\alpha e^{-x}}{u_x}\Big)_x
\,, \label{20121020:eq7}\\
&& u_{tx}
=
u+(u^3)_{xx}
\,, \label{20121020:eq10}\\
&& \Big(\frac{u_t}{u_x}\Big)_x
=
\Big(\frac{x^2}{u_x}\Big)_x+xu_x
\,,\label{20121020:eq9}
\end{eqnarray}
where $\alpha$ and $\epsilon$ are $0$ or $1$.

Recall that the case when $\epsilon=0$ equation \eqref{20121020:eq8}
is the Liouville equation when $\alpha=0$,
and the sinh-Gordon equation when $\alpha=1$, cf. \cite{Dor93}.
Equation \eqref{20121020:eq7} (respectively \eqref{20121020:eq9}) can be obtained from equation \eqref{20121020:eq8} (resp. \eqref{20121020:eq10})
by the hodograph transformation $u\mapsto x$, $x\mapsto -u$.
Equation \eqref{20121020:eq10} is called the ``short pulse equation'', \cite{SW02},
and its integrability was proved in \cite{SS04}.
Equations \eqref{20121020:eq8} with $\epsilon=1$
seems to be new.

\section{KN type integrable system}
\label{sec:4}

In this section $\mc V$ is a field of differential functions in $u$,
and, as usual, we assume that $\mc V$ contains all the functions that we encounter
in our computations.

Recall from \cite[Example 4.6]{DSK12} that the following
is a pair of compatible non-local Poisson structures:
$$
L_1=
u'\partial^{-1}\circ u' 
\,\,\text{ (Sokolov) }
\,\,,\,\,\,\,
L_2=
\partial^{-1}\circ u'\partial^{-1}\circ u'\partial^{-1}
\,\,\text{ (Dorfman) }
\,.
$$
We consider two non-local Poisson structures $H$ and $K$
which are linear combinations of $L_1$ and $L_2$:
$H=a_1L_1+a_2L_2$ and $K=b_1L_1+b_2L_2$.
As we have seen in the example of Liouville type integrable systems, 
discussed in Section \ref{sec:3}, 
integrable hierarchies associated to Lenard-Magri schemes of C type are usually 
not very interesting (cf. Sections \ref{sec:3.5} and \ref{sec:3.6}).
Hence, in this section, we will only consider integrable Lenard-Magri schemes of S-type
(in the terminology of Section \ref{sec:3.2}),
which is possible only when the order of the pseudodifferential operator $H$
is greater that the order of $K$,
namely when $a_1\neq0$ and $b_1=0$.
Therefore, we consider the following compatible pair of non-local structures:
\begin{equation}\label{20121006:eq2kn}
H=u'\partial^{-1}\circ u' + a\partial^{-1}\circ u'\partial^{-1}\circ u'\partial^{-1}
\,\,,\,\,\,\,
K=\partial^{-1}\circ u'\partial^{-1}\circ u'\partial^{-1}\,,
\end{equation}
with $a\in\mc C$.
We want to discuss the integrability of the corresponding Lenard-Magri scheme.

\subsection{Preliminary computations}
\label{sec:4.1}

Note that $K$ is the inverse of a differential operator, hence its minimal fractional decomposition is
$K=1D^{-1}$, where
\begin{equation}\label{frac-D}
D=\partial\circ\frac1{u'}\partial\circ\frac1{u'}\partial\,.
\end{equation}
We next find a minimal fractional decomposition for $H$.
It is given by the following
\begin{lemma}\label{lem:frac-kn}
For every $a\in\mc C$, we have $H=AB^{-1}$, where
\begin{equation}\label{frac-kn}
\begin{array}{l}
\displaystyle{
A=
\bigg(
\partial^2-2\frac{u''}{u'}\partial+\Big(\frac{u''}{u'}\Big)^\prime+a
\bigg)\circ
\frac{1}{D(u')}
\partial-u'
\,,}\\
\displaystyle{
B=
\partial\circ\frac1{u'}\partial\circ\frac1{u'}\partial\circ
\frac{1}{D(u')}
\partial
\,.} 
\end{array}
\end{equation}
Here and further, we have, recalling \eqref{frac-D},
\begin{equation}\label{20121015:eq3}
D(u')=\bigg(\frac1{u'}\Big(\frac{u''}{u'}\Big)^\prime\bigg)^\prime\,.
\end{equation}
The above fractional decomposition is minimal only for $a\neq0$.
For $a=0$, the minimal fractional decomposition for $H$ is 
$H=1S^{-1}$, where
\begin{equation}\label{frac-kn1}
S=\frac1{u'}\partial\circ\frac1{u'}\,.
\end{equation}
\end{lemma}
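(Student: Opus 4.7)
The plan is to verify the identity $HB = A$ by direct manipulation of pseudodifferential operators, and then analyze common right factors of $A$ and $B$ to pin down the minimal decomposition. First observe that $B = D\circ E$, where $D = \partial\circ\frac{1}{u'}\partial\circ\frac{1}{u'}\partial = K^{-1}$ and $E = \frac{1}{D(u')}\partial$. Setting $S = \frac{1}{u'}\partial\circ\frac{1}{u'}$, so that $u'\partial^{-1}\circ u' = S^{-1}$, and using $KD = 1$, we get
$$HB = (S^{-1}+aK)\,DE = S^{-1}DE+aE.$$
Since $aE = \frac{a}{D(u')}\partial$ exactly accounts for the $a$-contribution inside the bracket in the formula for $A$, it suffices to show
$$S^{-1}DE = \Big(\partial^2-2\tfrac{u''}{u'}\partial+\big(\tfrac{u''}{u'}\big)'\Big)\circ\tfrac{1}{D(u')}\partial-u'.$$

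This identity will be established using the two elementary relations $u'\partial = \partial u'-u''$ and $\partial^{-1}f\partial = f-\partial^{-1}f'$ in the pseudodifferential algebra. Expanding $S^{-1}D = u'\partial^{-1}u'\partial\circ\frac{1}{u'}\partial\circ\frac{1}{u'}\partial$ and applying the first identity gives $u'\partial^{-1}u'\partial = (u')^2 - u'\partial^{-1}u''$; the second identity, applied with $f = \frac{1}{u'}(\frac{u''}{u'})'$ (whose derivative is precisely $D(u')$), collapses the remaining $\partial^{-1}$ terms. The result is $S^{-1}D = M_0 - u'\partial^{-1}D(u')$ with $M_0 = \partial^2 - 2\frac{u''}{u'}\partial + (\frac{u''}{u'})'$, and composing with $E$ produces the desired formula thanks to the cancellation $-u'\partial^{-1}D(u')\cdot\frac{1}{D(u')}\partial = -u'$, giving $HB = A$.

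For minimality we invoke Theorem \ref{20120720:thm1}, which reduces the question to analyzing $\ker A\cap \ker B$ in differential field extensions. When $a=0$, the above calculation specializes to $A = S^{-1}B$, i.e.\ $B = SA$, so $A$ itself is a common right factor of $A$ and $B$; cancelling it leaves the decomposition $A_0 = 1$, $B_0 = S$, which is trivially minimal and yields \eqref{frac-kn1}. When $a\neq 0$, any $\xi\in\ker B$ satisfies $E\xi\in\ker D = \mc C\oplus\mc C u\oplus\mc C u^2$, hence $\partial\xi = D(u')(\alpha+\beta u+\gamma u^2)$ for some constants $\alpha,\beta,\gamma$; imposing also $A\xi = 0$ becomes $(M_0+a)(\alpha+\beta u+\gamma u^2) = u'\xi$, and a differential-order and polynomial comparison—where the constant term $a\alpha$ on the left has no counterpart on the right unless $\alpha = 0$, and the remaining coefficients are pinned down similarly—forces $\alpha=\beta=\gamma=0$ and hence $\xi=0$. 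The main obstacle is precisely this last step: verifying $HB = A$ is routine bookkeeping, but ruling out a common right factor for $a\neq 0$ requires explicitly parametrizing $\ker B$ and carefully matching the resulting overdetermined system, with the nonvanishing of $a$ playing an essential role in separating the constant mode.
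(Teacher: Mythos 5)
Your verification of the operator identity is essentially the paper's own computation: the authors likewise split off the coefficient of $a$ (observing that $\frac{1}{D(u')}\partial\cdot B^{-1}=\partial^{-1}\circ u'\partial^{-1}\circ u'\partial^{-1}$) and reduce the $a=0$ part to repeated use of the commutation rule $\partial^{-1}\circ f=f\partial^{-1}-\partial^{-1}\circ f'\partial^{-1}$, which is exactly the identity you apply in the form $\partial^{-1}\circ f\,\partial=f-\partial^{-1}\circ f'$; your intermediate formula $S^{-1}D=M_0-u'\partial^{-1}\circ D(u')$ with $M_0=\partial^2-2\frac{u''}{u'}\partial+\big(\frac{u''}{u'}\big)'$ checks out. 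Where you genuinely add something is the minimality discussion: the paper's proof stops at $AB^{-1}=S^{-1}+aD^{-1}$ and leaves both minimality claims unargued, whereas you reduce them via Theorem \ref{20120720:thm1} to computing $\ker A\cap\ker B$, and your $a=0$ analysis ($B=SA$, so the decomposition \eqref{frac-kn} has the nontrivial common right factor $A$, while $1\cdot S^{-1}$ is trivially minimal) is correct. For $a\neq0$ your argument is sound but has one small imprecision worth fixing: the condition $\partial\xi=D(u')(\alpha+\beta u+\gamma u^2)$ determines $\xi\in\ker B$ only up to an additive constant $\delta$ (indeed $\ker B$ is four-dimensional, cf.\ Lemma \ref{20121025:lem1}), and the clean way to finish is to record the exact identity $M_0(\alpha+\beta u+\gamma u^2)=u'(\alpha f_2+\beta f_3+\gamma f_4)$, after which $A\xi=0$ collapses to $a(\alpha+\beta u+\gamma u^2)=\delta u'$; a differential-order comparison then kills $\delta$ and hence $\alpha,\beta,\gamma$ precisely when $a\neq0$. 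Your phrase about the constant term $a\alpha$ having ``no counterpart'' gestures at this but is not yet a proof, since a priori the antiderivative $\xi$ could contribute matching terms; the displayed cancellation is what rules that out.
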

\begin{proof}
We need to prove that $AB^{-1}=S^{-1}+aD^{-1}$.
By looking at the coefficient of $a$ in $AB^{-1}$, we get
$$
\frac{1}{D(u')}
\partial
\Bigg(
\partial\circ\frac1{u'}\partial\circ\frac1{u'}\partial\circ
\frac{1}{D(u')}
\partial
\Bigg)^{-1}
=
\partial^{-1}
\circ u'
\partial^{-1}
\circ u'
\partial^{-1}
=D^{-1}\,.
$$
Letting $a=0$ in $AB^{-1}$, we have
$$
\begin{array}{l}
\displaystyle{
\Bigg(
\bigg(
\partial^2-2\frac{u''}{u'}\partial+\Big(\frac{u''}{u'}\Big)^\prime
\bigg)\circ
\frac{1}{D(u')}
\partial-u'
\Bigg)
\Bigg(
\partial\circ\frac1{u'}\partial\circ\frac1{u'}\partial\circ
\frac{1}{D(u')}
\partial
\Bigg)^{-1}
} \\
\displaystyle{
=\bigg(
\partial^2-2\frac{u''}{u'}\partial+\Big(\frac{u''}{u'}\Big)^\prime
-u'\partial^{-1}\circ D(u')
\bigg)\circ
\partial^{-1}u'\partial^{-1}\circ u'\partial^{-1}
} \\
\displaystyle{
=
\partial\circ u'\partial^{-1}\circ u'\partial^{-1}
-2u''\partial^{-1}\circ u'\partial^{-1}
+\Big(\frac{u''}{u'}\Big)^\prime\partial^{-1}u'\partial^{-1}\circ u'\partial^{-1}
} \\
\displaystyle{
-u'\partial^{-1}\circ D(u')\partial^{-1}u'\partial^{-1}\circ u'\partial^{-1}
=
u'\partial^{-1}\circ u'
+\bigg(
u'\partial^{-1}\circ \frac{u''}{u'}
} \\
\displaystyle{
-u''\partial^{-1}
+\Big(\frac{u''}{u'}\Big)^\prime\partial^{-1}u'\partial^{-1}
-u'\partial^{-1}\circ D(u')\partial^{-1}u'\partial^{-1}
\bigg)\circ u'\partial^{-1}
\,.}
\end{array}
$$
In the last identity we used the Leibniz rule for $\partial$:
$\partial\circ f=f\partial+f'$.
To conclude the proof, we need to check that the expression in parenthesis
in the RHS is zero:
\begin{equation}\label{20121025:eq1}
u'\partial^{-1}\circ \frac{u''}{u'}
-u''\partial^{-1}
+\Big(\frac{u''}{u'}\Big)^\prime\partial^{-1}u'\partial^{-1}
-u'\partial^{-1}\circ D(u')\partial^{-1}u'\partial^{-1}
=0\,.
\end{equation}
This identity is obtained applying repeatedly the commutation relation ($f\in\mc V$),
\begin{equation}\label{20121025:eq2}
\partial^{-1}\circ f=f\partial^{-1}-\partial^{-1}\circ f'\partial^{-1}\,,
\end{equation}
which is a consequence of the Leibniz rule for $\partial$,
and using the expression \eqref{20121015:eq3} for $D(u')$.
\end{proof}

In order to apply successfully the Lenard-Magri scheme of integrability we need to compute
the kernel of the operator $B$.
\begin{lemma}\label{20121025:lem1}
The kernel of the operator $B$ in \eqref{frac-kn}
is a 4-dimensional vector space over $\mc C$,
spanned by
$$
\begin{array}{l}
\displaystyle{
f_1=1
\,\,,\,\,\,\,
f_2=\frac1{u'}\Big(\frac{u''}{u'}\Big)^\prime
\,\,,\,\,\,\,
f_3=\frac{u}{u'}\Big(\frac{u''}{u'}\Big)^\prime-\frac{u''}{u'}
\,,} \\
\displaystyle{
f_4=
\frac{u^2}{u'}\Big(\frac{u''}{u'}\Big)^\prime
-2u\frac{u''}{u'}+2u'
\,.}
\end{array}
$$
\end{lemma}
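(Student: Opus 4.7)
The plan is to exploit the factorization $B = \partial \circ \frac{1}{u'}\partial \circ \frac{1}{u'}\partial \circ \frac{1}{D(u')}\partial$ to reduce $Bf = 0$ to a sequence of four first-order integrations. Since $B$ has order $4$ with non-degenerate leading coefficient $\frac{1}{(u')^2\,D(u')}$, its kernel in any differential field extension of $\mc V$ is at most $4$-dimensional over $\mc C$, so it suffices to recover four independent solutions in $\mc V$; the procedure below will in fact produce all of $\ker B$ explicitly.

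Peeling off one factor at a time and using that each outermost $\partial$ has kernel $\mc C$, the equation $Bf = 0$ becomes, successively,
\begin{equation*}
\tfrac{1}{u'}\partial\bigl(\tfrac{1}{u'}\partial\bigl(\tfrac{1}{D(u')}\partial f\bigr)\bigr) = c_1,
\qquad
\tfrac{1}{u'}\partial\bigl(\tfrac{1}{D(u')}\partial f\bigr) = c_1 u + c_2,
\end{equation*}
\begin{equation*}
\tfrac{1}{D(u')}\partial f = c_1\tfrac{u^2}{2} + c_2 u + c_3,
\qquad
\partial f = D(u')\Bigl(c_1\tfrac{u^2}{2} + c_2 u + c_3\Bigr),
\end{equation*}
for constants $c_1, c_2, c_3 \in \mc C$, with a further constant $c_0 \in \mc C$ arising from the last integration. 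Using $D(u') = \partial\bigl(\frac{1}{u'}(\frac{u''}{u'})'\bigr)$ and the companion identity $u' f_2 = (\frac{u''}{u'})'$, integration by parts (once for the linear term, twice for the quadratic) evaluates these last three integrals to $c_3 f_2$, $c_2 f_3$, and $\tfrac{1}{2} c_1 f_4$ respectively, while $c_0$ contributes $c_0 f_1$. Hence every element of $\ker B$ is a $\mc C$-linear combination of $f_1, f_2, f_3, f_4$, each step staying in $\mc V$ by the blanket assumption made at the start of the section.

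For linear independence over $\mc C$, I would inspect $\frac{\partial f_i}{\partial u'''}$: these are $0$, $\frac{1}{(u')^2}$, $\frac{u}{(u')^2}$, $\frac{u^2}{(u')^2}$ respectively, so any relation $\sum_i \alpha_i f_i = 0$ forces the polynomial $\alpha_2 + \alpha_3 u + \alpha_4 u^2$ (in the independent variable $u$) to vanish identically, giving $\alpha_2 = \alpha_3 = \alpha_4 = 0$ and then $\alpha_1 = 0$. The main, and only mildly nontrivial, obstacle is the bookkeeping of the integration-by-parts step that matches the four arbitrary integration constants with the specific explicit generators $f_1,\dots,f_4$; there is no genuine analytic difficulty, since $B$ is a composition of first-order operators whose kernel can be read off completely by the successive-integration procedure above.
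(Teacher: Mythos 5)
Your argument is correct, and it takes a genuinely different route from the paper. The paper's proof goes in the opposite direction: it simply verifies by direct substitution that each $f_i$ lies in $\ker B$, and then invokes the general fact that a differential operator of order $4$ with non-degenerate leading coefficient (here $\tfrac{1}{(u')^2 D(u')}$) has kernel of dimension at most $4$ over $\mc C$, so the four (implicitly independent) solutions must span. You instead \emph{solve} $Bf=0$ constructively, peeling off the first-order factors one at a time and carrying out the integrations by parts that turn $D(u')\cdot(c_1\tfrac{u^2}{2}+c_2u+c_3)$ into $\partial\big(\tfrac{c_1}{2}f_4+c_2f_3+c_3f_2\big)$; this derives the generators rather than verifying them, shows directly that they exhaust $\ker B$ without appealing to the dimension bound (which your argument renders redundant), and supplies the linear-independence check via $\partial/\partial u'''$ that the paper leaves tacit. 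The price is a longer computation; the paper's version is shorter but presupposes that the $f_i$ have already been guessed. One small point worth making explicit in your write-up: each integration step is forced because $\ker\partial=\mc C$ in the field $\mc V$ and because the intermediate quantities $\tfrac{1}{D(u')}\partial f$, etc., automatically lie in $\mc V$ for any $f\in\ker B\subset\mc V$, so no extension of $\mc V$ is ever needed. With that noted, your proof is complete.
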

\begin{proof}
It is immediate to check that all the elements $f_i$ are indeed in the kernel of $B$.
On the other hand, since $B$ has order 4, its kernel has dimension at most 4.
\end{proof}

\subsection{Applying the Lemard-Magri scheme for $a\neq0$}
\label{sec:4.2}

According to the Lenard-Magri scheme of integrability,
starting with $\tint h_{-1}=\tint0$,
we need to find sequences $\{\tint h_n\}_{n=0}^N$
and $\{P_n\}_{n=0}^N$
solving the recursion conditions \eqref{20120907:eq2}.


Since $C=1$, in order to find solutions of the scheme \eqref{20120907:eq2} for $N=3$,
we need to find elements $F_n,h_n,P_n\in\mc V,\,n=0,\dots,3$,
such that
$$
BF_{n}=\frac{\delta h_{n-1}}{\delta u}
\,\,,\,\,\,\,
P_n=AF_{n}
\,\,,\,\,\,\,
\frac{\delta h_n}{\delta u}=DP_n
\,,
$$
for all $n=0,1,2,3$ (we let, as usual, $\tint h_{-1}=\tint 0$).
Recalling the expressions \eqref{frac-D} and \eqref{frac-kn} of $A,B,D$,
and using Lemma \ref{20121025:lem1},
it is straightforward but lengthy calculation to find solutions:
$$
\begin{array}{lll}
\displaystyle{
F_0=\frac{f_2}a
=\frac1{au'}\Big(\frac{u''}{u'}\Big)^\prime
\,\,,}&
P_0=1
\,\,,&
\tint h_0=\tint0
\,, \\
\displaystyle{
F_1=\frac{f_3}a 
= \frac{u}{au'}\Big(\frac{u''}{u'}\Big)^\prime-\frac{u''}{u'}
\,\,,}&
P_1=u
\,\,,&
\tint h_1=\tint0
\,, \\
\displaystyle{
F_2=\frac{f_4}a 
=\frac{u^2}{au'}\Big(\frac{u''}{u'}\Big)^\prime-2\frac{uu''}{au'}+\frac{2}{a}u'
\,\,,}&
P_2=u^2
\,\,,&
\tint h_2=\tint0
\,, \\
\displaystyle{
F_3=-f_1=-1
\,\,,}&
P_3=u'
\,\,,&
\displaystyle{
\tint h_3=\frac12\int\Big(\frac{u''}{u'}\Big)^2
\,.} 
\end{array}
$$
Hence, we get the following Lenard-Magri scheme
\begin{equation}\label{20121102:eq1}
\begin{array}{l}
\displaystyle{
\tint 0\ass{H}1\ass{K}\tint 0
\ass{H} u\ass{K}\tint 0
\ass{H}u^2\ass{K}\tint 0\ass{H}
} \\
\displaystyle{
\ass{H}u'\ass{K}\frac12\tint\Big(\frac{u''}{u'}\Big)^2
\ass{K}\dots\,.
}
\end{array}
\end{equation}


We next prove that the scheme \eqref{20121102:eq1} can be extended indefinitely.
According to Theorem \ref{th:lmscheme},
this is the case, provided that the orthogonality conditions \eqref{20120621:eq2} hold.
Since $C=1$, the first condition in \eqref{20120621:eq2} is trivial.
As for the second orthogonality condition,
let $\varphi\in\big(\Span_{\mc C}\{P_0,P_1,P_2,P_3\}\big)^\perp$.
Since $\varphi\perp P_0$, we have that $\varphi=\varphi_1^\prime$, for some $\varphi_1\in\mc V$.
Since $\varphi\perp P_1$, we have that $\varphi_1=\frac{\varphi_2^\prime}{u'}$, 
for some $\varphi_2\in\mc V$.
Since $\varphi\perp P_2$, we have that $\varphi_2=\frac{\varphi_3^\prime}{u'}$, 
for some $\varphi_3\in\mc V$.
And, finally,
since $\varphi\perp P_3$, we have that $\varphi_3=\frac{\varphi_4^\prime}{D(u')}$, 
for some $\varphi_4\in\mc V$.
In conclusion, $\varphi=B\varphi_4$, proving the second orthogonality condition \eqref{20120621:eq2}.


We compute explicitly the next element $P_4$ in the Lenard scheme,
which gives the first non-trivial equation of the corresponding bi-Hamiltonian hierarchy.
For this, we need to solve, for $F_4,P_4\in\mc V$, the following equations
$$
BF_{4}=\frac{\delta h_{3}}{\delta u}
=D(u')
\,\,,\,\,\,\,
P_4=AF_{4}
\,.
$$
The general solution is:
$$
F_{4}=
\Big(\frac{u''}{u'}\Big)^\prime-\frac12\Big(\frac{u''}{u'}\Big)^2
+(a-\alpha_1)f_1+\frac{\alpha_2}{a}f_2
+\frac{\alpha_3}{a} f_3
+\frac{\alpha_4}{a} f_4
\,,
$$
where $\alpha_i,\,i=1,\dots,4$, are arbitrary constants.
Hence, the first non-trivial integrable equation in the hierarchy has the form:
\begin{equation}\label{20121102:eq2}
\frac{du}{dt}=P_4=
u'''-\frac32\frac{(u'')^2}{u'}+\alpha_1 u'
+\alpha_2+\alpha_3u+\alpha_4u^2\,.
\end{equation}


In order to prove that equation \eqref{20121102:eq2} is indeed integrable,
we are left to prove that the sequences $\{\tint h_n\}_{n\in\mb Z_+}$
and $\{P_n\}_{n\in\mb Z_+}$ are linearly independent.
For this, we use Proposition \ref{20120910:prop}.

Since $\dord(D(u'))=4$,
we have $\dord(A)=6$, $\dord(B)=7$, $\dord(C)=-\infty$ and $\dord(D)=3$.
Moreover, we clearly have $|H|=-1$ and $|K|=-3$.
Hence, the RHS of inequality \eqref{20120911:eq1} is $4$.

Next, we compute the differential order of the next element $P_5$ in the Lenard-Magri scheme.
It is obtained by solving, for $\xi_4=\frac{\delta h_4}{\delta u}, F_5, P_5\in\mc V$,
the following equations:
\begin{equation}\label{20121102:eq3}
\xi_4=DP_4
\,\,,\,\,\,\,
BF_5=\xi_4
\,\,,\,\,\,\,
P_5=AF_5\,.
\end{equation}
From the first equation in \eqref{20121102:eq3} we get
$$
\xi_4=\partial\frac1{u'}\partial\frac1{u'}\partial(u'''+\rho)
\,,
$$
where $\rho\in\mc V$ has $\dord(\rho)=2$.
Hence, the second equation in \eqref{20121102:eq3} gives
$$
\frac1{D(u')}\partial F_5=u'''+\rho_1\,,
$$
where $\dord(\rho_1-\rho)=0$.
In particular, $F_5$ has differential order less than or equal to $3$.
It follows by the third equation in \eqref{20121102:eq3} that
$$
P_5=
\bigg(
\partial^2-2\frac{u''}{u'}\partial+\Big(\frac{u''}{u'}\Big)^\prime+a
\bigg)(u'''+\rho_1)
-u'F_5\,.
$$
Hence, $\frac{\partial P_5}{\partial u^{(5)}}=1$, and $\frac{\partial P_5}{\partial u^{(n)}}=0$ 
for every $n>5$.
In particular, $\dord(P_5)=5$.

According to Proposition \ref{20120910:prop},
since $\dord(P_5)=5>4$,
we have that
$\dord\big(\frac{\delta h_n}{\delta u}\big)=2n-2$,
and $\dord(P_n)=2n-5$,
for every $n\geq3$.
In particular,
all the elements $\{\tint h_n\}_{n\in\mb Z_+}$
and $\{P_n\}_{n\in\mb Z_+}$ are linearly independent.
As a consequence, every equation of the hierarchy $\frac{du}{dt_n}=P_n,\,n\in\mb Z_+$,
including equation \eqref{20121102:eq2}, is integrable.

Note that, since the kernels of $B^*$ and $D^*$ have non-zero intersections,
we cannot conclude that $[P_m,P_n]$ is zero for every $m,n\in\mb Z_+$
(cf. Theorem \ref{th:lmscheme}),
but we believe this is true.

When all constants $\alpha_i$ are equal to zero,
equation \eqref{20121102:eq2} is called in \cite{Dor93}
the Krichever-Novikov (KN) equation.
As explained in \cite{MS12}, equation \eqref{20121102:eq2} can be reduced to
the KN equation by some point transformation.

\subsection{The case $a=0$}
\label{sec:4.3}

In the case when $a=0$ all the computations are much easier.
Since $A=C=1$, 
the recursive conditions $\tint h_{n-1}\ass{H}P_n\ass{K}\tint h_n,\,n\in\mb Z_+$,
are equivalent to the equations
$$
BP_n=\frac{\delta h_{n-1}}{\delta u}
\,\,,\,\,\,\,
\frac{\delta h_n}{\delta u}=DP_n
\,.
$$
It is easy to find the first few steps of the Lenard-Magri scheme:
\begin{equation}\label{20121102:eq1b}
\tint 0\ass{H}P_0=u'\ass{K}\tint h_0=\frac12\tint\Big(\frac{u''}{u'}\Big)^2
\ass{H}P_1=u'''-\frac{3}{2}\frac{(u'')^2}{u'}+\alpha_1 u'
\ass{K}\dots
\end{equation}
for arbitrary $\alpha_1\in\mc C$.


As before, the scheme \eqref{20121102:eq1b} can be extended indefinitely.
Indeed, since $C=1$, the first orthogonality condition in \eqref{20120621:eq2} is trivial,
while the second one holds since $P_0^\perp=\im B$.


Moreover, in this case $\dord(A)=\dord(C)=-\infty$, $\dord(B)=2$, and $\dord(D)=3$,
so the RHS of inequality \eqref{20120911:eq1} is $0$.
Since $\dord(P_0)=1>0$,
we can apply Proposition \ref{20120910:prop}
to deduce that all the elements $\{\tint h_n\}_{n\in\mb Z_+}$
and $\{P_n\}_{n\in\mb Z_+}$ are linearly independent.

In conclusion, 
every equation of the hierarchy $\frac{du}{dt_n}=P_n,\,n\in\mb Z_+$,
is integrable.
Note that the first non-trivial equation is $\frac{du}{dt}=P_1$,
which is the same as equation \eqref{20121102:eq2} with $\alpha_2=\alpha_3=\alpha_4=0$.

\subsection{One step back}
\label{sec:4.4}

As we did in the example of Liouville type integrable systems,
we can ask whether the Lenard-Magri schemes \eqref{20121102:eq1} and \eqref{20121102:eq1b}
can be continued to the left.

This amounts to finding $P_{n}\in\mc V$ and $\tint h_{n}\in\mc V/\partial\mc V$, with $n\leq-1$,
such that
\begin{equation}\label{20121103:eq1}
\dots\ass{K}\tint h_{-2}\ass{H}P_{-1}\ass{K}\tint0
\end{equation}


We consider separately the cases $a\neq0$ and $a=0$.
When $a\neq0$, the conditions \eqref{20121103:eq1}
give the following equations
for $P_{-1}$, $F$ and $\xi_{-2}=\frac{\delta h_{-2}}{\delta u}$:
\begin{equation}\label{20121103:eq2}
DP_{-1}=0
\,\,,\,\,\,\,
AF=P_{-1}
\,\,,\,\,\,\,
\xi_{-2}=BF\,,
\end{equation}
where $A$, $B$, and $D$ are as in \eqref{frac-D} and \eqref{frac-kn}.
All solutions $P_{-1}$ of the first equation in \eqref{20121103:eq2} are
$$
P_{-1}=c_0+c_1u+c_2u^2\,,
$$
with $c_0,c_1,c_2\in\mc C$.
Next, we want to find all solutions $F$ of the second equation in \eqref{20121103:eq2}.
Applying $\frac{\partial}{\partial u^{(n)}}$, with $n\geq4$, to both sides of the equation $AF=P_{-1}$
we immediately get that $\dord(F)\leq3$ and $\partial F=fD(u')$, with $\dord(f)\leq1$.
Hence, the second equation in \eqref{20121103:eq2} can be rewritten as
the following system of equations,
\begin{equation}\label{20121103:eq3}
\begin{array}{l}
\displaystyle{
\partial^2f-2\frac{u''}{u'}\partial f+\Big(\frac{u''}{u'}\Big)^\prime f+af-u'F
=c_0+c_1u+c_2u^2\,,
} \\
\displaystyle{
\partial F=fD(u')
\,,}
\end{array}
\end{equation}
for $F,f\in\mc V$ with $\dord(F)\leq3$ and $\dord(f)\leq1$.
Applying $\frac{\partial}{\partial u^{(3)}}$ to both sides of the first equation in \eqref{20121103:eq3}
and $\frac{\partial}{\partial u^{(4)}}$ to both sides of the second equation in \eqref{20121103:eq3},
we get
$$
\frac{\partial F}{\partial u'''}=\frac{f}{(u')^2}
\,\,,\,\,\,\,
\frac{\partial f}{\partial u'}=0\,.
$$
Hence, $\dord(f)\leq0$.
Next, 
applying $\frac{\partial}{\partial u^{(2)}}$ to the first equation in \eqref{20121103:eq3}
and $\frac{\partial}{\partial u^{(3)}}$ to the second equation in \eqref{20121103:eq3},
we get
$$
\frac{\partial F}{\partial u''}=-2\frac{u''}{(u')^3}f-\frac{1}{(u')^2}\partial f
\,\,,\,\,\,\,
\partial f=\frac{\partial f}{\partial u}u'\,.
$$
Hence, $f$ is a function of $u$ only.
Using the above result, 
we can rewrite the second equation in \eqref{20121103:eq3},
after integrating by parts twice, as
$$
\partial F=
\partial\Big(
\frac{f}{u'}\Big(\frac{u''}{u'}\Big)^\prime
-\frac{\partial f}{\partial u}\frac{u''}{u'}
+\frac{\partial^2 f}{\partial u^2} u'
\Big)
-\frac{\partial^3 f}{\partial u^3}(u')^2\,.
$$
In particular, it must be $\frac{\partial^3 f}{\partial u^3}(u')^2\in\partial\mc V$,
which is possible only if $\frac{\partial^3 f}{\partial u^3}(u')^2=0$,
see \cite{BDSK09}.
In conclusion, $f$ must be a quadratic polynomial in $u$ with constant coefficients,
and 
$F=\frac{f}{u'}\Big(\frac{u''}{u'}\Big)^\prime
-\frac{\partial f}{\partial u}\frac{u''}{u'}
+\frac{\partial^2 f}{\partial u^2} u'+$ const.
Plugging these results back into equation \eqref{20121103:eq3}
we finally get that
$$
\frac{\partial F}{D(u')}=f=\frac{c_0}a+\frac{c_1}au+\frac{c_2}au^2\,.
$$
Hence, the third equation in \eqref{20121103:eq2} gives $\xi_{-2}=0$.
In conclusion, in this case
the ``dual'' Lenard-Magri sequence, obtained by exchanging the roles of $H$ and $K$,
is of \emph{finite} type,
namely it repeats itself with $\tint h_n\in\ker\big(\frac\delta{\delta u}\big)$
and $P_n\in\ker(D)$ for every $n\leq-1$,
and we don't get any new interesting integrals of motion or equations.


Next, we consider the case $a=0$.
In this case, the conditions \eqref{20121103:eq1}
give the following equations
for $P_{-1}$, $P_{-2}$, and $\tint h_{-2}$:
\begin{equation}\label{20121103:eq4}
DP_{-1}=0
\,\,,\,\,\,\,
DP_{-2}=\frac{\delta h_{-2}}{\delta u}=SP_{-1}\,,
\end{equation}
where $S$, and $D$ are as in \eqref{frac-D} and \eqref{frac-kn1}.
As before, 
$P_{-1}=c_0+c_1u+c_2u^2$, with $c_0,c_1,c_2\in\mc C$.
Hence, the second equation in \eqref{20121103:eq4} reads
\begin{equation}\label{20121103:eq5}
\bigg(\frac1{u'}
\Big(\frac{\partial P_{-2}}{u'}\Big)^\prime\bigg)^\prime
=
\frac1{u'}\Big(
\frac{c_0+c_1u+c_2u^2}{u'}
\Big)^\prime
\,.
\end{equation}
For every $n\in\mb Z_+$, we have the identity
$$
\frac1{u'}\Big(
\frac{u^n}{u'}
\Big)^\prime
=\frac12\partial\frac{u^n}{(u')^2}
+\frac n2\frac{u^{n-1}}{u'}\,.
$$
It follows that the RHS of \eqref{20121103:eq5} cannot be a total derivative
unless $c_1=c_2=0$ (cf. \cite{BDSK09}).
Moreover, if $c_1=c_2=0$ equation \eqref{20121103:eq5} reduces to
$$
\Big(\frac{\partial P_{-2}}{u'}\Big)^\prime
=
\frac{c_0}{2u'}+\text{ const.}u'
\,,
$$
which, for the same reason as before, has no solutions unless $c_0=0$.
In conclusion,
for every non-zero $P_{-1}\in\ker D$,
the ``dual'' Lenard-Magri scheme
is \emph{blocked} at $P_{-2}$.
In this case, as we saw in Section \ref{sec:3.9},
we obtain integrable PDE's which are not of evolutionary type.

In particular, for $(c_2,c_3)\neq(0,0)$ we get the following non-evolutionary
integrable PDE:
\begin{equation}\label{20121103:eq6}
\bigg(\frac1{u_x}
\Big(\frac{u_{tx}}{u_x}\Big)_x\bigg)_x
=
\frac1{u_x}\Big(
\frac{c_0+c_1u+c_2u^2}{u_x}
\Big)_x
\,,
\end{equation}
while for $c_1=1$ and $c_2=c_3=0$, we obtain the following integrable equation
\begin{equation}\label{20121103:eq7}
\Big(\frac{u_{tx}}{u_x}\Big)_x
=
\frac1{2u_x}+\gamma u_x
\,,
\end{equation}
where $\gamma$ is a constant.
Note that, if we apply the differential substitution $v=\log(u')$ to equation \eqref{20121020:eq8} 
with $\epsilon=0$, we get equation \eqref{20121103:eq7}.

\section{NLS type integrable systems}
\label{sec:5}

Recall from \cite[Example 4.10]{DSK12} that the following
is a triple of compatible non-local Poisson structures in two differential variables $u,v$:
$$
L_1=\partial\id
\,\,,\,\,\,\, 
L_2=\left(\begin{array}{cc} 0 & -1 \\ 1 & 0 \end{array}\right)
\,\,,\,\,\,\,
L_3=\left(\begin{array}{cc} 
v\partial^{-1}\circ v & -v\partial^{-1}\circ u \\
-u\partial^{-1}\circ v & u\partial^{-1}\circ u
\end{array}\right)
\,.
$$
We want to consider two non-local Poisson structures $H$ and $K$
which are linear combinations of them:
$H=a_1L_1+a_2L_2+a_3L_3$ and $K=b_1L_1+b_2L_2+b_3L_3$,
where $a_i$'s and $b_i$'s are constants.
As in the previous section, we are only interested in integrable Lenard-magri schemes of S-type.
In particular, we assume
that the order of the pseudodifferential operator $H$
is greater that the order of $K$,
and so we consider only the case when $b_1=0$.
Note that when $b_2=0$, we get $K=b_3L_3$,
which is a degenerate pseudodifferential operator.
In this case, we cannot apply Theorem \ref{th:lmscheme}
and we do not know how to prove integrability.
Hence, we assume that $b_2=1$.
And, since we want that the order of $H$ is greater that the order of $K$,
we assume also that $a_1=1$.

In conclusion, we consider the following compatible pair of non-local Hamiltonian structures:
\begin{equation}\label{20121109:eq1}
\begin{array}{l}
\displaystyle{
H=
\partial\id
+a_2\left(\begin{array}{cc} 0 & -1 \\ 1 & 0 \end{array}\right)
+a_3\left(\begin{array}{cc} 
v\partial^{-1}\circ v & -v\partial^{-1}\circ u \\
-u\partial^{-1}\circ v & u\partial^{-1}\circ u
\end{array}\right)
\,}\\
\displaystyle{
K=
\left(\begin{array}{cc} 0 & -1 \\ 1 & 0 \end{array}\right)
+b_3\left(\begin{array}{cc} 
v\partial^{-1}\circ v & -v\partial^{-1}\circ u \\
-u\partial^{-1}\circ v & u\partial^{-1}\circ u
\end{array}\right)
\,.}
\end{array}
\end{equation}

Note that if $a_3=b_3=0$, the above pair is such that $H$ is ``local'' differential operator,
and $K$ is invertible.
In this case, the Lenard-Magri recursion conditions give
$\tint h_{-1}=0$ and $H\frac{\delta h_{n-1}}{\delta u}=K\frac{\delta h_n}{\delta u}$ for every $n\geq0$.
Namely, $\frac{\delta h_n}{\delta u}=0$ for every $n$.
Therefore, in this case,
the corresponding Lenard-Magri scheme is of finite type, and we don't get any integrable system.
Hence, we assume that $(a_3,b_3)\neq(0,0)$.

Next, we need to find minimal fractional decompositions for $H$ and $K$.
This is given by the following
\begin{lemma}\label{20121108:lem1}
We have the following minimal fractional decomposition for the operator $L_3$:
$$
\left(\begin{array}{cc} 
v\partial^{-1}\circ v & -v\partial^{-1}\circ u \\
-u\partial^{-1}\circ v & u\partial^{-1}\circ u
\end{array}\right)
=
\left(\begin{array}{cc} 
0 & -uv \\
0 & u^2
\end{array}\right)
\left(\begin{array}{cc} 
1 & 0 \\
\frac{v}{u} & \frac1u\partial\circ u
\end{array}\right)^{-1}
$$
The rational matrix pseudodifferential operator $H$ admits the fractional decomposition 
$H=AB^{-1}$ given by
\begin{equation}\label{20121109:eq2}
A=
\left(\begin{array}{cc} 
\partial-a_2\frac{v}{u} & -a_2\frac{1}{u}\partial\circ u-a_3uv \\
\partial\circ\frac{v}{u}+a_2 & \partial\circ\frac1u\partial\circ u+a_3u^2
\end{array}\right)
\,\,,\,\,\,\,
B=
\left(\begin{array}{cc} 
1 & 0 \\
\frac{v}{u} & \frac1u\partial\circ u
\end{array}\right)
\,,
\end{equation}
which is minimal for $a_3\neq0$,
while, for $a_3=0$, $H$ is a matrix differential operator.
The rational matrix pseudodifferential operator $K$ admits the fractional decomposition 
$H=CB^{-1}$ given by
with $B$ as in \eqref{20121109:eq2} and
\begin{equation}\label{20121109:eq3}
C=
\left(\begin{array}{cc} 
-\frac{v}{u} & -\frac{1}{u}\partial\circ u-b_3uv \\
1 & b_3u^2
\end{array}\right)
\,.
\end{equation}
This decomposition is minimal for $b_3\neq0$,
while, for $b_3=0$, $K=L_2$ is an invertible matrix.
\end{lemma}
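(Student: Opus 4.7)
The plan is first to establish the fractional decomposition for $L_3$ by direct verification, then derive the decompositions of $H$ and $K$ by the observation that the denominator $B$ in \eqref{20121109:eq2} is chosen precisely so that it clears the only non-local pieces, and finally check minimality via Theorem \ref{20120720:thm1}.

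First I compute $B^{-1}$. Since $B$ is lower triangular with $1$ in the $(1,1)$ entry, we have
\begin{equation*}
B^{-1}=\begin{pmatrix}1 & 0\\ -\frac1u\partial^{-1}\circ v & \frac1u\partial^{-1}\circ u\end{pmatrix}.
\end{equation*}
A direct multiplication then gives
\begin{equation*}
\begin{pmatrix}0 & -uv\\ 0 & u^2\end{pmatrix}B^{-1}
=\begin{pmatrix}v\partial^{-1}\circ v & -v\partial^{-1}\circ u\\ -u\partial^{-1}\circ v & u\partial^{-1}\circ u\end{pmatrix}=L_3.
\end{equation*}
Next, writing $H=(\partial\id+a_2L_2)BB^{-1}+a_3L_3$ and expanding, I expect to recover exactly the matrix $A$ of \eqref{20121109:eq2} as the numerator $(\partial\id+a_2L_2)B+a_3\begin{pmatrix}0 & -uv\\0 & u^2\end{pmatrix}$. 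The same strategy applied to $K=L_2BB^{-1}+b_3L_3$ gives the matrix $C$ of \eqref{20121109:eq3}.

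For minimality, by Theorem \ref{20120720:thm1} it suffices to check that $\ker A\cap\ker B=0$ (resp.\ $\ker C\cap\ker B=0$) in any differential field extension. Solving $B\binom{f}{g}=0$ yields $f=0$ and $\frac1u\partial(ug)=0$, so $g=c/u$ with $c\in\mc C$. Plugging $\binom{0}{c/u}$ into $A$, the differential pieces $\partial\circ\frac1u\partial\circ u$ and $-a_2\frac1u\partial\circ u$ annihilate this vector, leaving
\begin{equation*}
A\begin{pmatrix}0\\c/u\end{pmatrix}=a_3c\begin{pmatrix}-v\\ u\end{pmatrix},
\end{equation*}
which vanishes iff $a_3c=0$. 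Hence for $a_3\neq0$ we get $c=0$ and the decomposition $H=AB^{-1}$ is minimal. An identical computation with $C$ in place of $A$ replaces $a_3$ by $b_3$, yielding minimality of $K=CB^{-1}$ when $b_3\neq0$. The remaining degenerate cases $a_3=0$ (where $H$ is a matrix differential operator) and $b_3=0$ (where $K=L_2$ is an invertible constant matrix) require no argument.

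The only real obstacle is bookkeeping: one must carefully expand $(\partial\id+a_2L_2)B$ using the Leibniz rule $\partial\circ f=f\partial+f'$ and the commutation formula \eqref{20121025:eq2} to match the entries of $A$ and $C$ as written. Otherwise the argument is a routine verification once the correct denominator $B$ has been guessed from the known minimal decomposition of $L_3$.
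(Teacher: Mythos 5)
Your proposal is correct and is exactly the routine verification the paper has in mind -- its own proof of this lemma is just the word ``Straightforward.'' The computation of $B^{-1}$, the identification of the numerators as $(\partial\id+a_2L_2)B+a_3\left(\begin{smallmatrix}0&-uv\\0&u^2\end{smallmatrix}\right)$ (resp.\ with $L_2$, $b_3$), and the minimality check via $\ker B=\mc C\left(\begin{smallmatrix}0\\1/u\end{smallmatrix}\right)$ and Theorem \ref{20120720:thm1} all check out.
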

\begin{proof}
Straightforward.
\end{proof}

As usual, in order to apply the Lenard-Magri scheme it is convenient to find the kernels 
of the operators $B$ and $C$:
$$
\ker B=\mc C \left(\begin{array}{c} 0 \\ \frac1u \end{array}\right)
\,\,,\,\,\,\,
\ker C=\mc C \left(\begin{array}{c} -b_3u \\ \frac1u \end{array}\right)\,.
$$

We next compute the first few steps in the Lenard-Magri scheme.
We have the following $H$ and $K$-associations:
$\tint 0\ass{H}P_0\ass{K}\tint h_0\ass{H}P_1\ass{K}\tint h_1\ass{H}P_2$,
where
\begin{equation}\label{20121109:eq4}
\begin{array}{l}
\displaystyle{
P_0=\alpha a_3\left(\begin{array}{c} -v \\ u \end{array}\right)
\,\,,\,\,\,\,
\tint h_0=\frac{1}{2}\tint (u^2+v^2)
\,,} \\
\displaystyle{
P_1= \left(\begin{array}{c} u'-a_2v \\ v'+a_2u \end{array}\right)
\,\,,\,\,\,\,
\tint h_1=
\int\Big(
uv'+\frac{a_2}{2}(u^2+v^2)+\frac{b_3}{8}(u^2+v^2)^2
\Big)
\,,} \\
\displaystyle{
P_2=
\left(\begin{array}{c} 
v''+2a_2u'-a_2^2v+\frac{b_3}{2}\big(u(u^2+v^2)\big)^\prime +\frac{a_3-a_2b_3}{2}v(u^2+v^2) \\ 
-u''+2a_2v'+a_2^2u+\frac{b_3}{2}\big(v(u^2+v^2)\big)^\prime -\frac{a_3-a_2b_3}{2}u(u^2+v^2)
\end{array}\right)
\,.}
\end{array}
\end{equation}
Indeed, we have
$P_0=AF_0$, $BF_0=0$, for $F=\alpha\left(\begin{array}{c} 0 \\ \frac1u \end{array}\right)$.
We have
$P_0=CF_1$, $\frac{\delta h_0}{\delta u}=BF_1$, 
for $F_1=\left(\begin{array}{c} u \\ \frac{\beta}u \end{array}\right)$,
where $\alpha,\beta\in\mc C$ are chosen so that $\alpha a_3-\beta b_3=1$
(we can always do so, since, by assumption, $(a_3,b_3)\neq(0,0)$).
We have
$P_1=AF_2$, $\frac{\delta h_0}{\delta u}=BF_2$, 
for $F_2=\left(\begin{array}{c} u \\ 0 \end{array}\right)$.
We have
$P_1=CF_3$, $\frac{\delta h_1}{\delta u}=BF_3$, 
for $F_3=\left(\begin{array}{c} 
v'+a_2u+\frac{b_3}{2}u(u^2+v^2) \\ 
-\frac12\frac{u^2+v^2}{u} 
\end{array}\right)$.
And, finally, we have
$P_2=AF_4$, $\frac{\delta h_1}{\delta u}=BF_4$, 
for $F_4=F_3$.

Next, we check that the orthogonality conditions \eqref{20120621:eq2} hold for $N=0$.
We have
$F=\left(\begin{array}{c} f \\ g \end{array}\right)\in\frac{\delta h_0}{\delta u}^\perp$
if and only if $\tint (uf+vg)=0$,
namely if $f=-\frac{v}{u} g+\frac{h'}{u}$, for some $h\in\mc V$.
But in this case
$$
F=\left(\begin{array}{c} -\frac{v}{u} g+\frac{h'}{u} \\ g \end{array}\right)
=C\left(\begin{array}{c} g+b_3uh \\ -\frac{h}{u} \end{array}\right)\in\im C\,,
$$
proving the first orthogonality condition \eqref{20120621:eq2}.
As for the first orthogonality condition, if $a_3=0$ there is nothing to prove since $H$ is a matrix 
differential operator (i.e. the denominator is $\id$ in its minimal fractional decomposition).
If $a_3\neq0$, we can choose $\alpha=\frac1{a_3}$, and we have
$F=\left(\begin{array}{c} f \\ g \end{array}\right)\in P_0^\perp$
if and only if $\tint (-vf+ug)=0$,
namely if $g=\frac{v}{u} f+\frac{h'}{u}$, for some $h\in\mc V$.
But in this case
$$
F=\left(\begin{array}{c} f \\ \frac{v}{u} f+\frac{h'}{u} \end{array}\right)
=B\left(\begin{array}{c} f \\ \frac{h}{u} \end{array}\right)\in\im B\,,
$$
proving the second orthogonality condition \eqref{20120621:eq2}.
Therefore, by Theorem \ref{th:lmscheme},
we deduce that the elements \eqref{20121109:eq4}
can be extended to infinite sequences $\{\tint h_n\}_{n\in\mb Z_+}$, $\{P_n\}_{n\in\mb Z_+}$,
such that 
$\tint h_{n-1}\ass{H}P_n\ass{K}\tint h_n$.

Finally, 
we have $|H|=1$, $|K|=0$, $\dord(A)=2$, $\dord(B)=\dord(C)=\dord(D)=1$, and $\dord(P_2)=2$.
Hence, the inequality \eqref{20120911:eq1} holds.
Therefore, by Proposition \eqref{20120910:prop}
we have $\dord(P_n)=\dord(\frac{\delta h_n}{\delta u})=n$ for every $n\in\mb Z_+$.
In particular, all the elements $\tint h_n$'s and $P_n$'s are linearly independent.

In conclusion, each equation of the hierarchy $\frac{du}{dt_n}=P_n$
is integrable, and the local functionals $\tint h_n$'s are their integrals of motion.
The first ``non-trivial'' equation of this hierarchy is for $n=2$.
Letting $a_2=0$, $a_3=2\alpha$, and $b_3=2\beta$, it has the form
\begin{equation}\label{20121109:eq5}
\left\{\begin{array}{l}
\displaystyle{
\frac{du}{dt}=
v'' +\alpha v(u^2+v^2) +\beta \big(u(u^2+v^2)\big)^\prime
} \\ 
\displaystyle{
\frac{dv}{dt}=
-u'' -\alpha u(u^2+v^2) +\beta \big(v(u^2+v^2)\big)^\prime
}
\end{array}\right.
\end{equation}
If we view $u$ and $v$ as real valued functions, and we consider the complex valued function
$\psi=u+iv$, the system \eqref{20121109:eq5} can be written as the following PDE:
$$
i\frac{d\psi}{dt}=\psi''+\alpha\psi|\psi|^2+i\beta(\psi|\psi|^2)^\prime\,,
$$
which, for $\beta=0$, is the well-known Non-Linear Schroedinger equation
(see e.g. \cite{TF86,Dor93,BDSK09,DSK12}).

It is not difficult to show that when going back
the Lenard-Magri scheme is ``blocked'' at $\tint h_{-2}$ when $a_3=0$,
and it is of finite type when $a_3\neq0$.
Hence, we don't get any non-evolutionary PDE in this case.



\end{document}